\def\Ddots{\mathinner{\mkern1mu\raise\p@
\vbox{\kern7\p@\hbox{.}}\mkern2mu
\raise4\p@\hbox{.}\mkern2mu\raise7\p@\hbox{.}\mkern1mu}}
\newcommand\smallO{
  \mathchoice
    {{\scriptstyle\mathcal{O}}}
    {{\scriptstyle\mathcal{O}}}
    {{\scriptscriptstyle\mathcal{O}}}
    {\scalebox{.7}{$\scriptscriptstyle\mathcal{O}$}}
  }
\def\?[#1]{\textbf{[#1]}\marginpar{\Large{\textbf{??}}}}
\def\smallsection#1{\smallskip\noindent\textbf{#1}.}
\let\epsilon=\varepsilon 
\newcommand{\RR}{{\mathbb R}}
\newcommand{\ZZ}{{\mathbb Z}}
\newcommand{\CC}{{\mathbb C}}
\newcommand{\R}{\mathbb{R}}
\newcommand{\C}{\mathbb{C}}
\newtheorem{theo}{Theorem}
\newtheorem{prop}{Proposition}[section]	
\newtheorem{Assumption}{Assumption}
\newtheorem{lemm}[prop]{Lemma}
\newtheorem{rem}{Remark}
\newtheorem{ex}{Example}
\numberwithin{equation}{section}
\DeclareMathOperator{\Spec}{Spec}
\let\Im=\Imag
\let\Re=\Real
\DeclareMathOperator{\tr}{tr}
\def\indic{\operatorname{1\hskip-2.75pt\relax l}}
\title[Spectral gap for networks of oscillators]{Spectral analysis and phase transitions for long-range interactions in harmonic chains of oscillators }
\author{Simon Becker}
\address{Department of Mathematics, R\"amistrasse 101, 8092 Z\"urich, Switzerland}
\email{simon.becker@math.ethz.ch}
\author{Angeliki Menegaki}
\address{Institut des hautes études Scientifiques, 35 Rte de Chartres, 91440, Bures-sur-Yvette, France }
\email{menegaki@ihes.fr}
\author{Jiming Yu}
\address{University of Chicago, 5801 S Ellis Ave, Chicago, IL 60637, USA}
\email{tommenix@uchicago.edu}
\begin{document}
\maketitle
\begin{abstract}
We consider chains of $N$ harmonic oscillators in two dimensions coupled to two Langevin heat reservoirs at different temperatures - a classical model for heat conduction introduced by Lebowitz, Lieb, and Rieder \cite{RLL67}.
We extend our previous results \cite{BM20} significantly by providing a full spectral description of the full Fokker-Planck operator allowing also for the presence of a constant external magnetic field for charged oscillators. We then study oscillator chains with additional next-to-nearest-neighbor interactions and find that the spectral gap undergoes a phase transition if the next-to-nearest-neighbour interactions are sufficiently strong and may even cease to exist for oscillator chains of finite length.
\end{abstract}
\tableofcontents

\section{Introduction}
The chain of oscillators is a model that has been first introduced for the rigorous derivation of Fourier's law or to obtain a mathematically rigorous proof of its breakdown. This history of developments has been described in quite a few overview articles on the subject \cite{BLR00,Lep16, Dhar08,BF19}. This article focuses on the case of harmonic interactions which has been studied first in \cite{RLL67}, where, by solving several Lyapunov equations, the unique invariant state was constructed and the breakdown of Fourier's law was derived. Aside from the harmonic setting, there exist many results for anharmonic potentials both on the existence of steady states \cite{EPR99a, EPR99b, EH00} as well as on the convergence to equilibrium \cite{RBT02, Car07}.

In most works, the $N$-dependence of the convergence to equilibrium has not been studied and we are only aware of an approach based on hypocoercivity, c.f. \cite[Section 9.2]{Villani09}. The approaches discussed by Villani however only led to far-from optimal estimates on the convergence to equilibrium with respect to the number of oscillators. More recently, a weak perturbation of harmonic oscillator chains was studied in \cite{Me20} where better estimates were obtained on the convergence rate by a hypocoercivity-inspired machinery.  

In our previous work \cite{BM20}, we started the study of sharp spectral gaps in terms of $N$, that is providing with the optimal exponential factor in the convergence rate to equilibrium for the chain of oscillators.

In this article, we study the full $L^2$-spectrum of the Fokker-Planck operator of the chain of harmonic oscillators. We significantly extend our previous work \cite{BM20}, where we focused on the $L^2$-gap in various regimes and for different configurations of the harmonic oscillator networks connected to heat baths. 
First, we provide a precise description of the full spectrum of the Fokker-Planck operator associated with oscillator chains connected to heat baths. In particular, we allow for charged oscillators inside a constant magnetic field. The generalization to such charged oscillators inside a constant magnetic field is motivated by several recent works in which the conductivity of harmonic chains connected to heat baths inside magnetic fields, breaking momentum conservation, has been studied \cite{BCBD21a,BCBD21b} motivated by studies \cite{TSS17,SS18} of heat transport of weakly charged atom configurations in strong external magnetic fields. In such configurations the Lorentz force dominates over the lattice oscillations. In \cite{TS18} the Nernst effect, which is usually non-existent in standard metals, but common in semiconductors, in a flexible (unpinned in the bulk) and nonlinear chain is studied, where the average positions of particles deviate in the perpendicular direction to the heat flow. 

Our mathematical results on the full spectrum are even new in the setting without magnetic fields.  Only in \cite{EckmannHairer03}, for highly degenerate H\"{o}rmander type of Fokker Planck operators the full spectrum has been studied in a general setting and applied to anharmonic oscillators chains, showing that it lies in a cusp. Here we quantify this result in terms of the number of oscillators $N$.

Our second main contribution is the study of interactions between oscillators that are not limited to nearest neighbour interactions. Perhaps surprisingly, this leads to phase transitions in the behaviour of the spectral gap. Here we use the term phase transition to describe that the spectral gap abruptly changes and even sometimes ceases to exist as a function of $N$, depending on the regime of the next-to-nearest-neighbour interaction's strength.

This opens many interesting questions, such as how the spectral gap behaves when one considers really long range interactions and not just next to nearest neighbour ones, as well as how does this affect the conductivity. In particular, in contrast to the hunt for anharmonic potentials one may also consider if long-range interactions for the harmonic chain could prohibit the linear growth in $N$ of the conductivity leading to the breakdown of Fourier's law. Recent results on the hydrodynamic limit for such chains with exponentially decaying interactions perturbed by a random exchange of momentum are in \cite{KomOlla17} where the energy current follows macroscopically a diffusion equation and also in \cite{Suda22} with polynomially decaying interactions where one sees superballistic behaviour.

\subsection{Chain of oscillators with constant magnetic field}
We consider labelled oscillators on the sites of a linear chain $[N]:=\{1,\dots, N\}$ confined by a quadratic pinning potential and interacting with their nearest neighbours by a quadratic interaction potential. We assume that each oscillator has mass $m>0$ and a charge $e\in \RR$ which we shall just normalize to one\footnote{Everything is then fully determined by the magnetic field strength, only.}. We denote by $\textbf{m}_{[N]}:= m I_{\mathbb C^n}$ the mass matrix, where we assume the masses of the oscillators to coincide. According to classical mechanics, the dynamics of each oscillator in phase space is described fully by position variables $q_i \in \R^2$ and momentum variables $p_i \in \R^2$. 
In addition, we also allow for the presence of a magnetic field perpendicular to the plane of the network
\[ B = (\partial_{x_1} A_2 - \partial_{x_2} A_1) \ dx_1 \wedge dx_2
\] 
where $A: \R^2 \to \R^2$ is the electromagnetic vector potential. For our analysis, we will focus on constant magnetic fields $B_0$ which are obtained by choosing \[ A(x_1,x_2) := \frac{B_0}{2} (-x_2, x_1).\]

The energy of the system is then given by the Hamilton function
\begin{equation}
\begin{split}
\label{eq:potentials}
H(\textbf{q},\textbf{p}) &= \frac{\langle \textbf{p} - eA(\textbf{q}),\textbf{m}_{[N]}^{-1} (\textbf{p} -  eA(\textbf{q}))  \rangle}{2}+ V_{\eta,\xi}(\textbf{q}) \text{ where } \\
V_{{\bf \eta,\xi}}(\textbf{q}) &=  \frac{1}{2}\sum_{i=1}^N \eta_{i} \vert q_i \vert^2 + \frac{1}{2}\sum_{i =1}^{N-1}\xi_{i, i+1}  \vert q_i-q_{i+1} \vert^2
\end{split}
\end{equation}   
where $e$ is the charge of the particle. We shall assume in this article that all $\eta_i>0$ and $\xi_{ii+1}>0$ coincide, respectively. For studies of disordered or localized impurities, see \cite{BM20}.

To model a heat flow through the system, we couple the linear chain to two heat reservoirs at different temperatures at the terminal ends of the chain. 
Here the reservoirs at the terminal ends are assumed to contain Gaussian noise such that the dynamics becomes an Ornstein--Uhlenbeck process. This means that the particles at the boundaries are subject to reservoirs at different temperatures $T_i = \beta_i^{-1}$, $i \in \{1, N\}$ as well as to friction $\gamma_i>0$. 

The time evolution for particles $i \in [N]$ is then described by the following system of SDEs:
\begin{equation}
\begin{split}
\label{eq:SDE}
dq_i(t)&= \nabla_{p_i} H \ dt \text{ and }\\
dp_i(t)&= \left(-\nabla_{q_i} H- \gamma_i p_i \delta_{i \in F} \right) \ dt+ \delta_{ i \in F} \sqrt{2m \gamma_i \beta_i^{-1}}\ dW_i 
\end{split}
\end{equation}
where  $W_i$ with $i \in F$ are iid Wiener processes, $\gamma_i> 0$ a friction parameter, and $F \subset  \{1,N\}  $, with $F \neq \emptyset$, the set of the particles subject to friction. 

The generator of the associated strongly continuous semigroup is the Fokker-Planck operator
\begin{equation}
\label{eq:L}
 \mathcal{L} f(z)  = - \langle z,  M_{[N]}   \nabla_z f(z) \rangle + \langle \nabla_p ,  \Gamma \textbf{m}_{[N]}  \vartheta \nabla_p  f(z) \rangle
 \end{equation}
where $M_{[N]}  \in \mathbb C^{4N \times 4N}$ is the matrix containing the first-order coefficients of the above generator and $z = (p,q)$. In particular, with $\Gamma \in \mathbb R^{N \times N}$ the matrix containing the friction parameters, the parameter matrix takes the form
\begin{equation}
\begin{split}
\label{eq:MN}
M_{[N]}  &:= \left(\begin{matrix} \Gamma \otimes I_{\CC^2}  + \textbf{m}_{[N]}^{-1} JB_0 & -\textbf{m}_{[N]}^{-1} \otimes I_{\C^2} \\ B_{[N]} \otimes I_{\C^2}    &  -\textbf{m}_{[N]}^{-1} JB_0  \end{matrix}\right) \text{ and }\Gamma = \text{diag}(\gamma_{1} \delta_{1 \in F},0, \dots,0, \gamma_{N}  \delta_{N \in F}).
\end{split}
\end{equation}
The matrix 
\begin{equation}
\label{eq:omega}
    J := \operatorname{diag}(\Omega, \dots, \Omega)\text{ where }\Omega = \left( \begin{matrix} 0& -1 \\ 1 &0 
\end{matrix} \right),
\end{equation} 
is trace-free. 
The temperature matrix $\vartheta$ is of the form $$\vartheta = \text{diag}(\beta_{1}^{-1}\delta_{1 \in F},\dots,\beta_{N}^{-1}\delta_{N\in F}).$$

Finally, the forces are described by the matrix $B_{[N]}$. To explicitly state its form, we define for $i,j \in [N]$ self-adjoint operators $\langle u, L_{i,j}u \rangle_{\ell^2([N]; \CC)} := \vert u(i)-u(j) \vert^2$ that decompose the negative weighted Neumann Laplacian on $\ell^2([N]; \CC)$ as $$-\Delta_{[N]} = \sum_{i=1}^{N-1} \xi_{i,i+1}L_{i, i+1} \text{ with } \xi_{i,i+1} \text{ as in }\eqref{eq:potentials}.$$
Thus, the discrete Neumann Laplacian describes the nearest neighbor interaction. We then write the matrix $B_{[N]} \in \RR^{N \times N}$ appearing in $M_{[N]}$ in terms of a Schr\"odinger operator
\begin{equation}
\label{eq:Schroe}
B_{[N]} =-\Delta_{[N]} + \sum_{i=1}^N \eta_i \delta_i + \frac{B_0^2}{2 m} I
\end{equation}
where $(\delta_i(u))(j)=u(i)\delta_{ij}.$
In other words the operator $B_{[N]}$ is just a Jacobi (tridiagonal) matrix 
\[(B_{[N]} f)_n = -\xi_{n,n+1} f_{n+1} - \xi_{n-1, n} f_{n-1} +\Bigg(\eta_n+\frac{B_0^2}{2 m}+\xi_{n,n+1}+ \xi_{n-1,n}\Bigg)f_n \text{ for } n\in [N]\] 
with the convention that $\xi_{0,1} = \xi_{N,N+1} =0.$

\subsection{One-dimensional chain of oscillator with next-to-nearest-neighbor interactions} In our second part, we consider the one-dimensional chain as defined above, for simplicity without a constant magnetic field, but allow for next to nearest neighbor interactions of strength $\omega>0.$ That means that we consider quadratic interactions given by the potential 
\begin{equation}
\label{eq:NNN}
V(q) = \frac{1}{2}\sum_{i=1}^{N-1} (q_i-q_{i+1})^2 + \frac{1}{2}\omega \sum_{i=1}^{N-2}(q_i-q_{i+2})^2.
\end{equation}

The generator describing the evolution of the dynamics is given by $\mathcal{L}$ as in \eqref{eq:L} with $\Gamma$ and $\vartheta$ as in the previous subsection.  The matrix
 $M_{[N]}  \in \mathbb C^{2N \times 2N}$ is again of the form, but of half the matrix size since the oscillators are now just assumed to have one degree of freedom. 
\begin{equation} \label{eq: matrix M}
\begin{split}
M_{[N]}  &:= \left(\begin{matrix} \Gamma & -\textbf{m}_{[N]}^{-1} \\ B_{[N]}  & 0 \end{matrix}\right) 
\end{split}
\end{equation}
where $B_{[N]} $ can be expressed, up to a low-rank perturbation, now as a quadratic function of a Schr\"odinger operator  $-\Delta_{[N]} + \sum_{i \in [N]} \chi_i \delta_i$ for suitable $\chi_i$. This will be specified in later subsections, as the specific form of this term depends on the choice of boundary conditions.

\subsection{Main results} 

We consider a chain of $N$ oscillators with two-dimensional phase space variables in a constant magnetic field of strength $B_0 \in \mathbb R$ with a potential as in \eqref{eq:potentials}.  The spectrum of the Fokker-Planck operator generating the dynamics satisfies then in terms of eigenvalues $\mu_i^{\pm}$ with $\mu_i^+ = -\mu_i^-$ and associated eigenvectors $V_i^{\pm}$, for $i \in [N]$ of the matrix $Q_{[N]}  := \left(\begin{matrix} 0 & -\textbf{m}_{[N]}^{-1} \\ B_{[N]}  & 0 \end{matrix}\right) \in \CC^{2N \times 2N}$ (with zero friction!).

Before we state our main results, let us specify the two boundary conditions we include in this work. We consider the model given by \eqref{eq:SDE} with either \begin{itemize}
    \item[(i)]The discrete Dirichlet Laplacian is defined for $\xi_{i,i+1} >0$ defined by $$ -\Delta_{[N]}^D  = \sum_{i=0}^{N} \xi_{i,i+1}L_{i,i+1}$$ where $L_{i,i+1}$ are defined through the quadratic form $\langle u , L_{i,i+1}u\rangle = |u(i) - u(i+1)|^2$ with the convention that $u(0) = u(N+1)=0$. 
    \item[(ii)] Or with Neumann boundary conditions (free boundaries) by setting
    $$ -\Delta_{[N]}^N  = \sum_{i=1}^{N-1}\xi_{i,i+1} L_{i,i+1}.$$
\end{itemize}
We denote the invariant state $\mu$ of our dynamics with density $d\mu(p,q)=f_{\infty}(p,q)dp dq$, also as the non-equilibrium steady state due to the presence of non-zero fluxes. It is a probability measure on the phase space so that when $(S_t)_{t\geq 0}$ is the associated Markov semigroup:  $$\int_{\mathbb{R}^{2N}} (S_t h)(p,q) f_{\infty}(p,q) dp \ dq= \int_{\mathbb{R}^{2N}} h(p,q)f_{\infty}(p,q) dp \ dq, \text{ for all } t\geq 0,\ h \in C_b(\mathbb{R}^{2N}).$$
Finally the spectral gap of the operator $\mathcal{L}$ in \eqref{eq:L} is given by the formula 
\begin{align} \label{def:spectral gap}
g(N):=\inf\{\Re(\lambda): \Re(\lambda)>0 \text{ and } (\lambda - \mathcal{L}) \text{ is invertible with bounded inverse} \}.
\end{align}
The definition of the spectral gap has a dynamical interpretation in terms of the optimal rate of convergence to the invariant state: 

The spectral gap $g(N)$ is the smallest constant for which there is a constant $C(N) >1$ such that 
\[ \Vert S_t h \Vert_{L^2(\mathbb R^{N}; f_{\infty}(x) dx )} \le C(N) e^{-tg(N)} \Vert h \Vert_{L^2(\mathbb R^{N}; f_{\infty}(x) dx )} \text{ for }h \perp 1\text{ and all } t>0.\]

\begin{theo}[Full spectrum \& $B_0$] \label{theo:mag}
Let the friction $\gamma>0$ be sufficiently small, with $F=\{1\},$ consider Dirichlet or Neumann boundary conditions, $B_0 \in \mathbb R$ arbitrary, with at least one of $B_0$ or $\eta$ non-zero and $N$ large enough. Then, there exist $4N$ numbers $\lambda_i^{\pm}(j)$, with two of them in annuli 
\begin{equation}
\label{eq:eigenvalues}
\lambda_i^{\pm} \in B_{\CC}(\mu_i^{\pm},c\vert V_i^{\pm}(1)\vert^2) \backslash B_{\CC}(\mu_i^{\pm}, \varepsilon(\gamma)\vert V_i^{\pm}(1)\vert^2 )
\end{equation} and real parts satisfying 
\begin{equation}
\label{eq:real_part}
\varepsilon(\gamma,B_0)\vert V_i^{\pm}(1)\vert^2 \lesssim \Re(\lambda_i^{\pm}(j)) \lesssim \vert V_i^{\pm}(1)\vert^2
\end{equation}for some suitable $c>0$ and $\varepsilon(\gamma)>0$ small enough but independent of $N$.  The spectrum of the Fokker-Planck operator $\mathcal L$ on $L^2(\mathbb R^{2N}; f_\infty(x) dx)$, where $f_\infty$ is the invariant state, is given by
\[ \Spec_{L^2(\mathbb R^{2N}, f_\infty(x) dx)}(A) = \sum_{j \in \ZZ_2} \sum_{\pm}\sum_{i=1}^{2N} \lambda_i^{\pm}(j) \cdot \mathbb N_0\]
where the addition of sets is defined element-wise.

For arbitrary $\gamma>0$ the spectral gap always decays like $1/N^3.$
\end{theo}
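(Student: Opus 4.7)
The plan is to reduce the $L^2$-spectral analysis of $\mathcal{L}$ to the eigenvalue analysis of the $4N\times 4N$ drift matrix $M_{[N]}$, and then locate the eigenvalues of $M_{[N]}$ by perturbation theory in $\gamma$. The identification of $\Spec_{L^2(f_\infty dx)}(\mathcal{L})$ with the $\mathbb{N}_0$-semigroup generated by the eigenvalues of $M_{[N]}$ is a classical feature of quadratic hypoelliptic Fokker-Planck operators: once $M_{[N]}$ is Hurwitz stable, conjugation by $\sqrt{f_\infty}$ puts $\mathcal{L}$ in a normal form built out of creation/annihilation ladders (one pair per eigenvalue of $M_{[N]}$) whose spectrum on $L^2$ is manifestly $\sum_k n_k \lambda_k$ with $n_k \in \mathbb{N}_0$. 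The required Hurwitz stability of $M_{[N]}$ itself follows from a Kalman-type controllability argument for the pair $(Q_{[N]}, R)$, which is standard for Langevin chains with a single terminal particle subject to friction ($F=\{1\}$).

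To locate the $4N$ eigenvalues $\lambda_i^{\pm}(j)$ of $M_{[N]}$, I would split $M_{[N]}(\gamma) = Q_{[N]} + \gamma R$ with $R$ supported only on the first momentum component. The unperturbed matrix $Q_{[N]}$ is Hamiltonian, with purely imaginary eigenvalues $\mu_i^{\pm} = \pm i\sqrt{\mu_i/m}$, where $\mu_i$ runs over the spectrum of the Jacobi Schr\"odinger matrix $B_{[N]}$; the extra index $j \in \mathbb{Z}_2$ in \eqref{eq:eigenvalues} records the doubling produced by the magnetic block $\mathbf{m}_{[N]}^{-1} J B_0$ which couples the two spatial directions. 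Kato-Rellich analytic perturbation theory then gives, for small $\gamma$,
\[
\lambda_i^{\pm}(j) - \mu_i^{\pm} \;=\; \gamma \, \frac{\langle V_i^{\pm}, R V_i^{\pm}\rangle}{\langle V_i^{\pm}, V_i^{\pm}\rangle} + O(\gamma^2) \;\asymp\; \gamma |V_i^{\pm}(1)|^2,
\]
because $R$ touches only the first momentum component. This yields the outer annulus bound $c|V_i^{\pm}(1)|^2$ in \eqref{eq:eigenvalues} and the lower bound on $\Re \lambda_i^{\pm}$ in \eqref{eq:real_part}. The matching inner radius $\varepsilon(\gamma)|V_i^{\pm}(1)|^2$ and the upper bound on $\Re \lambda_i^{\pm}$ are obtained by a Grushin/Schur-complement analysis of the resolvent $(z - M_{[N]})^{-1}$ around each $\mu_i^{\pm}$. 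The assumption $\eta \neq 0$ or $B_0 \neq 0$ is used to rule out nodes of $V_i^{\pm}$ at site $1$, so that the overlap $|V_i^{\pm}(1)|^2$ is quantitatively positive.

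The $1/N^3$-decay of the spectral gap at arbitrary $\gamma>0$ then follows by reading off the slowest mode. For Dirichlet boundary conditions the lowest eigenvector of $B_{[N]}$ is $u_1(n) \propto \sin(\pi n/(N+1))$, so $|V_1^{\pm}(1)|^2 \asymp |u_1(1)|^2 \asymp N^{-3}$, and \eqref{eq:real_part} gives $\Re \lambda_1^{\pm} \asymp N^{-3}$. The Neumann case is analogous once the constant mode is lifted by the pinning $\eta$ (or by the magnetic shift $B_0^2/(2m)$), so that the slowest surviving mode again carries a sine-like profile at the friction site. By the $\mathbb{N}_0$-semigroup structure already established, this smallest $\Re \lambda_i^{\pm}$ is exactly $g(N)$. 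The restriction to small $\gamma$ coming from Kato-Rellich is then removed either by continuity in $\gamma$ combined with Hurwitz stability of $M_{[N]}$ for all $\gamma \geq 0$, or by importing the Lyapunov bounds of \cite{BM20}.

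The main obstacle I expect is the sharp \emph{lower} bound on $|\lambda_i^{\pm}-\mu_i^{\pm}|$, uniformly in $i$ and $N$. Bauer-Fike-type estimates easily give the outer annulus radius $c|V_i^{\pm}(1)|^2$; the inner radius $\varepsilon(\gamma)|V_i^{\pm}(1)|^2$ requires a quantitative non-degeneracy of $V_i^{\pm}(1)$ across all modes, which must be proved by exploiting the Jacobi structure of $B_{[N]}$ and discrete Sturm oscillation. In the magnetic case with $B_0 \neq 0$, pairs of nearly-degenerate unperturbed eigenvalues may occur, and these must be handled by a block-Grushin argument rather than by naive first-order perturbation.
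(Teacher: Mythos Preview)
Your overall architecture---reduce $\Spec(\mathcal L)$ to $\Spec(M_{[N]})$ via the Metafune--Pallara--Priola description, then locate the eigenvalues of $M_{[N]}$ perturbatively near the frictionless spectrum $\{\mu_i^{\pm}\}$---is exactly the paper's. The packaging, however, is different, and in one place your proposal has a genuine gap.

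\textbf{Different packaging.} The paper does not use Kato--Rellich. Instead it applies a Sylvester/Schur reduction (``Wigner reduction''): because the friction is rank one, $\lambda\in\Spec(M_{[N]})$ if and only if the scalar function $W_F(\lambda)=\gamma\sum_{\pm,j}(\lambda-\mu_j^{\pm})^{-1}|V_j^{\pm}(1)|^2$ equals $-i$. Localising at $\mu_i^s$ and rewriting $\lambda(W_F+i)=f+g$, Rouch\'e on circles $|\lambda|=c|V_i^s(1)|^2$ and $|\lambda|=\varepsilon(\gamma)|V_i^s(1)|^2$ gives the annuli directly. Your Kato--Rellich expansion $\lambda_i-\mu_i=\gamma\,\langle V_i,R V_i\rangle/\langle V_i,V_i\rangle+O(\gamma^2)$ is morally the same, but (i) $A$ is not normal, so the first-order formula must use the \emph{left} eigenvector, and (ii) to make the $O(\gamma^2)$ remainder small \emph{uniformly in $i,N$} you need precisely the estimates the paper proves by explicit Riemann-integral computations: $\sum_{j\neq i}|V_j(1)|^2/\kappa_j=O(1)$ and $\sum_{j\neq i}|V_j(1)|^2/\kappa_j^2=O(N)$, together with the crucial comparability $|\kappa_j|\gtrsim |v_j(1)|^2$. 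You flag this as ``the main obstacle'' but do not indicate how to resolve it; the paper's solution is entirely concrete, exploiting the closed-form Dirichlet eigenvectors $v_j(1)=\sqrt{2/(N{+}1)}\sin(j\pi/(N{+}1))$.

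\textbf{A misattribution.} The hypothesis $\eta\neq 0$ or $B_0\neq 0$ is not used to rule out nodes of $V_i^{\pm}$ at site $1$ (the Dirichlet eigenvectors never vanish there). It is used to keep the unperturbed eigenvalues $\mu_j^{\pm}=\pm i\sqrt{B_0^2+\lambda_j m}/m$ uniformly bounded away from $0$, so that $|\mu_j^{+}+\mu_i^{+}|$ is bounded below and $|\kappa_j^{\pm}|\sim|\lambda_j-\lambda_i|$.

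\textbf{Genuine gap: the $N^{-3}$ lower bound at arbitrary $\gamma$.} Your proposed removal of the small-$\gamma$ restriction---``continuity in $\gamma$ combined with Hurwitz stability'' or ``Lyapunov bounds from \cite{BM20}''---does not give a lower bound $g(N)\gtrsim N^{-3}$ uniform in $N$ for a \emph{fixed} large $\gamma$. Continuity in $\gamma$ at fixed $N$ says nothing about the $N\to\infty$ rate, and Hurwitz stability only gives $g(N)>0$. The paper handles this by a separate contradiction argument: assuming a solution $\lambda$ of $W_F(\lambda)=-i$ with $\Im\lambda=o(N^{-3})$, one shifts the real $\mu_j$'s to absorb $\Re\lambda$, takes the imaginary part of $W_F=-i$, and shows using the same sum estimates that $\Im W_F=o(1)$, contradicting $\Im W_F=-1$. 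This step has no analogue in your sketch and is essential for the ``for arbitrary $\gamma>0$'' clause.
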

Notice here the effect of the magnetic field which due to the Lorentz force confines the particles in space and therefore has a similar effect as the pinning potential with parameter $\eta.$

\begin{rem} 
We stress that eigenvectors $V_{i}^{\pm}$ and eigenvalues $\mu_i^{\pm}(1)$ appearing in \eqref{eq:eigenvalues} and \eqref{eq:real_part} have fully explicit closed-form expressions.
\end{rem}

Next we consider the one-dimensional harmonic chain of oscillators, with zero magnetic field $B_0=0$ but with additional next-to-nearest-neighbour interactions of strength $\omega$ instead, i.e. the second term, the nearest neighbour interaction, in \eqref{eq:potentials} is replaced by \eqref{eq:NNN}. We then find that there is a critical value of $\omega_{\text{crit}}=\frac{1}{4}$ such that the behaviour of the spectral gap of the generator undergoes a phase transition at this point.
We recall that the operator $\mathcal{L}$ is said to be \emph{hypoelliptic} when every distribution $u \in \mathscr D'(\mathbb{R}^{2N})$ so that $\mathcal{L} u \in C^\infty(\mathbb{R}^{2N})$, is itself smooth.

For Dirichlet boundary conditions, the matrix $M_{[N]}$ is as in \eqref{eq: matrix M} with $B_{[N]} = (1+4\omega)\Delta_{[N]}^D - \omega(\Delta_{[N]}^D)^2 + P$ where $\Delta_{[N]}^D$ denotes the discrete Dirichlet Laplacian and $P= \operatorname{diag}(-2\omega,0,\dots,0,-2\omega)$. Then we decompose $M_{[N]}$ as $ M_{[N]}=A + \begin{pmatrix}
\Gamma & 0 \\ P & 0
\end{pmatrix}$ or as  $ M_{[N]}=A_\omega + \begin{pmatrix}
\Gamma & 0 \\ 0 & 0
\end{pmatrix}$ with $A$ and $A_{\omega}$ defined accordingly.

\begin{theo}[Phase transition \& NNN] \label{theo:longer_range} Let $\mathcal{L}$ as in \eqref{eq:L} be the generator of the dynamics 
with Dirichlet boundary conditions and friction $\gamma>0$ with $F=\{1\}$. Let $(\mu_j^{\pm},v_j^{\pm})$ be the eigensystem of $A$ and $(\xi_j^{\pm},w_j^{\pm})$ be the eigensystem of $A_{\omega}$ with $\mu_j^+ = - \mu_j^-$ and $\xi_j^+= -\xi_j^-$ due to basic symmetries. 
\begin{itemize}
    \item[(i)] \emph{Hypoellipticity of $\mathcal{L}$:}  When $|\omega|>\frac{1}{4}$, there is an explicit dense set of such $\omega$'s so that for a specific number of particles $N=N(\omega)$ the operator $\mathcal{L}$ is not hypoelliptic. When $|\omega|\leq \frac{1}{4}$,  $\mathcal{L}$ is always hypoelliptic for any $N \in \mathbb N.$
    \item[(ii)] \emph{Spectral gap of $\mathcal L$:} For $\eta>0$, the spectral gap of the Fokker-Planck operator is then given by $g(N):=\inf_{\lambda \in \Spec(M_{[N]})}\Re(\lambda)>0$ when $|\omega|\leq \frac{1}{4}$ for $P=\operatorname{diag}(-2\omega, 0, \dots, 0).$
    \begin{itemize}
        \item[(iia)] When $\omega= \frac{1}{4}$, $ g(N) \lesssim N^{-4}$.
        \item[(iib)] When $\omega$ is sufficiently small, $N^{-3}\lesssim g(N) \lesssim N^{-3}$.    
        Restricting further to small frictions $\gamma$, -as in the case $\omega=0$- we find 
        $$ \lambda_j^{\pm} \in B_{\mathbb{C}}(\xi_j^{\pm}, |w_j^{\pm}(1)|^2) /B_{\mathbb{C}}(\xi_j^{\pm}, \varepsilon(\gamma)|w_j^{\pm}(1)|^2) \text{ for } j\in [N]$$
        and $\varepsilon(\gamma)>0$ small enough. 
        The $L^2(\mathbb R^{2N}; f_\infty(x) \ dx)$-spectrum of $\mathcal{L}$ then, when $f_\infty$ is the invariant state, is given by linear combinations of the eigenvalues of $M_{[N]}$, i.e. 
    \end{itemize}
\end{itemize}
\end{theo}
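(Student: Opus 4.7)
The plan is to exploit the structural fact that, after absorbing the pinning, $B_{[N]} = f(-\Delta_{[N]}^D) + P$ with $f(x) = (1+4\omega)x - \omega x^2$ and $P$ a low-rank boundary term, so that $f$ is strictly monotone on the Laplacian spectral range $[0,4]$ if and only if $\omega \le \tfrac14$. This dichotomy drives all three claims.

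For part (i), since \eqref{eq:SDE} is an Ornstein--Uhlenbeck process with drift matrix $-M_{[N]}$ and noise acting only through the first coordinate, hypoellipticity of $\mathcal{L}$ is equivalent, by the standard Kalman rank condition, to cyclicity of the first standard basis vector $e_1$ under $M_{[N]}$, which after a block reduction becomes cyclicity of $e_1$ under $B_{[N]}$: the spectrum of $B_{[N]}$ must be simple and every eigenvector must have nonzero first component. When $|\omega| \le \tfrac14$, strict monotonicity of $f$ on $[0,4]$ yields simple spectrum of $f(-\Delta_{[N]}^D)$; treating $P$ perturbatively preserves simplicity and cyclicity, giving hypoellipticity for all $N$. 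When $|\omega| > \tfrac14$, solving $f(\lambda_k) = f(\lambda_{k'})$ for $k \ne k'$ with $\lambda_k = 2(1 - \cos(k\pi/(N+1)))$ yields a dense countable set of admissible $(\omega, N(\omega))$; at these parameters $B_{[N]}$ has a two-dimensional eigenspace, and any nonzero linear combination of independent eigenvectors vanishing at site $1$ breaks cyclicity and hence hypoellipticity.

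For part (iia) at $\omega = \tfrac14$, I would construct quasi-modes at the top of the spectrum of $B_{[N]}$. Because $f'(4) = 0$, the image eigenvalues cluster near $f(4)$ with spacing $O(N^{-4})$ rather than the $O(N^{-2})$ spacing found when $\omega$ is bounded away from $1/4$; explicitly, $4-\lambda_{N-j} \asymp j^2/N^2$ gives $f(4)-f(\lambda_{N-j}) \asymp j^4/N^4$. Passing through the approximate correspondence $\xi_j^\pm \approx \pm i \sqrt{\mu_j^{(\omega)}}$ between the spectrum of $A_\omega$ and that of $B_{[N]}$, and using the Grushin / Schur-complement reduction of \cite{BM20} to control $\Re \lambda_j^\pm$ in terms of the local spectral gap times $|w_j^\pm(1)|^2$, the smallest such real part is forced to be $O(N^{-4})$, yielding $g(N) \lesssim N^{-4}$. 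For part (iib) at small $|\omega|$, I would run perturbation theory around the $\omega = 0$ case handled in Theorem~\ref{theo:mag} (with $B_0 = 0$). Since $f(x) = x + O(\omega)$ is a strictly increasing $C^\infty$ perturbation of the identity on $[0,4]$, the eigenvalues $\mu_j^{(\omega)}$ of $B_{[N]}$ and the boundary weights $|w_j^\pm(1)|^2$ remain comparable to their $\omega = 0$ values uniformly in $N$; the friction term is a rank-one perturbation of $A_\omega$. The resolvent / Grushin analysis of \cite{BM20} localizes each $\lambda_j^\pm$ in the prescribed annulus, produces the two-sided bound $N^{-3} \lesssim g(N) \lesssim N^{-3}$, and the $L^2(f_\infty \dd x)$-spectrum of $\mathcal{L}$ then follows via the Hermite-polynomial factorization on the Gaussian invariant measure, parallel to the proof of Theorem~\ref{theo:mag}.

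The hardest step is part (iia): the perturbation $P$ does not commute with $f(-\Delta_{[N]}^D)$, so one must verify that $P$ does not lift the degeneracy at the top of the spectrum of $B_{[N]}$ beyond the $O(N^{-4})$ scale. This requires a careful Weyl-sequence construction showing that the quasi-modes localized at the top of the Laplacian spectrum have sufficiently small overlap with the boundary sites where $P$ acts, so that boundary matrix elements of $P$ in the near-degenerate subspace are negligible compared to the intrinsic $N^{-4}$ spacing.
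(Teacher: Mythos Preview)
Your outline for part (i) and part (iib) is essentially correct and parallels the paper, but there is one inaccuracy in (i) and a genuine strategic gap in (iia).

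\textbf{Part (i).} Your claim that ``treating $P$ perturbatively preserves simplicity and cyclicity'' does not work as stated: $P$ has fixed size $2\omega$ while the gaps of $T_\omega$ can be as small as $N^{-2}$, so $P$ is \emph{not} a small perturbation for large $N$. The paper's argument is instead exact and algebraic: if an eigenvector $u$ of $B_{[N]}=T_\omega+P$ satisfies $u(1)=0$, then $Pu=0$ (since $P=-2\omega\,e_1\otimes e_1$), so $u$ is already an eigenvector of $T_\omega$; simplicity of $T_\omega$ (which \emph{is} exactly what monotonicity of $f$ on $[0,4]$ gives) then forces $u$ to be a Dirichlet--Laplacian eigenvector, and those never vanish at site $1$. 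No perturbation theory is needed.

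\textbf{Part (iia).} Here the paper takes a different and simpler route than the one you propose, and your ``hardest step'' is in fact bypassed entirely. You propose to absorb $P$ into the unperturbed part (work with $A_\omega$) and then perturb by $\Gamma$; this forces you to understand the spectrum of $T_\omega+P$ at the top, where the $T_\omega$-eigenvalues are only $N^{-4}$ apart while $P$ has norm $\sim\omega$, and your Weyl-sequence idea would indeed be delicate. The paper instead uses the decomposition $M_{[N]}=A+\begin{pmatrix}\Gamma&0\\P&0\end{pmatrix}$ and observes that this perturbation is \emph{rank one} (both $\Gamma$ and $P$ act only through $e_1$), so Sylvester's identity reduces the eigenvalue problem for $M_{[N]}$ to a single scalar equation $W_F(\mu)=-i$ with $W_F(\mu)=\sum_{\pm,j}(\mu-\mu_j^\pm)^{-1}|v_j^\pm(1)|^2[\gamma-2\omega i\mu_j^\pm]$. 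The crucial computation, which your heuristic ``$\Re\lambda\sim(\text{local gap})\times|w(1)|^2$'' does not capture, is that at $\omega=\tfrac14$ one has $\kappa_j^+\sim\cos^4(\pi j/2(N{+}1))$, and the Riemann-sum approximation $\int\sin^2(\pi x)/(\mu-\cos^4(\pi x/2))\,dx\sim\mu^{-1/4}$ yields $|f(\mu)|\sim|\mu|^{3/4}$ after localizing near $\mu_N^+$. Balancing this against $|g(\mu)|\sim|v_N^+(1)|^2\sim N^{-3}$ in Rouch\'e's theorem gives $|\mu-\mu_N^+|\sim N^{-4}$, hence $g(N)\lesssim N^{-4}$. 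So the $N^{-4}$ arises not from a product of two scales but from the $\mu^{-1/4}$ divergence of the Wigner sum; your proposed mechanism would give the wrong exponent.

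\textbf{Part (iib).} Your plan matches the paper's: first characterise the eigensystem of $T_\omega+P$ as an $O(\omega|v_j(1)|^2)$-perturbation of that of $T_\omega$ (this is where smallness of $\omega$ is genuinely used), deduce $|w_j(1)|^2\sim|v_j(1)|^2$, and then rerun the Rouch\'e argument of the $\omega=0$ case with $A_\omega$ in place of $A$.
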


\begin{rem}
We restrict us to $P = (-2\omega,0,...,0)$ instead of  $P= (-2\omega,0,...,0,-2\omega)$ and $F=\{1\}$ instead of $F=\{1,N\}$ for simplicity to avoid higher rank perturbations which can be treated with the same techniques but lead to more intricate estimates. This does not affect the behaviour of the spectral gap, as indicated by numerical experiments in this work. A treatment of rank $2$ perturbations, at least for the spectral gap, can also be found in \cite{BM20} and since this analysis carries over to this work, we shall not consider this additional layer of complication, here. 
\end{rem}
\begin{rem} 
We study the scaling of the spectral gap in case of the next-to-nearest-neighbour interaction only for Dirichlet boundary conditions, since the perturbation $P$ under Neumann boundary conditions is of higher rank. In this case however we still have hypoellipticity as long as $|\omega|\leq 1/4$ and lack of hypoellipticity otherwise but for a different set of $\omega$'s. These are discussed in sec. \ref{subsec:crit for hypoell}.
\end{rem}

\smallsection{Notation}
We write $f(z) = \mathcal O(g(z)) $ to indicate that there is $C>0$ such that $\left\lvert f(z) \right\rvert \le C \left\lvert g(z) \right\rvert$ and $f(z)= \smallO(g(z))$ for $z \rightarrow z_0$ if there is for any $\varepsilon>0$ a neighbourhood $U_{\varepsilon}$ of $z_0$ such that $\left\lvert f(z)\right\rvert \le \varepsilon \left\lvert g(z) \right\rvert.$ Instead of writing $f(z) = \mathcal O(g(z))$, we sometimes also write $f(z) \lesssim g(z).$ In addition, we write $f \sim g$ if $ g(z) \lesssim f(z) \lesssim  g(z).$
The eigenvalues of a self-adjoint matrix $A$ shall be denoted by $\lambda_1(A)\le ... \le \lambda_N(A)$. We also employ the Kronecker delta where $\delta_{n \in I}=1$ if $n \in I$ and zero otherwise. The inner product of two vectors $x,y \in \RR^m$ is denoted by $\langle x,y\rangle.$ The ball of radius $r$ centered at $x$ is denoted by $B(x,r)$ and occasionally by $B_{\CC}(x,r)$ to emphasize that it is a ball in $\CC.$ We use the notation $[N]:= \{1,\dots,N\}$ and 
finally with $C_b(X)$ we denote the space of bounded continuous functions acting on $X$.

\section{The spectrum of the harmonic chain of oscillators}

In this section, we study the spectrum of the chain of oscillators and allow for the presence of an external constant magnetic field which couples position and momenta of particles in different spatial directions. Before we start with the analysis of the concrete model, we want to recall basic spectral theoretic properties of the operator. 
\subsection{Hypoellipticity, invariant measure, and the spectrum of the OU operator} \label{subsect: hypoell_OU}
As we saw in the introduction, the Fokker-Planck operator of the chain of oscillator is of the special form 
\[ \mathcal L = \tr(QD^2) + \langle Bx,D \rangle\]
with $x \in \mathbb R^N$, $Q=Q^*$ and $B$ both real non-zero square matrices. An operator of that form is called an \emph{Ornstein-Uhlenbeck operator}. 
In the study of OU operators, the self-adjoint matrix 
\[ Q_t :=\int_0^t e^{sB} Q e^{sB^*} \ ds \]
plays a special role. Indeed, it is well-known that the condition $\operatorname{det}(Q_t)>0$, for one $t>0$ and hence for all $t>0$, is equivalent to the hypoellipticity of the operator $\partial_t-\mathcal L.$

Moreover it is a direct consequence of H\"{o}rmander's regularity theorem \cite{Ho69} that an operator of the form $\mathcal L = \sum_{j=1}^{K} X_j^2 + X_0$, where $X_0,X_j$ are real smooth vector fields, is hypoelliptic once the Lie algebra generated by $X_j$'s for $j=1,\cdots,K$ has rank $N$.

Under the assumption of hypoellipticity, i.e. $\operatorname{det}(Q_t)>0$, the condition $\Spec(B) \subset \{\lambda \in \CC; \Re(\lambda)<0\}$ implies the existence of a unique invariant measure 
\[ d\mu(x) = \frac{1}{(4\pi)^{N/2}\sqrt{\operatorname{det}(Q_{\infty})}} e^{-\langle Q_{\infty}^{-1} x,x \rangle/4} \ dx\]
to the OU semigroup.
The operator $\mathcal L$ then generates a contraction semigroup on $L^p(\RR^N, d\mu(x)).$ In this setting, perhaps rather surprisingly, the spectrum of the OU generator $\mathcal L$ is fully determined by the spectrum of $B$ for $p \in (1,\infty]$.  This result was, to our knowledge, first established in \cite{MPP02}. 

\begin{theo}{\cite[Theorem $3.1$]{MPP02}}\label{theo:metafune}
Let $\mathcal L$ be the generator of an Ornstein-Uhlenbeck process.
In addition, we assume that $\partial_t-\mathcal L$ is hypoelliptic and its associated semigroup possesses a unique invariant measure $d\mu$.
Let $\lambda_1, \cdots, \lambda_m$ be the (distinct) eigenvalues of $B$ and $\mu$ be the unique invariant measure of the semigroup. Then the spectrum of the generator is given by $$\Spec_{L^2(\RR^N, d\mu(x))}(\mathcal L)= \Big\{  \sum_{i=1}^m \lambda_i n_i, n \in \mathbb{N}_0^m\Big\}.$$ 
\end{theo}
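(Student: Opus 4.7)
The plan is to establish the inclusion in both directions: first show that every element of $\bigl\{\sum_{i=1}^m n_i\lambda_i : n\in\mathbb N_0^m\bigr\}$ is an eigenvalue of $\mathcal L$ with a polynomial eigenfunction, and then rule out any further spectrum by a semigroup argument. For the construction of eigenfunctions I would exploit that $\mathcal L = \tr(QD^2)+\langle Bx,D\rangle$ preserves the increasing filtration $\mathcal P_0\subset\mathcal P_1\subset\cdots$ of polynomial subspaces of $L^2(\mathbb R^N,d\mu)$, where $\mathcal P_k$ consists of polynomials of degree at most $k$: the first-order part $\langle Bx,D\rangle$ preserves the degree while $\tr(QD^2)$ strictly lowers it by two. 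On the quotient $\mathcal P_k/\mathcal P_{k-1}\cong \operatorname{Sym}^k((\mathbb C^N)^*)$, the induced action of $\mathcal L$ agrees with the $k$-th symmetric power of $B^\top$, whose eigenvalues are exactly $\bigl\{\sum_i n_i\lambda_i : \sum_i n_i = k\bigr\}$. By induction on $k$, for every multi-index $n\in\mathbb N_0^m$ I would lift the leading symmetric tensor to a polynomial $H_n$ of degree $|n|$ satisfying $\mathcal L H_n = \bigl(\sum_i n_i\lambda_i\bigr)H_n$; the lower-order corrections are obtained by inverting $\mathcal L - \sum_i n_i\lambda_i$ on $\mathcal P_{|n|-1}$, with generalised eigenvectors used in the resonant case $\sum_i n_i\lambda_i = \sum_i n'_i\lambda_i$ for some $|n'|<|n|$.

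Next I would establish completeness of $\{H_n\}_{n\in\mathbb N_0^m}$ in $L^2(\mathbb R^N, d\mu)$. The hypoellipticity assumption together with the condition $\Spec(B)\subset\{\lambda\in\mathbb C:\Re(\lambda)<0\}$ (implicit in the existence of a unique invariant measure) guarantees that $d\mu$ is Gaussian with nondegenerate covariance $Q_\infty = \int_0^\infty e^{sB} Q e^{sB^\top}\,ds$. Polynomials are therefore dense in $L^2(\mathbb R^N, d\mu)$ by a standard Gaussian-analysis argument (density of the exponentials $e^{i\langle\xi, x\rangle}$ followed by a power-series expansion). Because the $H_n$ form a triangular basis of the polynomial algebra, their closed linear span is all of $L^2(d\mu)$, which already yields $\bigl\{\sum_i n_i\lambda_i : n\in\mathbb N_0^m\bigr\}\subset \Spec(\mathcal L)$.

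For the reverse inclusion I would pass to the semigroup $T_t = e^{t\mathcal L}$ via its Mehler representation $T_t f(x) = \int_{\mathbb R^N} f(e^{tB}x + y)\, p_t(y)\,dy$, with $p_t$ the centred Gaussian of covariance $Q_t$. A direct computation shows that $T_t$ preserves each $\mathcal P_k$, and on this finite-dimensional subspace its eigenvalues are exactly $\{e^{t\sum_i n_i\lambda_i} : |n|\le k\}$ by the first step. Combined with the density of $\bigcup_k \mathcal P_k$ and the $L^2$-contractivity of $T_t$, this identifies $\Spec(T_t) = \bigl\{e^{t\sum_i n_i\lambda_i} : n\in\mathbb N_0^m\bigr\}\cup\{0\}$, and a spectral-mapping argument then recovers the desired description of $\Spec(\mathcal L)$. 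The main obstacle is exactly this final step: in the degenerate hypoelliptic setting $\mathcal L$ does not have compact resolvent, so Riesz--Schauder is unavailable, and the general spectral mapping theorem fails for arbitrary $C_0$-semigroups. The technical heart of \cite{MPP02} is to exploit the Mehler formula and the positivity of $Q_t$ for $t>0$ to derive resolvent estimates (or to verify a Gearhart--Pr\"uss type weak spectral mapping criterion) that exclude any additional spectral points off the discrete lattice $\{\sum_i n_i\lambda_i\}$.
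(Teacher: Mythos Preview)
The paper does not prove this statement at all: Theorem~\ref{theo:metafune} is simply quoted from \cite[Theorem~3.1]{MPP02} and used as a black box, so there is no proof in the paper to compare your proposal against. Your outline is in fact a reasonable sketch of the original argument in \cite{MPP02}: the filtration by polynomial degree, the identification of the induced action on $\mathcal P_k/\mathcal P_{k-1}$ with $\operatorname{Sym}^k(B^\top)$, density of polynomials in $L^2(d\mu)$ for a nondegenerate Gaussian, and the Mehler formula for $T_t$ are exactly the ingredients used there. You are also right that the delicate point is the reverse inclusion, since the semigroup is not analytic in the degenerate case and a naive spectral mapping argument fails; \cite{MPP02} handles this by a compactness argument showing that $T_t$ is compact on $L^2(d\mu)$ for $t>0$ (via the Mehler kernel and the strict positivity of $Q_t$ under hypoellipticity), which then gives the full spectral mapping and discreteness of the spectrum directly, rather than through Gearhart--Pr\"uss type estimates.
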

This result in particular implies that the spectral gap of the operator $\mathcal L$, defined by the formula \eqref{def:spectral gap}, is in fact given by $g:= \inf\{ \Re(\lambda): \lambda \in \Spec{(B)} \}. $ 

More results on the optimal exponential rate of such possibly degenerate Ornstein-Uhlenbeck operators in $L^2(\mu)$ or in relative entropy distance can be found in \cite{AE14, Mon15}.  
\bigskip

Therefore whenever the operator of the chain of oscillators satisfies conditions (1) and (2) we immediately have  for the spectral gap of the Fokker-Planck operator that  
\[g(N):=\inf\{ \Re(\lambda): \lambda \in \Spec(M_{[N]}) \}  .\] 

\subsection{The linear chain of oscillators}
Returning to our model, we observe that by Theorem \ref{theo:metafune}, it suffices to understand the spectrum of the matrix \eqref{eq:MN}.
As a first step we disentangle the $(q_x, p_x)$ and $(q_y,p_y)$ coordinates of each oscillator that are coupled by the magnetic field. For that we diagonalize the matrix $J$, see \eqref{eq:omega}, by choosing a new basis
\[ 
f_{2j-1} = \frac{ e_{2j}-i e_{2j-1} }{\sqrt{2}},\quad f_{2j} = \frac{ e_{2j}+i e_{2j-1} }{\sqrt{2}}
\]
in terms of standard basis vectors $e_i.$
By performing this change of variables, the matrix $M_{[N]}$ can be decomposed into the direct sum of two matrices corresponding to even and odd indices: 
\begin{equation}
\left(\begin{matrix} \Gamma  + i \textbf{m}_{[N]}^{-1} B_0 & -\textbf{m}_{[N]}^{-1}  \\ B_{[N]}    &  -i \textbf{m}_{[N]}^{-1} B_0  \end{matrix}\right)\ \text{and } \left(\begin{matrix} \Gamma  - i \textbf{m}_{[N]}^{-1} B_0 & -\textbf{m}_{[N]}^{-1} \\ B_{[N]}     &  i \textbf{m}_{[N]}^{-1} B_0  \end{matrix}\right).
\end{equation}
We see right away that the matrices are the same up to a change of sign in the constant magnetic field and we shall therefore focus on the first one.

We then introduce the matrix $A  := \left(\begin{matrix}  i \textbf{m}_{[N]}^{-1} B_0   & -\textbf{m}_{[N]}^{-1}  \\ B_{[N]}     & -i \textbf{m}_{[N]}^{-1} B_0 \end{matrix}\right)$ and $A_{\Gamma} := i\hat{\Gamma}+iA$, where $\displaystyle\Hat{\Gamma} = \left[ \begin{matrix}\Gamma & 0 \\ 0& 0\end{matrix}\right]\in \CC^{2N\times 2N}$, with $\Gamma = \text{diag}(\gamma_{1} \delta_{1 \in F},0, \dots,0, \gamma_{N}  \delta_{N \in F}).$ Hence, 
\begin{equation}
\label{eq:spectrum}
    \Spec(M_{[N]}) = -i (\Spec(A_{\Gamma}) \cup \overline{\Spec(A_{\Gamma})}).
\end{equation}

As our first Proposition shows, it is then rather straightforward to obtain a full spectral decomposition of $A$ in terms of the spectral decomposition of $B_{[N]}$.

\begin{prop}
\label{spec:A0}
To any eigenpair $(\lambda_j,v_j)$ of $B_{[N]}$, there exist two eigenvalues of $ A$  $$\mu_j^{\pm} = \pm \frac{i\sqrt{B_0^2+\lambda_j m}}{m}$$ with eigenvectors
 $$V^{\pm}_j = \frac{(v_j,u_j)^T}{\vert(v_j,u_j)^T\vert},\text{
 where } u_j = i\left(B_0\mp\sqrt{B_0^2+\lambda_j m}\right)v_j.$$
\end{prop}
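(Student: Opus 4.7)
The natural approach is to reduce the eigenvalue problem for the $2N\times 2N$ block matrix $A$ to the known spectral decomposition of the $N\times N$ Jacobi matrix $B_{[N]}$ via a separable ansatz. Since $B_{[N]}$ is self-adjoint, it admits a full orthonormal eigenbasis $(\lambda_j,v_j)_{j\in[N]}$, and I would look for eigenvectors of $A$ of the form
\begin{equation*}
V = \begin{pmatrix} v_j \\ \alpha\, v_j \end{pmatrix}
\end{equation*}
for an undetermined scalar $\alpha\in\mathbb C$.

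Plugging this ansatz into $AV = \mu V$ and using $B_{[N]}v_j = \lambda_j v_j$ collapses the $2N$-dimensional eigenvalue problem to the pair of scalar equations
\begin{equation*}
i m^{-1} B_0 - m^{-1}\alpha = \mu, \qquad \lambda_j - i m^{-1} B_0 \,\alpha = \mu\,\alpha .
\end{equation*}
The first relation yields $\alpha = i B_0 - m\mu$; inserting this into the second produces the quadratic $m^2\mu^2 = -(B_0^2+\lambda_j m)$, whose two roots are precisely
\begin{equation*}
\mu_j^{\pm} = \pm \frac{i\sqrt{B_0^2+\lambda_j m}}{m},
\end{equation*}
and feeding these back into the expression for $\alpha$ gives $u_j = \alpha v_j = i\bigl(B_0\mp\sqrt{B_0^2+\lambda_j m}\bigr)v_j$, matching the claimed eigenvectors after normalization.

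Finally, I would verify that the collection $\{V_j^{\pm}\}_{j\in[N]}$ exhausts the spectrum of $A$. Since $A$ is $2N\times 2N$, we have at most $2N$ eigenvalues, and the ansatz already produces $2N$ pairs $(\mu_j^{\pm},V_j^{\pm})$; linear independence within each pair is automatic whenever $\mu_j^+\neq \mu_j^-$, i.e.\ whenever $B_0^2+\lambda_j m\neq 0$, and linear independence across different $j$ follows from the orthogonality of the $v_j$'s in the upper block. The only real obstacle is the degenerate situation $B_0^2+\lambda_j m=0$, in which the two branches collide at $\mu=0$ and the ansatz only produces one eigenvector for that $j$; this is excluded under our standing assumptions since the entries $\eta_i>0$ and $\xi_{i,i+1}>0$ together with the diagonal contribution $B_0^2/(2m)$ in \eqref{eq:Schroe} force $\lambda_j\geq B_0^2/(2m)>0$ (or $\lambda_j>0$ when $B_0=0$, $\eta>0$), so that $B_0^2+\lambda_j m>0$ for every $j$ and the $2N$ eigenpairs are genuinely independent, giving the full spectral decomposition.
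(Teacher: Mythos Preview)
Your proof is correct and follows essentially the same approach as the paper: both reduce the $2N\times 2N$ eigenvalue problem to a $2\times 2$ problem via the eigenbasis of $B_{[N]}$, the paper by first conjugating $A$ with $\operatorname{diag}(S,S)$ and then making the ansatz $V_j=(v_j,u_j)^T$, you by going directly to the ansatz $V=(v_j,\alpha v_j)^T$. Your additional verification that the $2N$ eigenpairs are linearly independent and hence exhaust the spectrum is a useful supplement that the paper leaves implicit.
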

\begin{proof}
We note that all blocks in $A$ aside from $B_{[N]}$ are multiples of the identity matrix. Hence, since $B_{[N]}$ is self-adjoint, we can write $B_{[N]} = S^{-1}\Lambda S$, where $\Lambda = \left(\begin{matrix}  \lambda_1   & &  \\ &\ddots&\\ & & \lambda_N \end{matrix}\right) $ contains the eigenvalues and $S$ the eigenbasis.

 So $$\left( \begin{matrix} S^{-1}& 0 \\ 0 &S^{-1} 
\end{matrix} \right) A \left( \begin{matrix} S& 0 \\ 0 &S 
\end{matrix} \right) = \left(\begin{matrix}  i \textbf{m}_{[N]}^{-1} B_0 & -\textbf{m}_{[N]}^{-1}  \\\Lambda     &  -i \textbf{m}_{[N]}^{-1} B_0  \end{matrix}\right)$$

We observe that this matrix just decomposes into blocks of 2 by 2 matrices, of the form $\left(\begin{matrix} iB_0m^{-1}& -m^{-1} \\ \lambda_i &-iB_0m^{-1}
\end{matrix} \right)$, indexed by $(i,i)$, $(N+i,i)$, $(i,N+i)$, and $(N+i,N+i)$, $i\in \{1,2\cdots N\}.$

Thus, the eigenvalues are $\mu_j^{\pm} = \pm \frac{i\sqrt{B_0^2+\lambda_j m}}{m}$.
Let $v_j$ be the eigenvectors of $B_{[N]}$, we then try to find two eigenvectors of the form $V_j = (v_j,u_j)$ to the matrix $A,$ where $u_j$ is to be determined. 
 From the two equations 
$$B_{[N]}v_j = \lambda_j v_j \text{ and }A V_j^{\pm} = \mu_j^{\pm}V_j^{\pm},$$
we obtain two vector-valued equations
 $$\begin{cases}-\frac{1}{m}B_0v_j-\frac{i}{m}u_j = \pm \left(\frac{1}{m}\sqrt{B_0^2+\lambda_j m}\right) v_j \\ i\lambda_j v_j+\frac{1}{m}B_0u_j = \pm \left(\frac{1}{m}\sqrt{B_0^2+\lambda_j m} \right)u_j\end{cases}\;\;\text{for all } j\in\{1,2,\dots,N\}.$$
 From the first equation we get $$ u_j = i\left(B_0\pm\sqrt{B_0^2+\lambda_j m}\right)v_j$$ and we readily verify that this choice of $u_j$ also satisfies the second set of equations. 
\end{proof}
Thus, the operator $\mathcal A$ that we are interested in is almost diagonalizable. We shall now turn to the spectral analysis of $\mathcal A$, but in order to do so, we recall \emph{Sylvester’s determinant identity}.
\begin{lemm}[Sylvester’s determinant identity]
\label{lemm:Sylvester}
 Let $A\in\CC^{p\times n}$ and $B\in\CC^{n\times p}$, then $$\det(I_p+AB) = \det(I_n+BA).$$
\end{lemm}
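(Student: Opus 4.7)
My plan is to prove Sylvester's determinant identity by computing the determinant of the block matrix
\[
M = \begin{pmatrix} I_p & -A \\ B & I_n \end{pmatrix} \in \CC^{(p+n)\times(p+n)}
\]
in two different ways using Schur complements, and equating the results.

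First, I would perform the block row reduction that annihilates the $(2,1)$-block by left-multiplying by $\begin{pmatrix} I_p & 0 \\ -B & I_n \end{pmatrix}$, which has determinant one. The upper-left block $I_p$ is trivially invertible, so the first Schur complement formula gives $\det(M) = \det(I_p)\cdot \det(I_n - B\cdot I_p^{-1}\cdot(-A)) = \det(I_n + BA)$. Next I would perform the complementary block column (or row) reduction, using instead the invertibility of the lower-right block $I_n$, which by the second Schur complement formula yields $\det(M) = \det(I_n)\cdot \det(I_p - (-A)\cdot I_n^{-1}\cdot B) = \det(I_p + AB)$. Equating the two expressions for $\det(M)$ gives the identity.

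There is really no hard step here: once the block matrix is written down, both computations are routine applications of the Schur complement formula. The only thing to keep in mind is the sign of the off-diagonal block (I chose $-A$ rather than $+A$ so that the $+BA$ and $+AB$ signs come out correctly on the right-hand side). Should one prefer to avoid Schur complements, an equally short alternative is to observe that for every nonzero eigenvalue $\lambda$ of $AB$, if $ABv = \lambda v$ then $BA(Bv) = \lambda (Bv)$ with $Bv \neq 0$; a small refinement handling multiplicities then shows $AB$ and $BA$ have identical nonzero spectra with matching multiplicities, from which the identity for the characteristic polynomial (and hence for $\det(I + \cdot)$) follows. I would, however, opt for the block-matrix proof since it is a single line once the matrix $M$ is introduced.
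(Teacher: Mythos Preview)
Your proof is correct and is in fact the standard block-matrix/Schur-complement argument for this classical identity. The paper itself does not supply a proof of this lemma at all --- it merely states Sylvester's determinant identity as a known fact and then applies it --- so there is nothing to compare against; your write-up would serve perfectly well as the missing justification.
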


We can now relate the spectrum of $A_{\Gamma}$ to the matrix $A$ whose full eigendecomposition we have already exhibited in Prop. \ref{spec:A0}.
\begin{prop}[Wigner reduction]
\label{prop:Wigner}
 With $A_{\Gamma}$ as defined before, we have for $$\lambda\in\CC \backslash \Spec(A) \subset \CC \setminus i \mathbb R,$$ the spectral equivalence $$\lambda\in \Spec(A_{\Gamma})\iff -i\in \Spec(W_F(\lambda)).$$
We introduced 
 \begin{equation}
 \label{eq:WF}
     W_F(\lambda) = \sum_{\mu\in \Spec(A)}(\lambda-i\mu)^{-1}\sum_{i_1,i_2\in F}\alpha_{i_1,i_2}(\mu)\pi_{i_1,i_2}^{|F|}
 \end{equation}
with rank one operators $\pi_{i_1,i_2}^{|F|} = e_{i_1}^{|F|}\otimes e_{i_2}^{|F|}\footnote{Hence, $\pi_{i_1,i_2}^{|F|}\in \CC^{|F|\times|F|}$ is a matrix with zeros everywhere except for a single one at the $i_1$th row, $i_2$th column},$ where $e_k^j$ is the $k$-th unit vector in $\mathbb R^j$ and $$\alpha_{i_1,i_2}(\mu = \mu_j^{\pm}):=\sqrt{\gamma_{i_1}\gamma_{i_2}}\left\langle V^{\pm}_j, e^{2N}_{i_1}\right\rangle\left\langle e^{2N}_{i_2}, V^{\pm}_j\right\rangle$$
 with $V^{\pm}_j$ the eigenvectors of $A$ introduced in Prop. \ref{spec:A0}.
\end{prop}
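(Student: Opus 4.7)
The strategy is a Schur-complement / Sylvester reduction followed by a spectral expansion of the resolvent of $A$; Proposition~\ref{spec:A0} enters only in the last step.

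First I would write $A_\Gamma$ as a rank-$|F|$ perturbation of $iA$. Introducing the matrix $V\in\CC^{2N\times|F|}$ whose columns are $\sqrt{\gamma_k}\,e_k^{2N}$, $k\in F$, we have $\hat\Gamma = VV^*$ and hence $A_\Gamma = iA + iVV^*$. For $\lambda$ in the resolvent set of $iA$ (the standing hypothesis on $\lambda$), set $R(\lambda):=(\lambda I-iA)^{-1}$ and factor
\[
\lambda I - A_\Gamma \;=\; (\lambda I - iA)\,\bigl[I - iR(\lambda)VV^*\bigr],
\]
so that $\lambda\in\Spec(A_\Gamma)$ iff the bracket is singular. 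Reading $iR(\lambda)VV^*$ as the product of the $2N\times|F|$ matrix $iR(\lambda)V$ with the $|F|\times 2N$ matrix $V^*$, Lemma~\ref{lemm:Sylvester} collapses the $2N\times 2N$ determinant to the $|F|\times|F|$ determinant $\det(I_{|F|}-iV^*R(\lambda)V)$. Therefore $\lambda\in\Spec(A_\Gamma)$ iff $1\in\Spec(iV^*R(\lambda)V)$, equivalently $-i\in\Spec\bigl(V^*R(\lambda)V\bigr)$.

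It only remains to identify $V^*R(\lambda)V$ with $W_F(\lambda)$. Since $A$ has (for generic parameters) simple spectrum with explicit eigenpairs $(\mu_j^\pm,V_j^\pm)$ coming from Proposition~\ref{spec:A0}, the resolvent admits the partial-fraction expansion
\[
R(\lambda) \;=\; \sum_{\mu\in\Spec(A)} \frac{P_\mu}{\lambda-i\mu},
\]
where $P_\mu$ is the spectral projector onto the $\mu$-eigenspace. Sandwiching by $V^*$ on the left and $V$ on the right extracts, in the $(i_1,i_2)$ slot, the scalar $\sqrt{\gamma_{i_1}\gamma_{i_2}}\,(P_\mu)_{i_1,i_2}$, whose identification with $\langle V_\mu, e^{2N}_{i_1}\rangle\langle e^{2N}_{i_2}, V_\mu\rangle$ reproduces the rank-one piece $\alpha_{i_1,i_2}(\mu)\,\pi^{|F|}_{i_1,i_2}$ appearing in \eqref{eq:WF}.

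The main delicacy lies in this last identification: since $A$ is not Hermitian in the standard inner product, $P_\mu$ is not \emph{a priori} equal to $V_\mu V_\mu^*$, and the rank-one formula implicit in the definition of $\alpha_{i_1,i_2}(\mu)$ amounts to saying that the biorthogonal left eigenvector of $A$ at $\mu$ coincides with $\overline{V_\mu}$ up to normalization. I would establish this either by exhibiting an auxiliary inner product -- coming from the Hamiltonian structure underlying $A$ -- with respect to which $A$ is normal, or by a direct calculation using the $2\times 2$ block form that $A$ acquires after diagonalizing $B_{[N]}$ in the proof of Proposition~\ref{spec:A0}. Everything else is routine bookkeeping.
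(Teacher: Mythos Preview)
Your approach is essentially identical to the paper's: introduce the $2N\times|F|$ matrix $Q_F$ (your $V$) with columns $\sqrt{\gamma_k}\,e_k^{2N}$, write $\hat\Gamma=Q_FQ_F^*$, factor $\lambda I-A_\Gamma=(\lambda I-iA)\bigl(I-i(\lambda I-iA)^{-1}Q_FQ_F^*\bigr)$, and apply Sylvester's identity (Lemma~\ref{lemm:Sylvester}) to collapse to the $|F|\times|F|$ determinant of $I_{|F|}-iQ_F^*(\lambda I-iA)^{-1}Q_F$. The paper then simply asserts that $Q_F^*(\lambda I-iA)^{-1}Q_F$ equals the expression \eqref{eq:WF} and stops there.

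You are right to flag the projector identification as delicate, and in fact you have put your finger on a genuine inaccuracy in the stated formula rather than a mere bookkeeping issue. Since $A$ is not normal, its resolvent expands via biorthogonal left/right eigenvectors, and your proposed $2\times2$ block computation shows that the left eigenvector at $\mu_j^{+}$ is \emph{not} proportional to $\overline{V_j^{+}}$. Concretely, the $(1,1)$ entry of the true spectral projector for the $j$-th block equals $(s_j+B_0/m)/(2s_j)$ with $s_j=\sqrt{B_0^2+\lambda_jm}/m$, whereas \eqref{eq:WF} with the normalized $V_j^{\pm}$ of Proposition~\ref{spec:A0} would give $1/(1+(B_0-ms_j)^2)$; already for $B_0=0$ these are $\tfrac12$ versus $1/(1+m\lambda_j)$. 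The discrepancy is harmless for the rest of the paper, which only uses that the weight attached to $(\lambda-i\mu_j^{\pm})^{-1}$ is comparable to $|v_j(1)|^2$ uniformly in $j$ and $N$, and both expressions satisfy that. So your plan is sound: it reproduces the paper's argument for the spectral equivalence, and the block calculation you propose is exactly what pins down the correct coefficients.
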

 \begin{proof}
We start by introducing $\displaystyle Q_{F} = \left\{\sqrt{\gamma_a}e^{2N}_a(i)\right\}_{i\in[2N], a\in F}\in\CC^{2N\times |F|}$.

Thus, $Q_{F}Q_{F}^* = \hat{\Gamma}$ by an explicit computation. 
Further, for $\lambda \notin \pm \Spec(A)$, one observes that $$W_{F}(\lambda)={Q_F^*}(\lambda I-iA)^{-1}Q_F\in\CC^{|F|\times |F|}$$
and thus we have
\[
\begin{split}
    \det\left( I_{|F|}-iW_{F}(\lambda)\right) & = \det\left(I_{|F|}-iQ_F^*(\lambda I-iA)^{-1}{Q_F}\right)\\
    & = \det\left(I_{2N}-i(\lambda I-iA)^{-1}Q_FQ_F^*\right) \\
    & = \det\left(I_{2N}-i(\lambda I-iA)^{-1}\hat{\Gamma}\right) \\
    & = \det\left((\lambda I-iA)^{-1}(\lambda I-iA-i\hat{\Gamma})\right) \\
    & = \det\left((\lambda I-iA)^{-1}\right)     \det\left(\lambda I-A_{\Gamma}\right),
\end{split}
\]
where in the second line we applied Lemma \ref{lemm:Sylvester}. 
Since $\det\left((\lambda I-iA)^{-1}\right)\neq 0$, we get that $$\lambda\in \Spec(A_{\Gamma})\iff -i\in \Spec(W_F(\lambda)).$$

This completes the proof.
\end{proof}
So far we have not imposed any assumptions on $\eta$, $\xi$ and $\gamma$. From now on, we shall assume that all  $\eta$, $\xi$, and $\gamma$ coincide, respectively. In our previous article \cite{BM20}, we discussed the case $F=\{1,N\}$ in great detail, which implies that the Wigner matrix $W_{F}$ in Prop. \ref{prop:Wigner} is $2 \times 2.$ One can then diagonalize this matrix explicitly and reduce the spectral analysis to a scalar-valued Wigner matrix again. From now on we will assume that the masses are normalised to $1$ in order to simplify the notation. 
By our assumptions on $\eta$ and $\xi$, the matrix $B_{[N]}$ has explicit normalized eigenvectors, namely the eigenvectors of the discrete Dirichlet Laplacian
\begin{equation}
    v_j(i) = \begin{cases}N^{-\frac{1}{2}}, & \quad j=1\\ \sqrt{\frac{2}{N+1}}\sin \left(\frac{ij \pi }{N+1}\right),&\quad j\neq 1\end{cases}
\end{equation}
with corresponding eigenvalues $\lambda_j(B_{[N]}) = 4\xi \sin^2\left(\frac{\pi j}{2(N+1)}\right)+\eta+\frac{B_0^2}{2}$ for $ j\in [N].$ In particular, we have $$\lambda_1(B_{[N]}) \le \lambda_2(B_{[N]}) \le ...\le \lambda_N(B_{[N]}).$$

\subsection{Full spectrum with magnetic field}
We are now ready to prove our main theorem for the full spectrum of $M_{[N]}$ and thus on the spectrum of the full Fokker-Planck operator. The smallness assumption on the friction is necessary as Fig.\ref{fig:my_label} shows. Theorem \ref{theo:mag} then follows from the following theorem and Theorem \ref{theo:metafune}. 

\begin{figure}
    \centering
    \includegraphics[width=7.5cm]{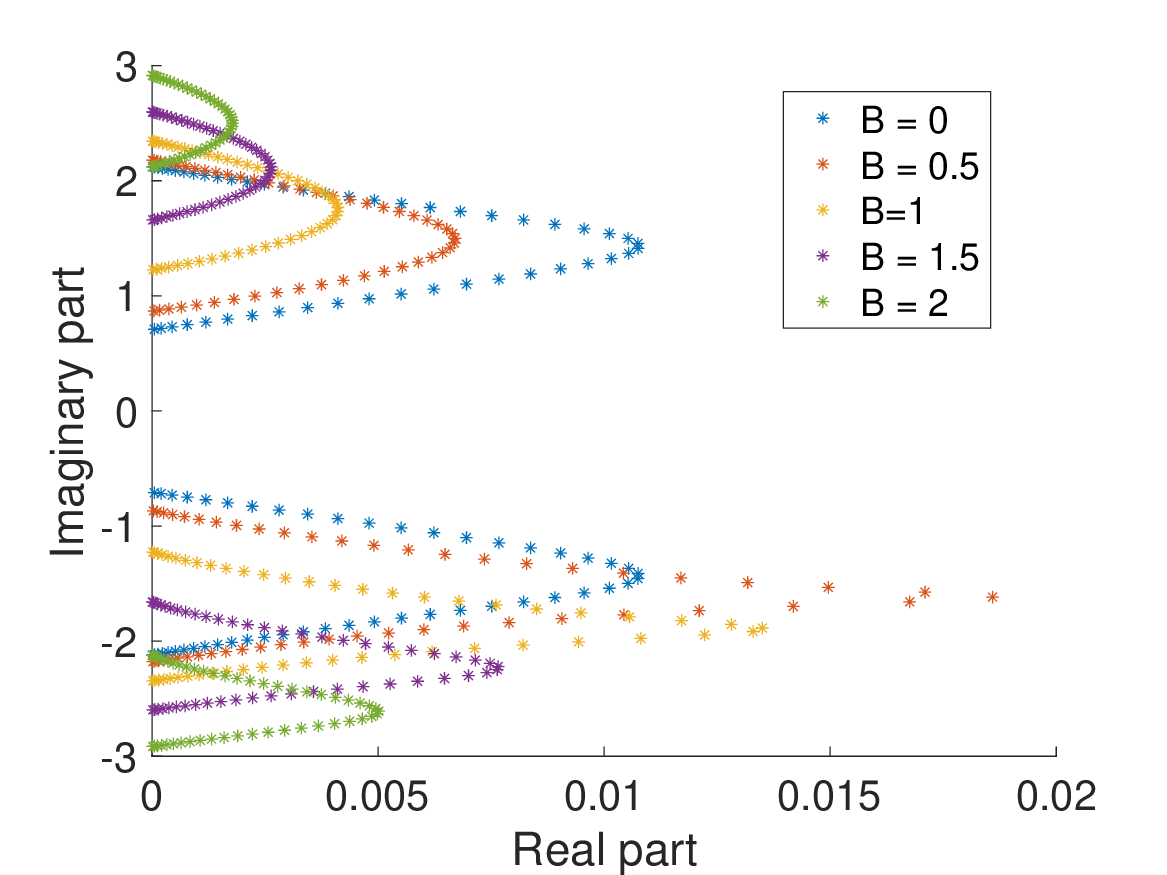}
    \includegraphics[width=7.5cm]{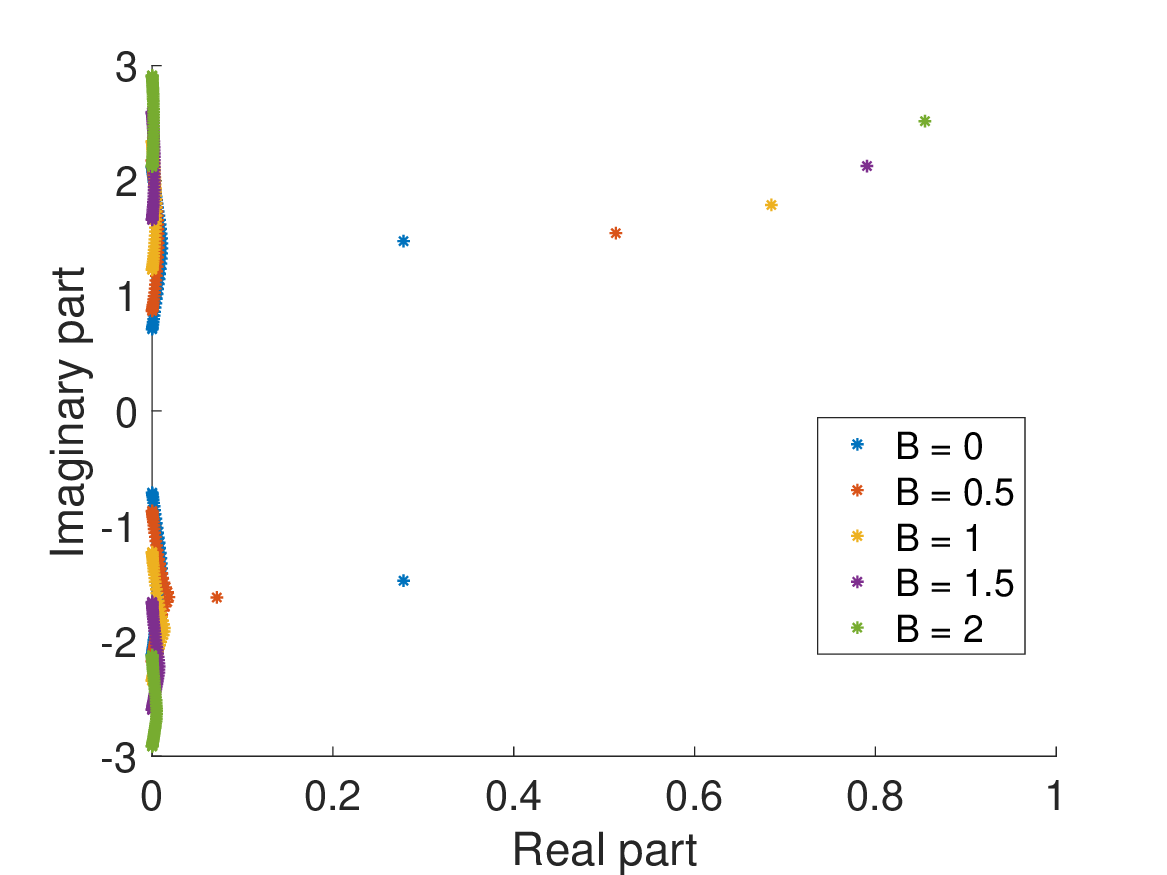}\\
    \includegraphics[width=7.5cm]{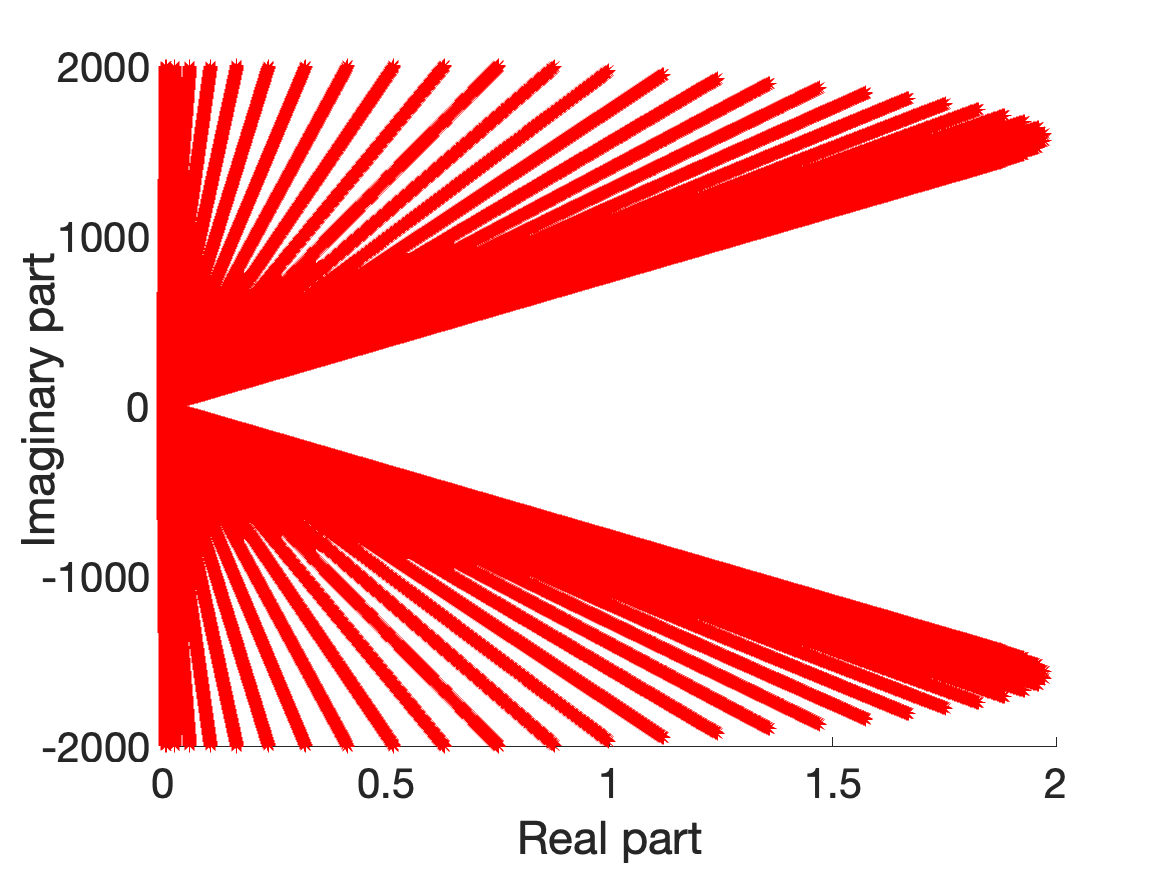}
    \includegraphics[width=7.5cm]{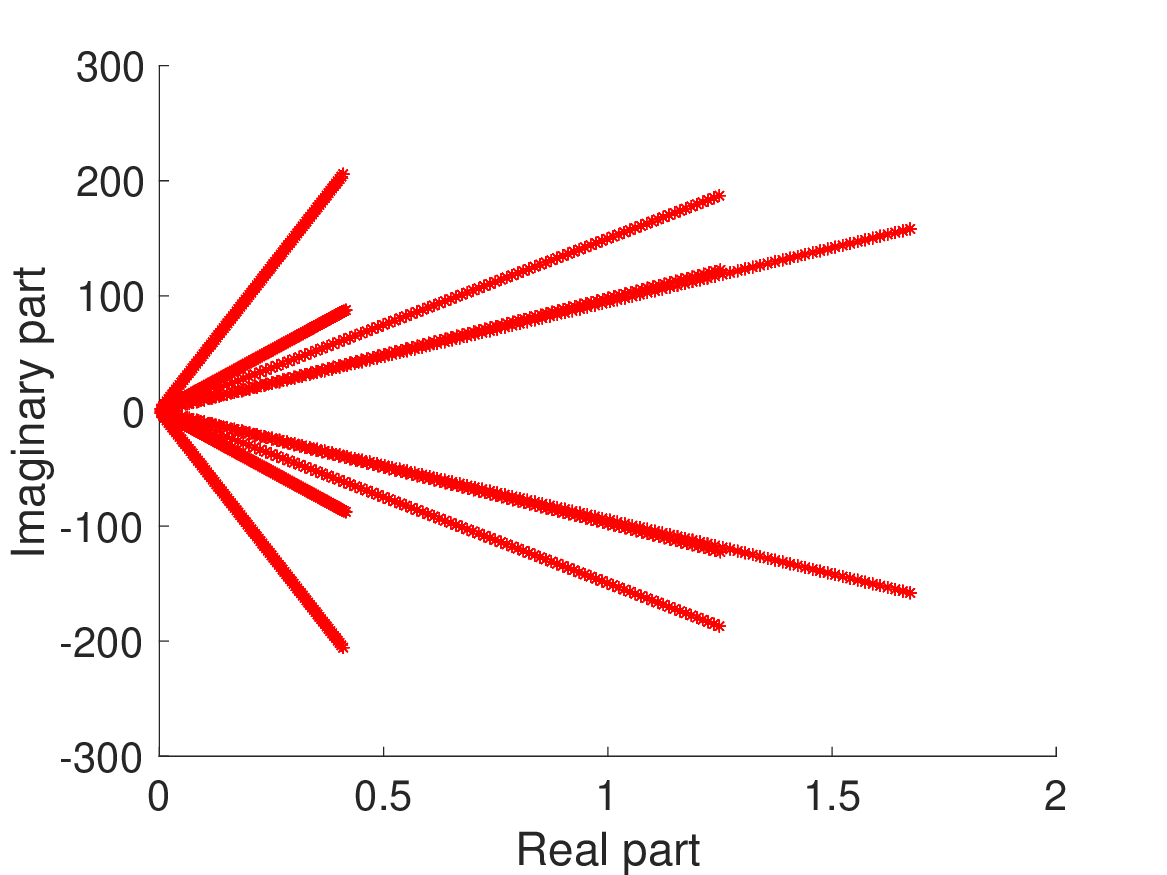}
    \caption{ $-i\Spec(A_{\Gamma})$ with $\gamma=0.1$ (top left), with $\gamma = 1$ (top right).  The magnetic field $B=B_0$ creates an asymmetry in the spectrum, but does not change it qualitatively. Strong friction creates eigenvalues with very large real part. 
    Spectrum of the lowest lying eigenvalues of full Fokker-Planck operator (bottom left)  for $N=50$, $B=0$ particles with $\gamma = 0.1$ and (bottom right) for $N=5$, $B=0.$ }
    \label{fig:my_label}
\end{figure}

\begin{theo} \label{theo: spectrum_magnetic field}
Let the friction $\gamma>0$ be sufficiently small, $F=\{1\},$ $B_0 \in \mathbb R$ arbitrary, not both $B_0$ and $\eta$ equal to zero, as well as $N$ large enough. The matrix $M_{[N]}$ has $4N$ eigenvalues $\lambda_i^{\pm}(j)$, with $j \in \ZZ_2$, $i \in [N]$ such that two of them are located in annuli 
\begin{equation}
\label{eq:annuli}
    \lambda_i^{\pm}(j) \in B_{\CC}(\mu_i^{\pm},c(B_0)\vert V_i^{\pm}(1)\vert^2) \backslash B_{\CC}(\mu_i^{\pm},\vert V_i^{\pm}(1)\vert^2  \varepsilon(\gamma,B_0)) \text{ for } j\in \ZZ_2
    \end{equation}and 
    \begin{equation}
    \label{eq:really}
\varepsilon(\gamma,B_0)\vert V_i^{\pm}(1)\vert^2 \lesssim \Re(\lambda_i^{\pm}(j)) \lesssim \vert V_i^{\pm}(1)\vert^2
    \end{equation}for some suitable $c>0$ and $\varepsilon(\gamma,B_0)$ small enough. For arbitrary $\gamma>0$ and $B_0\in \mathbb R$ the spectral gap $g(N):=\inf_{i,\pm,j}\Re(\lambda_i^{\pm}(j)) $ always decays like $1/N^3.$
\end{theo}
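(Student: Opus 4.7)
The proof hinges on the reduction to a scalar transcendental equation via Proposition~\ref{prop:Wigner}. By \eqref{eq:spectrum} it suffices to locate $\Spec(A_{\Gamma})$, and since $F=\{1\}$ the Wigner matrix is the scalar meromorphic function
\[
W_{\{1\}}(\lambda)\;=\;\gamma\sum_{k=1}^{N}\sum_{\pm}\frac{|V_{k}^{\pm}(1)|^{2}}{\lambda-i\mu_{k}^{\pm}},
\]
whose poles $i\mu_{k}^{\pm}=\mp\sqrt{B_{0}^{2}+\lambda_{k}(B_{[N]})}$ lie on the real axis and whose residues $\gamma|V_{k}^{\pm}(1)|^{2}$ are strictly positive and fully explicit through Proposition~\ref{spec:A0} together with the Dirichlet (resp.\ Neumann) eigendata of $B_{[N]}$. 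The spectral problem then becomes: find all $\lambda\in\CC$ with $W_{\{1\}}(\lambda)=-i$.

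\emph{Rouché localisation.} Fix a pole $i\mu_{k}^{\pm}$ and decompose $W_{\{1\}}=f_{k,\pm}+g_{k,\pm}$, where $f_{k,\pm}(\lambda)=\gamma|V_{k}^{\pm}(1)|^{2}/(\lambda-i\mu_{k}^{\pm})$ is the singular part; the truncated equation $f_{k,\pm}(\lambda)+i=0$ has the unique root $\lambda_{0}=i\mu_{k}^{\pm}+i\gamma|V_{k}^{\pm}(1)|^{2}$. I would then apply Rouché's theorem on the circles $\partial B(i\mu_{k}^{\pm},r)$, first with $r=c|V_{k}^{\pm}(1)|^{2}$ (enclosing $\lambda_{0}$) and then with $r=\varepsilon(\gamma)|V_{k}^{\pm}(1)|^{2}<\gamma|V_{k}^{\pm}(1)|^{2}$ (excluding $\lambda_{0}$), producing exactly one zero of $W_{\{1\}}+i$ in the annulus provided one has a uniform background bound $\sup_{|\lambda-i\mu_{k}^{\pm}|=r}|g_{k,\pm}(\lambda)|\lesssim 1$ independent of $N$. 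Summing the $2N$ resulting zeros exhausts $\Spec(A_{\Gamma})$, and \eqref{eq:spectrum} then produces $4N$ eigenvalues of $M_{[N]}$ in the annuli \eqref{eq:annuli}, the index $j\in\mathbb{Z}_{2}$ labelling the contributions from $\Spec(A_{\Gamma})$ and from $\overline{\Spec(A_{\Gamma})}$. Reading off the imaginary part of $\lambda-i\mu_{k}^{\pm}=-\gamma|V_{k}^{\pm}(1)|^{2}/(i+g_{k,\pm}(\lambda))$ (which becomes the real part of the corresponding eigenvalue of $M_{[N]}$ after multiplication by $-i$) yields \eqref{eq:really}.

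\emph{Universal $N^{-3}$-scaling.} For arbitrary $\gamma>0$, the scaling $g(N)\sim N^{-3}$ follows directly from the imaginary part of $W_{\{1\}}(\lambda)=-i$: writing $\lambda=x+iy$ and using that all poles are real gives the identity
\[
\gamma\,y\sum_{k=1}^{N}\sum_{\pm}\frac{|V_{k}^{\pm}(1)|^{2}}{(x-i\mu_{k}^{\pm})^{2}+y^{2}}\;=\;1.
\]
At $x=i\mu_{1}^{+}$ the $(1,+)$-summand alone contributes $\gamma|V_{1}^{+}(1)|^{2}/y$, which forces the corresponding eigenvalue to satisfy $y\sim\gamma|V_{1}^{+}(1)|^{2}$. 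Since $v_{1}(1)^{2}=\tfrac{2}{N+1}\sin^{2}(\pi/(N+1))\sim\pi^{2}/N^{3}$ for both boundary conditions, an eigenvalue of $M_{[N]}$ with real part $\sim\gamma/N^{3}$ exists, giving $g(N)\lesssim N^{-3}$. The matching lower bound $g(N)\gtrsim N^{-3}$ follows from the same identity by monotonicity in $y$: the eigenvalue attached to each pole $i\mu_{k}^{\pm}$ satisfies $y\sim\gamma|V_{k}^{\pm}(1)|^{2}$, and since $\min_{k,\pm}|V_{k}^{\pm}(1)|^{2}\sim 1/N^{3}$ (the minimum being attained at $k=1$ and $k=N$), every eigenvalue carries at least this much real part.

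\emph{Main obstacle.} The crux of the argument is the uniform-in-$N$ estimate
\[
|g_{k,\pm}(\lambda)|\;\le\;\gamma\sum_{(k',\pm')\neq(k,\pm)}\frac{|V_{k'}^{\pm'}(1)|^{2}}{|\lambda-i\mu_{k'}^{\pm'}|}\;\lesssim\;1
\]
needed for the Rouché step. Near the spectral edges of $B_{[N]}$ the consecutive gaps $|i\mu_{k}^{\pm}-i\mu_{k+1}^{\pm}|$ shrink like $1/N^{2}$, which a priori threatens the bound. It is saved by the matching smallness $|V_{k}^{\pm}(1)|^{2}\sim k^{2}/N^{3}$ of the residues near the edges, so that each residue-over-gap ratio stays bounded. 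Establishing the bound rigorously requires splitting the sum into edge and bulk contributions, exploiting the explicit sine expressions for $v_{k}(1)$ and the mean-value estimate on $\sqrt{B_{0}^{2}+\lambda_{j}}-\sqrt{B_{0}^{2}+\lambda_{k}}$.
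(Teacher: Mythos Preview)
Your overall architecture matches the paper's: reduce via Proposition~\ref{prop:Wigner} to the scalar equation $W_{\{1\}}(\lambda)=-i$, localise near each pole by Rouch\'e, and extract the real-part bound from the resulting fixed-point relation. Your decomposition $W_{\{1\}}=f_{k,\pm}+g_{k,\pm}$ (singular part plus background) is slightly different from the paper's---there one multiplies the shifted equation by $\lambda$ and expands each $(\lambda-\kappa)^{-1}$ geometrically, obtaining a linear $f$ and a $g$ that carries the residue---but the two organisations are equivalent, and your fixed-point reading of \eqref{eq:really} is in fact a bit cleaner than the paper's contradiction argument.

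There is, however, a genuine gap in your treatment of the spectral gap for \emph{arbitrary} $\gamma$. Your lower bound ``every eigenvalue attached to a pole $i\mu_{k}^{\pm}$ has $y\sim\gamma|V_{k}^{\pm}(1)|^{2}$'' presupposes the Rouch\'e localisation, which you only establish for small $\gamma$; once $\gamma$ is not small, eigenvalues need no longer sit near individual poles (the paper's Figure~\ref{fig:my_label} shows this explicitly). The imaginary-part identity $\gamma y\sum_{k,\pm}|V_{k}^{\pm}(1)|^{2}/((x-i\mu_{k}^{\pm})^{2}+y^{2})=1$ alone does not give $y\gtrsim N^{-3}$: if $y$ is tiny then $x$ is forced close to some $i\mu_{k_{0}}$, but the identity yields only $y\le\gamma|V_{k_{0}}^{\pm}(1)|^{2}$, the wrong direction. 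The paper closes this by a separate contradiction: assuming $y=o(N^{-3})$, one shifts the real parts, uses the \emph{second} key estimate $\sum_{k\neq k_{0}}|V_{k}^{\pm}(1)|^{2}/\kappa_{k}^{2}=\mathcal{O}(N)$ (which you do not mention) to kill the background contribution to $\Im W_{\{1\}}$, and then invokes the inner-annulus exclusion---which \emph{does} hold for all $\gamma$ with $\varepsilon=\varepsilon(\gamma)$---to kill the $k_{0}$ term. Both ingredients are missing from your sketch.

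On the ``main obstacle'' you flag: you are right that the uniform bound on $g_{k,\pm}$ is the heart of the matter, and your residue-over-gap heuristic is the correct intuition. But the actual proof is substantially more than a bulk/edge split; the paper converts the sum to an integral, applies an explicit partial-fraction decomposition of $\sin^{2}(\pi x)/(\sin^{2}(\pi x/2)-\zeta)$, and evaluates the near-singular pieces via $\operatorname{arctanh}/\operatorname{arccoth}$ cancellations to get the $\mathcal{O}(1)$ bound uniformly over the pole index. The companion estimate $\sum_{k\neq k_{0}}|V_{k}^{\pm}(1)|^{2}/\kappa_{k}^{2}=\mathcal{O}(N)$, needed both for Rouch\'e on the inner circle and for the arbitrary-$\gamma$ lower bound, requires a second, analogous computation. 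Without these two explicit integral estimates the plan remains a plan.
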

\begin{proof}
In our proof we shall focus on Dirichlet boundary conditions. The argument for Neumann boundary conditions is the same but uses the corresponding explicit eigendecomposition of the discrete Neumann Laplacian.
By \eqref{eq:spectrum} it suffices to study $A_{\Gamma}$ instead of $M_{[N]}.$
We first need to ensure that the difference between the eigenvalues of $A$ is lower-bounded. We define for fixed $i \in [N]$ and sign $s \in \{\pm\}$ the differences $\kappa_j^{\pm} =-i(\mu_j^{\pm} - \mu_i^s$) and we estimate for $(\pm,j) \neq (s,i)$
\begin{equation}
\begin{split}
\label{eq:kappa_estm}
|\kappa_j^{\pm}| &\gtrsim 
|\mu_j^{\pm} -\mu_{j+1}^{\pm}| \gtrsim  |(\mu_j^{\pm})^2 -(\mu_{j+1}^{\pm})^2| \\ & 
= \frac{1}{m}\left\vert  \frac{B_0^2}{m}+ 4 \xi \sin^2\left( \tfrac{\pi j}{2(N+1)}\right) + \eta+\frac{B_0^2}{2} - \left(\frac{B_0^2}{m} + 4
\xi \sin^2\left( \tfrac{\pi (j+1)}{2(N+1)}\right) + \eta+\frac{B_0^2}{2} \right) \right\vert \\ & = 
\left\vert  \frac{4\xi}{m} \Big(\sin^2\left( \tfrac{\pi j}{2(N+1)}\right) -  \sin^2\left( \tfrac{\pi (j+1)}{2(N+1)}\right) \Big)\right\vert.
\end{split}
\end{equation}
In particular, the difference of the eigenvalues $\kappa_j^{\pm}$ does not decay faster than the squared norm of the first entry of the eigenvectors. By this we mean that since $\vert v_j(1) \vert^2 = \frac{2}{N+1} \sin\Big(\frac{\pi j}{N+1} \Big)^2 $, we find that $$\vert v_j(1) \vert^2 \lesssim \frac{ m \vert \lambda_{j}-\lambda_{j+1}  \vert}{\xi},\ \text{ as }$$ 
\begin{equation}
\begin{split}
\label{eq:uniformly_bounded}
\inf_{x \in [-\pi,\pi]} \lim_{N \to \infty} \frac{2\xi(N+1)\big\vert \sin^2\big(\tfrac{x}{2}+\frac{1}{2(N+1)}\big) - \sin^2(\tfrac{x}{2}) \big\vert}{m\sin(x)^2}=\inf_{x \in [-\pi,\pi]} \frac{\xi}{m\vert \sin(x)\vert} = \frac{\xi}{m}.
\end{split}
\end{equation}

\smallsection{Proof of \eqref{eq:annuli}:} 
Now we restrict to $F=\{1\}$ with friction constant $\gamma$. Since we want to study the solutions of the equation  $W_{F}(\lambda) + i =0 $ , we write
$$ W_{F}(\lambda) = \gamma \sum_{\pm,j} (\lambda - \mu_j^{\pm})^{-1} |V_j^{\pm}(1)|^2, $$
where $(\mu_j^{\pm},V_j^{\pm})$ is the eigensystem of $iA.$

We then localise our eigenproblem to the $i$-th eigenvalue with sign $s$, i.e. define $R_{F}(\lambda)=W_{F}(\lambda+\mu_i^{s})$. We also 
use the expansion
$$(\lambda-\mu)^{-1} = -\mu^{-1}\sum_{n\ge 0}(\lambda \mu^{-1})^n = -\mu^{-1}-\mu^{-2}\lambda-\mu^{-2}\lambda^2(\mu-\lambda)^{-1},$$
so that 
\begin{equation}
    \lambda(R_{F}(\lambda)+i)u = (f(\lambda)+g(\lambda))u
\end{equation}
where in terms of $\kappa_j^{\pm}:=\mu_j^{\pm}-\mu_i^s$
\[ f(\lambda) = \nu \lambda \text{ where } \nu = - i - \gamma \sum_{\pm,j \setminus\{s,i\}} \frac{\vert V_j^{\pm}(1)\vert^2}{ \kappa_j^{\pm} }\]
and 
\[ g(\lambda) =\gamma\Bigg(\vert V^{s}_i(1)\vert^2-\sum_{\pm,j \setminus\{s,i\}} \frac{\vert V_j^{\pm}(1)\vert^2 \lambda^2}{(\kappa_j^{\pm})^2}-\sum_{\pm,j \setminus\{s,i\}} \frac{\vert V_j^{\pm}(1)\vert^2 \lambda^2}{(\kappa_j^{\pm})^2(\kappa_j^{\pm}/\lambda-1)}\Bigg).\]  

We then notice that since $\kappa^{\pm}_j$ is uniformly bounded away from zero when $\pm \neq s$, we find
\begin{align} \label{eq:est1_term in f}
\sum_{\pm,j \setminus\{s,i\}} \frac{\vert V_j^{\pm}(1)\vert^2}{ \kappa_j^{\pm} }  = \mathcal O(1) +\sum_{j \neq i} \frac{\vert V_j^{s}(1)\vert^2}{ \kappa_j^{s} }, 
\end{align} 
since $\vert V_j^{\pm}(1)\vert^2 = \mathcal O(1/N)$ independent of $j$.
By symmetry, we may assume that $s=+$. Then given that  $|\kappa_j^{+}| =\frac{| (\mu_j^{+})^2-(\mu_i^+)^2|}{|\mu_j^{+}+\mu_i^+|} \sim | (\mu_j^{+})^2-(\mu_i^+)^2|$, where we used that $|\mu_j^{+}+\mu_i^+|$ is bounded uniformly from above and below, getting rid of the square roots on the eigenvalues. Thus,
we choose $x_*=\frac{j}{N+1}$ such that $\zeta:=\sin(\pi x_{*}/2)^2 $, where we write $\zeta$ itself as $\zeta =: \frac{a}{4\xi}$ for a fixed constant $a \in (0,4\zeta)$. We also introduce $x_{\downarrow}:=x_*-\frac{1}{N}$ and $x_{\uparrow}:=x_*+\frac{1}{N}.$
We remark that since one can split the interval $[0,1]$ into subintervals on which the function $h(u)=\frac{\sin(\pi u)^2}{4\xi\sin(\pi u/2)^2-a}$ is monotonous, by Appendix \ref{lemm:Riemann} we can approximate the sum by the Riemann integrals as follows:

\[\begin{split} \sum_{j \neq i} \frac{\vert V_j^{+}(1)\vert^2}{ \kappa_j^{+} }&\sim \int_{0}^{x_{\downarrow}} +\int_{x_{\uparrow}}^1\frac{\sin(\pi x)^2}{4\xi\sin(\pi x/2)^2-a} \ dx +  \frac{\vert V_{i-1}^{+}(1)\vert^2}{\kappa_{i-1}^+} +\frac{\vert V_{i+1}^{+}(1)\vert^2}{\kappa_{i+1}^+} \end{split}.\]
The last two terms are bounded by the first argument of the proof. 

We then simplify
\[\frac{\sin(\pi x)^2}{\sin(\pi x/2)^2-\frac{a}{4\xi}} =\frac{4\sin(\pi x/2)^2(1-\sin(\pi x/2)^2)}{\sin(\pi x/2)^2-\frac{a}{4\xi}} = \frac{4t^2(1-t^2)}{t^2-\zeta}\]
with $t:=\sin(\pi x/2).$ 
Using partial fraction decomposition, we then find 
\begin{equation}
\label{eq:sum}
\frac{4t^2(1-t^2)}{t^2-\zeta} = -4t^2 -4(\zeta-1) + \frac{4(\zeta-\zeta^2)}{t^2-\zeta}. 
\end{equation}

For the third term, we observe that 
\[\int_0^{x_{\downarrow}} \frac{1}{\sin(\pi x/2)^2-\zeta} \ dx = -\frac{2 \operatorname{arctanh}\Big(\sqrt{\frac{1-\zeta}{\zeta}} \tan\Big(\frac{\pi x_{\downarrow}}{2} \Big) 
\Big)}{\pi\sqrt{\zeta-\zeta^2}} \]
as well as
\[\int_{x_{\uparrow}}^{1} \frac{1}{\sin(\pi x/2)^2-\zeta} \ dx = \frac{2 \operatorname{arccoth}\Big(\sqrt{\frac{1-\zeta}{\zeta}} \tan\Big(\frac{\pi x_{\uparrow}}{2} \Big) 
\Big)}{\pi\sqrt{\zeta-\zeta^2}}. \]
This implies that, using \eqref{eq:sum},
\[
\begin{split}\int_{0}^{x_{\downarrow}} +\int_{x_{\uparrow}}^1&\frac{\sin(\pi x)^2}{\sin(\pi x/2)^2-\zeta} \ dx = O(\vert x_{\uparrow} - x_{\downarrow} \vert) \\
&+ \tfrac{8 \sqrt{\zeta - \zeta^2}}{\pi}\Big(\operatorname{arccoth}\Big(\sqrt{\tfrac{1-\zeta}{\zeta}} \tan\Big(\tfrac{\pi x_{\uparrow}}{2} \Big) 
\Big)-\operatorname{arctanh}\Big(\sqrt{\tfrac{1-\zeta}{\zeta}} \tan\Big(\tfrac{\pi x_{\downarrow}}{2} \Big) 
\Big)\Big).\end{split}
\]

For $\delta>0$ small
\[\operatorname{arccoth}(1+\delta) - \operatorname{arctanh}(1-\delta) = \mathcal O(\delta).\]
Hence, it follows that
\begin{equation}
\label{eq:previous}
\begin{split}
   &\sqrt{\zeta - \zeta^2}\Big(\operatorname{arccoth}\Big(\sqrt{\tfrac{1-\zeta}{\zeta}} \tan\Big(\tfrac{\pi x_{\uparrow}}{2} \Big) 
\Big)-\operatorname{arctanh}\Big(\sqrt{\tfrac{1-\zeta}{\zeta}} \tan\Big(\tfrac{\pi x_{\downarrow}}{2} \Big) 
\Big)\Big)\\
&= \mathcal O\Big(\frac{(1-\zeta)\vert x_{\downarrow}-x_{\uparrow} \vert}{\cos(\pi x_{\uparrow}/2)\cos(\pi x_{\downarrow}/2)}\Big) = \mathcal O\Big(\frac{\cos(\pi x_{*}/2)^2 \vert x_{\uparrow}-x_{\downarrow}\vert}{\cos(\pi x_{\uparrow}/2)\cos(\pi x_{\downarrow}/2)}\Big)\\
&=\mathcal O(\vert x_{\uparrow}-x_{\downarrow}\vert).\end{split}
\end{equation}
Since $\zeta 
=\sin(\pi x_{*}/2)^2,$ it follows that
\[\sum_{j \neq i} \frac{\vert V_j^{+}(1)\vert^2}{ \kappa_j^{+} }= \mathcal O(1).\] 
To study the terms appearing in the polynomial $g$, we estimate
\begin{align} \label{eq:est2_term in g}
\begin{split} \sum_{j \neq i} \frac{\vert V_j^{+}(1)\vert^2}{ (\kappa_j^{+})^2 }&\sim \int_0^{x_{\downarrow}} +\int_{x_{\uparrow}}^1\frac{\sin(\pi x)^2}{(\sin(\pi x/2)^2-\zeta)^2} \ dx + \frac{\vert V_{i-1}^{+}(1)\vert^2}{(\kappa_{i-1}^+)^2} +\frac{\vert V_{i+1}^{+}(1)\vert^2}{(\kappa_{i+1}^+)^2} \end{split}.
\end{align} 
The last two terms grow as $\mathcal{O}(N)$ since $$\frac{\vert V_{i-1}^{+}(1)\vert^2}{(\kappa_{i-1}^+)^2} = \frac{\sin^2\left(\frac{\pi (i-1)}{N+1}\right)}{N\left[\sin^2\left(\frac{\pi (i-1)}{2(N+1)}\right) - \sin^2\left(\frac{\pi (i-1)}{2(N+1)} + \frac{\pi}{2(N+1)}\right) \right]^2} \sim \mathcal{O}(N).$$

For the integral terms, we recall the partial fraction decomposition 
\begin{equation} \frac{4t^2(1-t^2)}{(t^2-\zeta)^2} = \frac{4(\zeta-\zeta^2)}{(t^2-\zeta)^2}- \frac{4(2\zeta-1)}{t^2-\zeta} - 4.
\end{equation}
By using the previous computation \eqref{eq:previous} and the partial fraction decomposition, we compute
\[\begin{split}\int_0^{x_{\downarrow}} +\int_{x_{\uparrow}}^1\frac{\sin(\pi x)^2}{(\sin(\pi x/2)^2-\zeta)^2} \ dx &=\mathcal O(\vert x_{\uparrow} - x_{\downarrow}\vert) + \mathcal O\Big(\frac{(2\zeta -1)\sqrt{1/\zeta-1} \vert x_{\downarrow}-x_{\uparrow} \vert}{\cos(\pi x_{\uparrow}/2)\cos(\pi x_{\downarrow}/2)}\Big) \\
&+\int_0^{x_{\downarrow}} +\int_{x_{\uparrow}}^1\frac{4(\zeta-\zeta^2)}{(\sin(\pi x/2)^2-\zeta)^2} \ dx.
\end{split}\]
We then find that 
\begin{equation} 
\begin{split}\int_0^{x_{\downarrow}} &+\int_{x_{\uparrow}}^1\frac{\zeta-\zeta^2}{(\sin(\pi x/2)^2-\zeta)^2} \ dx = \frac{1}{2\pi} 
\Big(\frac{\sin(\pi x_{\uparrow})}{\sin(\pi x_{\uparrow}/2)^2-\zeta}-\frac{\sin(\pi x_{\downarrow})}{\sin(\pi x_{\downarrow}/2)^2-\zeta}\Big) \\
+&(2\zeta-1) \frac{\Big(\operatorname{arccoth}\Big(\sqrt{\tfrac{1-\zeta}{\zeta}} \tan\Big(\tfrac{\pi x_{\uparrow}}{2} \Big) 
\Big)-\operatorname{arctanh}\Big(\sqrt{\tfrac{1-\zeta}{\zeta}} \tan\Big(\tfrac{\pi x_{\downarrow}}{2} \Big)\Big)\Big)}{\sqrt{\zeta-\zeta^2}} \\
&= \mathcal O\Big(\vert x_{\uparrow}-x_*\vert^{-1}\Big)+\mathcal O\Big( \frac{(2\zeta-1)\vert x_{\downarrow}-x_{\uparrow}\vert}{\zeta-\zeta^2}\Big)=\mathcal O\Big(\vert x_{\uparrow}-x_*\vert^{-1}\Big).
\end{split}
\end{equation}

The final term in $g$, i.e.
$$-\sum_{\pm,j \setminus\{s,i\}} \frac{\vert V_j^{\pm}(1)\vert^2 \lambda^2}{(\kappa_j^{\pm})^2(\kappa_j^{\pm}/\lambda-1)},$$ we notice that it differs from the second term only by the additional factor $\frac{\kappa_j^{\pm}}{\lambda}-1$ in the denominator and since $\lambda$ is on the circle with radius $c \vert V_i^s(1)\vert^2$ with $c$ small, the term satisfies by \eqref{eq:uniformly_bounded} 
\[ \Big\vert \tfrac{\kappa_j^{\pm}}{\lambda}-1 \Big\vert \ge \mathcal O(1).\]
We shall not discuss the last term in $g$, which is  since by the assumptions on $\lambda$ that we shall impose now, it can be treated just like the second term. For this, it suffices to recall that the distance between eigenvalues is not smaller than the decay rate of the first component of the eigenvector, as shown in \eqref{eq:uniformly_bounded}.


If $\vert \lambda \vert = c\vert V_i^{s}(1)\vert^2,$  for arbitrary $c>0$ and $\gamma(c)>0$ sufficiently small, then 
\begin{equation}
    \label{eq:f>g}
    \vert f(\lambda)\vert >
\vert g(\lambda)\vert
\end{equation}
with $f$ having precisely one zero at zero inside $B(0,\vert \lambda \vert).$
This shows, by Rouch\'e's theorem that there is precisely one eigenvalue, say $\lambda_i^{\pm}$,  in $B(\mu_i^s,c \vert V_i^{s}(1)\vert^2).$ 

If $\vert \lambda \vert =\varepsilon(\gamma) \vert V_i^{s}(1)\vert^2,$ for $\varepsilon(\gamma)$ sufficiently small, then $\vert g(\lambda)\vert >
\vert f(\lambda)\vert$ and $g$ does not have a zero inside $B(0,\vert \lambda \vert).$
This shows, by Rouch\'e's theorem that $|\lambda_i^{\pm}| >\varepsilon(\gamma)\vert V_i^{s}(1)\vert^2 $, so there is no eigenvalue in $B(\mu_i^s,\varepsilon(\gamma)\vert V_i^{s}(1)\vert^2).$ We remark that this lower estimate holds for any fixed $\gamma.$ This shows \eqref{eq:annuli}.

\smallsection{Proof of \eqref{eq:really}:} 
Next, we continue by showing \eqref{eq:really}. 
Let us assume that we have a solution $\lambda$, as above in an annulus, to $f(\lambda)+g(\lambda)=0$ with the property that $
\Im(\lambda) = 
\Re(\lambda)o(1)$, then by taking the imaginary part of  $f(\lambda)+g(\lambda)=0$, we find 
\begin{equation}\label{eq:identity2}
\Re(\lambda) +\Im(\lambda)\sum_{j \neq i} \frac{\vert V_j^{+}(1)\vert^2}{ \kappa_j^{+} } + 
\Im(g(\lambda))=0.
\end{equation}
Now since $\sum_{j \neq i} \frac{\vert V_j^{+}(1)\vert^2}{ \kappa_j^{+} } = \mathcal O(1)$, we have from \eqref{eq:identity2} \begin{equation} 
\label{eq:Real}
\Re(\lambda) (1+o(1)) + \Im(g(\lambda))=0. \end{equation}
This leads to a contradiction. Indeed, using \eqref{eq:est2_term in g}, we find 
\begin{equation} 
\label{eq:term2}
\vert \Im(g(\lambda))\vert = \mathcal{O}\Big(\vert x_{\uparrow}-x_*\vert^{-1}(\vert \Im(\lambda^2)\vert+\sup_j\vert\Im(\lambda^3/(\kappa_j-\lambda))\vert)\Big) 
\end{equation}

To further estimate the second term in \eqref{eq:term2}, we recall that $\vert \Im(\lambda)\vert = \vert \Re(\lambda) \vert o(1)$ as well as the identity $$\Im\Big(\frac{\lambda^3}{\kappa_j-\lambda}\Big)=\frac{\Re(\lambda)^3 \Im(\lambda)}{\Im(\lambda)^2 + z^2} + \frac{3 \Re(\lambda)^2 \Im(\lambda) z}{\Im(\lambda)^2 + z^2} - \frac{3 \Re(\lambda) \Im(\lambda)^3}{\Im(\lambda)^2 + z^2} - \frac{\Im(\lambda)^3 z}{\Im(\lambda)^2 + z^2}$$
with $z = \kappa_j - \Re(\lambda)$, which by \eqref{eq:uniformly_bounded} and the location of $\lambda$, see \eqref{eq:annuli}, satisfies
$$ \vert z \vert \gtrsim \vert x_{\uparrow}-x_*\vert \gtrsim \vert \Re(\lambda)\vert.$$
This implies that 
\[\Big\vert \Im\Big(\frac{\lambda^3}{\kappa_j-\lambda}\Big)\Big\vert = \mathcal O\Big(\vert x_{\uparrow}-x_*\vert^{-1}\vert\Re(\lambda)^2\Im(\lambda)\vert\Big). \]

Combining this with $\Im(\lambda^2)=2\Re(\lambda)\Im(\lambda)$ and our standing assumption $\vert \Im(\lambda)\vert = \vert \Re(\lambda) \vert o(1)$, we find
\begin{equation}
\label{eq:estn}
\begin{split}| \Im(g(\lambda))| &= \mathcal{O}\Big(\vert x_{\uparrow}-x_*\vert^{-1}(\vert \Im(\lambda^2)\vert+\sup_j\vert\Im(\lambda^3/(\kappa_j-\lambda))\vert)\Big) \\
&= \mathcal O\Big(\vert x_{\uparrow}-x_*\vert^{-1}(\vert \Re(\lambda)\Im(\lambda)\vert+\vert x_{\uparrow}-x_*\vert^{-1}\vert\Re(\lambda)^2\Im(\lambda)\vert )\Big) \\
&= \mathcal O(\vert \lambda \vert o(1)).\end{split}
\end{equation}

We conclude from $\vert \Im(\lambda)\vert = \vert \Re(\lambda) \vert o(1)$ and by combining \eqref{eq:Real} with \eqref{eq:estn} 
\[ \vert \lambda \vert = \vert \Re(\lambda)\vert +\vert \Im(\lambda)\vert  = \vert \Re(\lambda)\vert(1+o(1)) = \mathcal O(\vert \lambda \vert o(1)). \]

This, implies that $\lambda=0$ for large $N$ which is a contradiction to the absence of eigenvalues in $B(\mu_i^s,\varepsilon(\gamma) \vert V_i^{s}(1)\vert^2)$ showing \eqref{eq:really}.

\smallsection{Proof of spectral gap:} We finally show that even when dropping the assumption that $\gamma$ is small, the spectral decays like $1/N^3$. The upper bound is straightforward and follows readily from the argument around \eqref{eq:f>g} by choosing $\vert \lambda \vert = c \vert V_1^+(1)\vert^2$ and $c>0$ large enough. This choice does not impose any restrictions on $\gamma(c)>0$ to be small to ensure $\vert f(\lambda)\vert > \vert g(\lambda)\vert.$

For the lower bound, we proceed as follows. Since $W_F$, see \eqref{eq:WF}, depends only on the difference of $\lambda-i\mu_j^{\pm}$ we can shifted real quantities $i\mu_j^{\pm} \in \mathbb R$ by the real part of $\lambda$ to obtain new quantities $\nu_j^{\pm} \in \mathbb R$ and assume that any such solution $\lambda$ to $W_F(\lambda)+i=0$ is purely imaginary. We shall write $\lambda_{\text{imag}} \in i \mathbb R$ for it. In order to obtain a contradiction we assume that there is a solution $\lambda_{\text{imag}}$ purely imaginary that decays faster than $N^{-3}$, i.e. $\mathcal{O}(\lambda) = N^{-3}R_N^{-1}$ where $R_N \to \infty.$ Then
$$W_{F}(\lambda_{\text{imag}} = \tfrac{i}{N^3R_N}) = \gamma \sum_{\pm, j} (\tfrac{i}{N^3R_N}-\nu_j^{\pm})^{-1} \vert V_j^{\pm}(1)\vert^2=-i.$$

We thus have that by taking the imaginary part
\[ \Im(W_{F}(\lambda_{\text{imag}} = \tfrac{i}{N^3R_N})) = -\gamma \sum_{\pm,j} \frac{ \vert V_j^{\pm}(1)\vert^2 / (N^3 R_N)}{N^{-6}R_N^{-2}+ (\nu_j^{\pm})^2}. \]

We now want to sure that $\Im(W_{F}(\lambda_{\text{imag}} = \tfrac{i}{N^3R_N})) = o(1)$ as $N \to \infty$ to obtain a contradiction to $W_{F}(\lambda_{\text{imag}} = \tfrac{i}{N^3R_N}) = -i.$

As the argument is symmetric in the signs, we shall just focus on $s=+$.
There is one $\nu^+_{j_0}$ of smallest modulus (we allow $j_0$ to depend on $N$).
Let us now assume that for $\varepsilon=\varepsilon(\gamma)>0$ sufficiently small, there is $N_0(\varepsilon)$ such that for all $N \ge N_0(\varepsilon)$ we have $\vert \nu_{j_0}^+\vert \ge \varepsilon \vert V_{j_0}^{+}(1)\vert^2.$

Writing then
\[  \Im(W_{F}(\lambda_{\text{imag}} = \tfrac{i}{N^3R_N}))  = -\frac{\gamma}{N^3 R_N} \Bigg(\frac{ \vert V_{j_0}^{+}(1)\vert^2 }{N^{-6}R_N^{-2}+ (\nu_{j_0}^{+})^2} + \sum_{j \neq j_0} \frac{ \vert V_{j}^{+}(1)\vert^2}{N^{-6}R_N^{-2}+ (\nu_{j}^{+})^2}\Bigg),    \]
we see that the final sum can be estimated similar to $\sum_{j \neq j_0} \frac{\vert V_j^{+}(1)\vert^2}{ (\kappa_j^{+})^2 }$ such that the second term with prefactor $\frac{1}{N^3 R_N}$ tends to zero as $N$ tends to infinity.

Since we assumed that $\nu_{j_0}^{+} \notin B(0,\varepsilon\vert V_{j_0}^{+}(1)\vert^2),$ the first term satisfies 
\[ \frac{ \vert V_{j_0}^{+}(1)\vert^2 / (N^3 R_N)}{N^{-6}R_N^{-2}+ (\nu_{j_0}^{+})^2} = \mathcal O(1/(\varepsilon^2R_N)) = o(1).\]

Hence, we cannot have $W_F(\lambda_{\text{imag}})=-i$ and thus, we obtain a contradiction to our assumption implying $\nu_{j_0}^{+} \in B(0,o(1)\vert V_{j_0}^{+}(1)\vert^2)$. Reverting back from $\lambda_{\text{imag}}$ to $\lambda$ and $\nu_{j_0}$ to $\mu_{j_0}$, this condition implies that $\lambda \in B(\mu_{j_0}^{+},o(1) \vert V_{j_0}^{+}(1)\vert^2), $ but this has been ruled out before by showing \eqref{eq:eigenvalues}.

\end{proof}

 \section{Next-to-nearest-neighbour interactions}
 In this section we include next-to-nearest-neighbour interactions for the chain of oscillators. 
 Since we already discussed how to include the effect of a magnetic field in the previous section, we shall not consider it here anymore. Thus, it is also no longer necessary to consider the oscillators as particles in two dimensions, where the magnetic field couples positions and momenta in different directions. Thus, without loss of generality (separation of variables), we may restrict us to oscillators confined to one spatial dimension, i.e. position and momentum variables of the oscillators are one-dimensional variables, respectively.
 
 We recall that the generator of the dynamics is given by
\begin{equation}
\label{eq:longerrange}
 \mathcal{L} f(z)  = - \langle z,  M_{[N]}   \nabla_z f(z) \rangle + \langle \nabla_p ,  \Gamma \textbf{m}_{[N]}  \vartheta \nabla_p  f(z) \rangle
 \end{equation}
where $M_{[N]}  \in \mathbb C^{2N \times 2N}$ and $\Gamma \in \mathbb R^{N \times N}$, the matrix containing the friction parameters $\gamma_1, \gamma_N$, are matrices of the form
\begin{equation}
\begin{split}
M_{[N]}  &:= \left(\begin{matrix} \Gamma & -\textbf{m}_{[N]}^{-1} \\ B_{[N]}  & 0 \end{matrix}\right) \text{ and }
\Gamma = \text{diag}(\gamma_{1},0, \dots, 0, \gamma_{N} ).
\end{split}
\end{equation}
The matrix $\vartheta$ containing the temperatures is of the form $$\vartheta = \text{diag}(\beta_{1}^{-1},0,\dots,0, \beta_N^{-1}).$$

To specify the interaction matrix $B_{[N]}$ we shall specify our oscillator potential including both nearest neighbour and next-to nearest neighbour interactions
\begin{equation}
\label{eq:potential}
V(q) = \frac{1}{2}\sum_{i=1}^{N-1} (q_i-q_{i+1})^2 + \frac{\omega}{2} \sum_{i=1}^{N-2}(q_i-q_{i+2})^2 + \frac{\eta}{2}\sum_{i=1}^N q_i^2,
\end{equation}
with $\omega>0.$

As we shall see, when introducing long range interactions, unlike in the case of only nearest neighbour interaction, the boundary conditions do affect the behaviour of the spectral gap significantly.

\smallsection{Neumann boundary conditions}
When the heat flux at the terminal particles of the oscillator chain is zero, this corresponds to Neumann boundary conditions.
The matrix associated with the quadratic form of the potential \eqref{eq:potential} is given by
$$ \mathscr V_{[N]} = \begin{pmatrix}1 & -1 &  & & & & \\
-1 & 2 & \ddots & & & &   \\
   & \ddots & \ddots & \ddots & & &   \\
      & & \ddots & \ddots & \ddots &  &  \\
            & && \ddots & \ddots & \ddots &  &  \\
            & & & &-1 & 2 & -1   \\
                 & & & &&-1 & 1   \\
 \end{pmatrix} + \omega  \begin{pmatrix}1 &  &-1  & & &  &\\
 & 1 & & \ddots & &  & \\
-1   & & 2 & & \ddots &  & \\
      & \ddots& & \ddots & &  \ddots&    \\
            & &\ddots &  & 2 & &-1    \\
            & & &\ddots  &  & 1 &    \\
                 & & & &-1& & 1   \\
 \end{pmatrix} + \eta \operatorname{Id}. $$
 
 Hence, it follows that $V(q) = \langle q, \mathscr V_{[N]} q \rangle.$ By construction $\mathscr V_{[N]} =\mathscr V_1+\omega \mathscr V_2 + \eta \operatorname{Id}$, where both $\mathscr V_1 = \Delta_{[N]}^N$ and $\mathscr V_2$ are positive semi-definite matrices.
 
  We observe that we can write 
 \[ \mathscr V_2 = 4 \mathscr V_1 - \mathscr V_1^2 +\operatorname{diag}(\Lambda, 0,\cdots , 0, \Lambda)\text{ where }\Lambda= \begin{pmatrix} -1& 1 \\ 1 & -1 \end{pmatrix}.\]
 Hence, we have that 
\begin{equation}
\label{eq:Neumann_decomp}
 \mathscr V_{[N]} = \mathscr V_1 + 4\omega \mathscr V_1 + \eta \operatorname{Id}- \omega \mathscr V_1^2+ \omega \operatorname{diag}(\Lambda,0,\cdots, 0, \Lambda) =: T_\omega + \omega P 
 \end{equation}
 with $P= \operatorname{diag}(\Lambda,0,\cdots, 0, \Lambda).$
 
This implies that $\mathscr V_{[N]}$ has an almost explicit eigensystem, the one of the self-adjoint operator $T_{\omega}$, up to a rank $2$ perturbation $\operatorname{diag}(\Lambda,0,\cdots, 0, \Lambda).$
Indeed, in terms of $w_1 =  2^{-1/2}(-1,1,0,...,0)^t$ and $w_2 = 2^{-1/2}(0,...,0,-1,1)^t$, we have
\[ \operatorname{diag}(\Lambda,0,\cdots, 0, \Lambda)= -2w_1 \otimes w_1 -2 w_2  \otimes w_2.\]


\smallsection{Dirichlet boundary conditions} By imposing Dirichlet boundary conditions we model terminal particles that are attached to fixed walls. Then the discrete Laplacian that describes the nearest-neighbour interactions is the Dirichlet Laplacian $\mathscr V_1 = \Delta_{[N]}^D$ with next-to nearest-neighbour interactions described by $\mathscr V_2$ $$\mathscr V_1 =  \begin{pmatrix}2 & -1 &  & & & & \\
-1 & 2 & \ddots & & & &   \\
   & \ddots & \ddots & \ddots & & &   \\
      & & \ddots & \ddots & \ddots &  &  \\
            & && \ddots & \ddots & \ddots &  &  \\
            & & & &-1 & 2 & -1   \\
                 & & & &&-1 & 2   \\
 \end{pmatrix} \text{ and }\mathscr V_2 = \begin{pmatrix}1 &  &-1  & & &  &\\
 & 2 & & \ddots & &  & \\
-1   & & 2 & & \ddots &  & \\
      & \ddots& & \ddots & &  \ddots&    \\
            & &\ddots &  & 2 & &-1    \\
            & & &\ddots  &  & 2 &    \\
                 & & & &-1& & 1   \\
 \end{pmatrix}. $$
The interaction matrix that includes the next-to-nearest-neighbour interactions of strength $\omega$ is in analogy with the Neumann case \eqref{eq:Neumann_decomp} given by $T_\omega +P$ where 
 \begin{align} \label{def:T_omega Dirichlet bc}
 T_\omega = (1+4\omega) \mathscr V_1 - \omega \mathscr V_1^2 + \eta \operatorname{Id}\ \text{ and }\ P= \operatorname{diag}(-2\omega, 0, \dots,0,-2\omega).
 \end{align}
 Therefore, also in the Dirichlet case the interaction matrix is still given in terms of powers of $\mathscr V_1$  perturbed by a matrix of rank 2.
\begin{figure}\label{with_or_without_P}
\includegraphics[width=7cm]{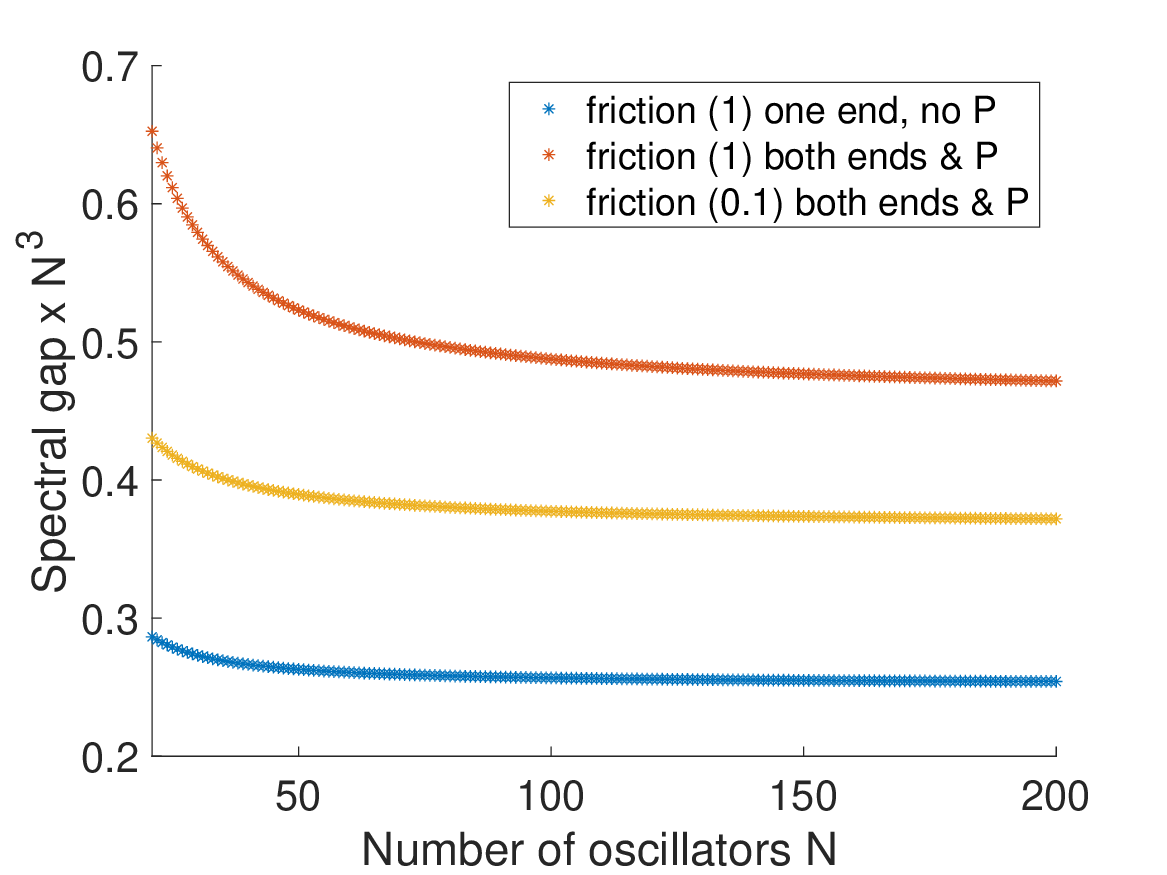}
\includegraphics[width=7cm]{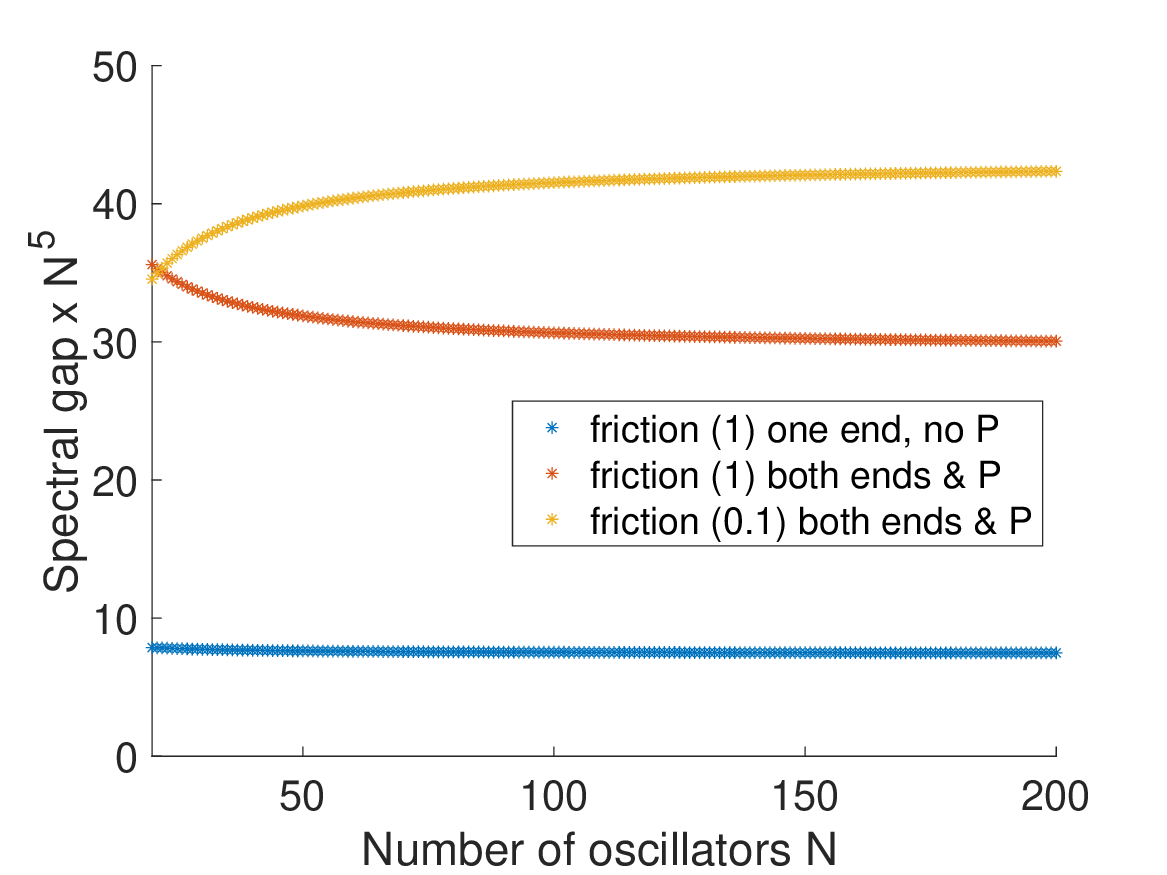}\\
\includegraphics[width=7cm]{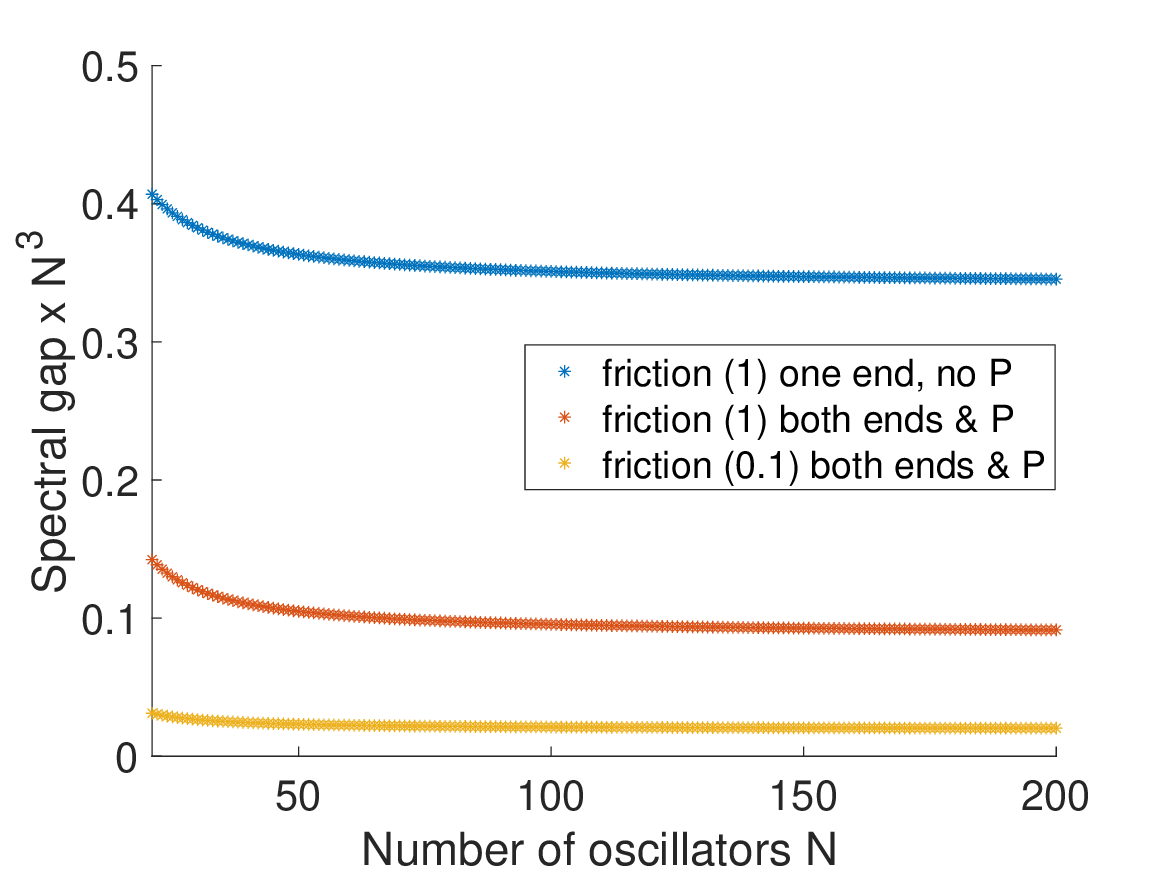}
\includegraphics[width=7cm]{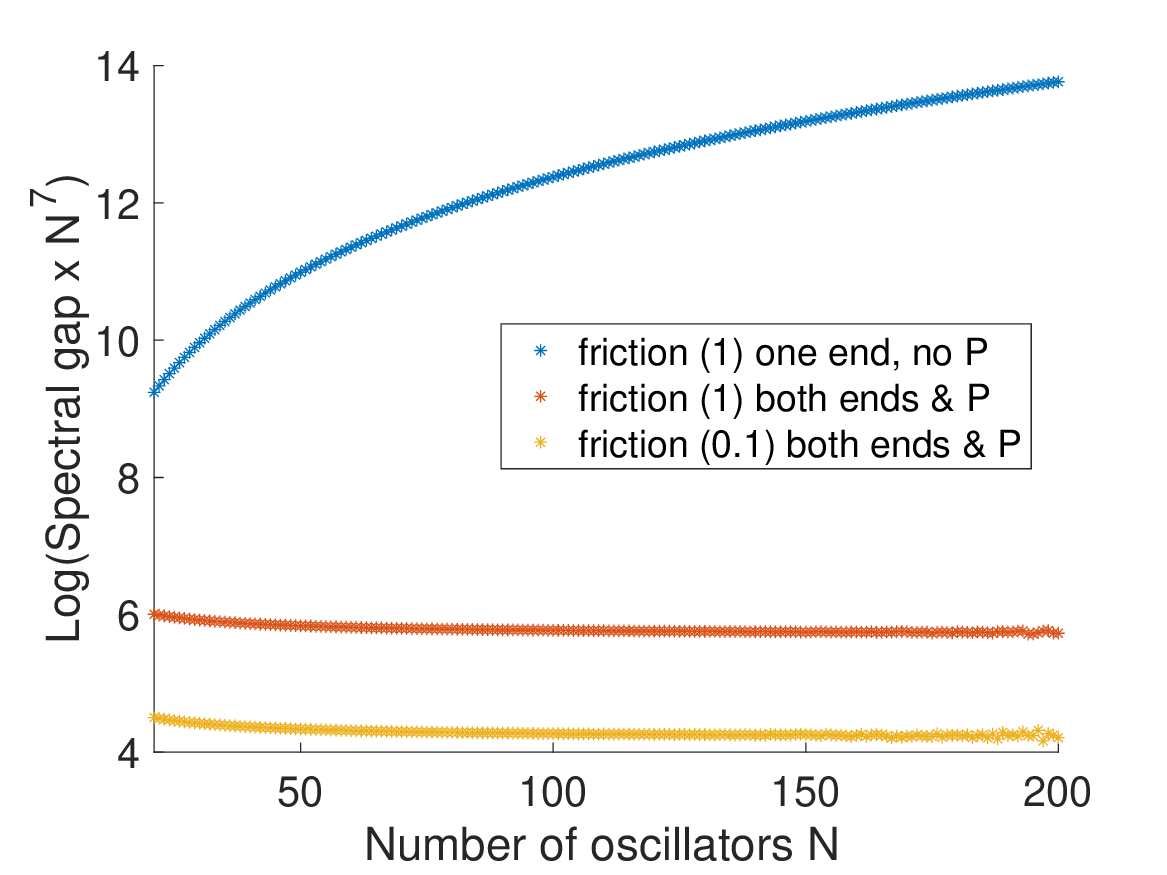}
\caption{Spectral gap (Dirichlet bc(top), Neumann bc(below)) for $\omega = 0.2$ (left) and Spectral gap for $\omega=1/4$ (right). We see the scaling of the spectral gap is exclusively determined by having friction $\gamma $ = value in () at a single end of the chain. The presence of $P$ is irrelevant in the Dirichlet case and essential in the Neumann case.}
\end{figure}

For our subsequent analysis, we also recall that the $j$-th eigenvalue of the discrete Dirichlet Laplacian $\mathscr V_1$ and the corresponding eigenvectors are 
\begin{equation}
\label{eq:eigensys}
 \lambda_j= 4 \sin^2 \left( \frac{\pi j }{2(N+1)} \right) \text{ with } v_{j}(i) = \sqrt{\frac{2}{N+1}} \sin \left(  \frac{ij \pi}{N+1} \right) \quad \text{ where }j \in [N]. 
 \end{equation}

\subsection{Criteria for lack of hypoellipticity under Dirichlet and Neumann boundary conditions} \label{subsec:crit for hypoell}
For both types of boundary conditions, depending on the strength of the next-to nearest neighbour interaction $\omega$, the generator of the chain of oscillators may not be hypoelliptic anymore. As we will see, these values of $\omega$ are different depending on the boundary conditions.

\subsubsection{Dirichlet boundary conditions} We observe that under Dirichlet boundaries there exists a necessary and sufficient condition in order to ensure that the operator is hypoelliptic.

 \begin{prop} 
Let $\eta \geq 0$ and let us assume $T_{\omega}$ with Dirichlet boundary conditions as above such that all eigenvalues of $T_{\omega}$ are simple.  
Then $\mathcal{L}$ is always hypoelliptic, regardless of $\omega$ and there exists a spectral gap. 
\end{prop}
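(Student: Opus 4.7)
The plan is to reduce hypoellipticity of $\mathcal{L}$ to a Kalman--type controllability condition and then to use the simplicity assumption on $T_{\omega}$ to rule out the only obstruction to it. Once hypoellipticity is secured, the existence of a spectral gap will follow from Theorem~\ref{theo:metafune} combined with a standard Lyapunov argument.

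First, by the H\"ormander/Kalman criterion, $\mathcal{L}$ is hypoelliptic if and only if the only right eigenvector of $M_{[N]}$ orthogonal to the single noise direction $e_{p_{1}}$ is zero (recall $F=\{1\}$). Suppose there is $w=(u,v)^{T}\in\CC^{N}\times\CC^{N}$ with $M_{[N]}w=\mu w$ and $u_{1}=0$. The block form $M_{[N]}=\begin{pmatrix}\Gamma&-m^{-1}\\ B_{[N]}&0\end{pmatrix}$ with $\Gamma=\operatorname{diag}(\gamma,0,\dots,0)$ together with $u_{1}=0$ annihilates $\Gamma u$, so the top block equation forces $v=-m\mu u$ and the bottom block equation then yields $B_{[N]}u=-m\mu^{2}u$. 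Since $P=-2\omega e_{1}e_{1}^{T}$ is supported only at the $(1,1)$-entry, $u_{1}=0$ forces $Pu=0$ and hence $T_{\omega}u=-m\mu^{2}u$, so $u$ is an eigenvector of $T_{\omega}$ with vanishing first component.

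The simplicity hypothesis now enters decisively. As $T_{\omega}=(1+4\omega)\mathscr{V}_{1}-\omega\mathscr{V}_{1}^{2}+\eta I$ is a polynomial in the Dirichlet Laplacian $\mathscr{V}_{1}$, each simple eigenvalue $\tau_{j}=(1+4\omega)\lambda_{j}-\omega\lambda_{j}^{2}+\eta$ of $T_{\omega}$ has its eigenspace spanned by the associated sine vector $v_{j}$ from \eqref{eq:eigensys}. Under our hypothesis, every eigenvector of $T_{\omega}$ is therefore proportional to some $v_{j}$; but $v_{j}(1)=\sqrt{2/(N+1)}\,\sin(j\pi/(N+1))\neq 0$ for all $j\in[N]$, contradicting $u_{1}=0$ together with $u\neq 0$. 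Thus no such $w$ exists and $\mathcal{L}$ is hypoelliptic.

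For the spectral gap I would then apply Theorem~\ref{theo:metafune}, which reduces matters to showing $\Re(\mu)>0$ for every $\mu\in\Spec(M_{[N]})$. This I would establish by a Lyapunov argument: the Hamiltonian $H$ is a nonnegative quadratic form whose dissipation rate along the deterministic flow $\dot z=-M_{[N]}^{T}z$ equals $-\gamma p_{1}^{2}\le 0$, and the Kalman condition just proved rules out non-trivial invariant sets of the linearised flow on which $p_{1}\equiv 0$, so LaSalle's invariance principle yields asymptotic stability and hence $g(N)=\inf_{\mu\in\Spec(M_{[N]})}\Re(\mu)>0$. The crucial and most delicate step is the spectral reduction above: what makes it work is that $T_{\omega}$ is a \emph{polynomial} in $\mathscr{V}_{1}$ whose sine eigenvectors never vanish at site $1$, so simplicity of $\Spec(T_{\omega})$ forbids any eigenvector with zero first component; had some $T_{\omega}$-eigenvalue multiplicity two, suitable superpositions of the two sine vectors in that eigenspace would vanish at site $1$, which is precisely the mechanism responsible for the loss of hypoellipticity observed for the exceptional $\omega$'s in Theorem~\ref{theo:longer_range}(i).
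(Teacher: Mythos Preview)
Your argument is correct and follows essentially the same route as the paper's own proof: both reduce hypoellipticity to the (non)existence of an eigenvector of $B_{[N]}=T_{\omega}+P$ whose friction-site entries vanish, then observe that such a vector is automatically an eigenvector of $T_{\omega}$ (since $P$ is supported on those very entries), and finally invoke the simplicity hypothesis together with the polynomial relation $T_{\omega}=p(\mathscr V_{1})$ to conclude that any eigenvector of $T_{\omega}$ is a Dirichlet sine vector, which never vanishes at the endpoints. The paper carries this out via the covariance-matrix criterion $\det Q_{t}>0$ after passing to the similar matrix $\Omega_{[N]}$, whereas you use the equivalent PBH/Kalman test directly on $M_{[N]}$; these are the same argument in different clothing.

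Two small remarks. First, at this point in the paper the setting is still $F=\{1,N\}$ with the rank-two $P=\operatorname{diag}(-2\omega,0,\dots,0,-2\omega)$; you work under the later Assumption~\ref{ass:assump} ($F=\{1\}$, rank-one $P$). Your proof adapts verbatim to the two-sided case by requiring $u_{1}=u_{N}=0$ instead of only $u_{1}=0$, so this is a cosmetic discrepancy rather than a gap. Second, you supply an explicit Lyapunov/LaSalle argument for $\Re\Spec(M_{[N]})>0$ to deduce the spectral gap from Theorem~\ref{theo:metafune}, which the paper leaves implicit; this addition is correct and welcome.
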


\begin{proof} Taking the masses equal to $1$ for simplicity, we may consider instead of $M_{[N]}$ the similar matrix $\Omega_{[N]}:= \begin{pmatrix} \Gamma &- B_{[N]}^{1/2} \\ B_{[N]}^{1/2} & 0  
\end{pmatrix}$ with $B_{[N]}^{1/2} = \sqrt{T_\omega + P}$. The similarity is easily verified by noticing that 
\begin{equation}
\label{eq:similar}
 M_{[N]} = \operatorname{diag}(I,B_{[N]}^{1/2}) \Omega_{[N]}\operatorname{diag}(I,B_{[N]}^{-1/2}).
 \end{equation}

Now we consider $\lambda$ eigenvalue to $\Omega_{[N]}$ with  eigenvector $v=(u,i u)$  being of the form $u=(0,u_2,\dots,u_{N-1},0)$. Then $u$ is an eigenvector to $ (T_\omega + P)$ with eigenvalue $-\lambda^2$. If such an eigenvector exists then $Q_t v:= \int_0^t e^{-s\Omega_{[N]}} \operatorname{diag}(\Gamma \vartheta,0) e^{-s\Omega_{[N]}^*} v ds = \int_0^t e^{-s\Omega_{[N]}} \operatorname{diag}(\Gamma \vartheta,0) e^{-s\lambda} v ds = \int_0^t e^{-s\Omega_{[N]}} \operatorname{diag}(\Gamma \vartheta,0) v e^{-s\lambda} =0$, which implies that $\mathcal{L}$ is not hypoelliptic as $\ker(Q_t) \neq \{0\}.$ Indeed this makes the hypoellipticity condition on the invertibility of the covariant matrix $Q_t$, cf subsect. \ref{subsect: hypoell_OU} to fail. But this can not happen as this implies $ (T_\omega + P + \lambda^2)u=0$ or, since $P$ only acts on the first and the last entry, that $u$ with first and last entry equal to $0$ is an eigenvector of $T_{\omega}$ and thus of the Dirichlet Laplacian, due to the simplicity condition. Such an eigenvector to $\Delta_{[N]}^D$ does not exist and therefore $\mathcal{L}$ is hypoelliptic. 
\end{proof}
In fact, the above simplicity condition is necessary and sufficient as the following Proposition shows. 
 \begin{prop}
 Let $\eta\geq 0$ and $T_{\omega}$ with Dirichlet boundary conditions as above.  For an explicit dense and countable set $\Omega \subset (-\infty, -1/4) \cup (1/4,\infty)$, there exists a number of oscillators $N(\omega)$, such that $T_{\omega}$ has degenerate eigenvalues. In particular, the generator of this chain of finite size $N(\omega)$ is hypoelliptic if and only if $\vert \omega \vert \le 1/4.$
 \end{prop}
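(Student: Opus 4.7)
The plan is to combine the polynomial structure of $T_\omega$ with the explicit Dirichlet eigenbasis. First, I would observe that $T_\omega = \phi(\mathscr V_1)$ for the quadratic $\phi(x) := -\omega x^2 + (1+4\omega)x + \eta$, so $T_\omega$ is diagonal in the Dirichlet Laplacian basis with eigenvalues $\phi(\lambda_j)$, where $\lambda_j = 4 \sin^2(\pi j/(2(N+1)))$. Since $\phi$ is quadratic, two indices $j\neq k$ give coincident eigenvalues exactly when
\[
\lambda_j + \lambda_k \;=\; 4 + \tfrac{1}{\omega}.
\]
Because $\lambda_\ell \in (0,4)$ strictly, the left-hand side lies in $(0,8)$, whereas the right-hand side lies outside $(0,8)$ whenever $|\omega| \le 1/4$. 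Hence $T_\omega$ has only simple eigenvalues for every $N$ in this regime, and the previous proposition already yields hypoellipticity of $\mathcal{L}$.

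For $|\omega|>1/4$, I would define
\[
\Omega := \bigl\{ (\lambda_j+\lambda_k -4)^{-1} : N \in \mathbb N,\ j,k \in [N],\ j\neq k,\ j+k \text{ even}\bigr\},
\]
so that each $\omega \in \Omega$ comes with an explicit $N(\omega)$ and a same-parity pair $(j,k)$ realizing a two-fold degeneracy of $T_\omega$; the same-parity restriction is essential for the next step. Using the identity $\lambda_j+\lambda_k = 4 - 2\cos(\pi j/(N+1)) - 2\cos(\pi k/(N+1))$, density of $\Omega$ in $(-\infty,-1/4)\cup(1/4,\infty)$ reduces to density of the rational pairs $(j/(N+1),k/(N+1))$ with $j+k$ even in $(0,1)^2$, which is elementary.

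To deduce the failure of hypoellipticity, for $\omega \in \Omega$ with associated $(j,k,N(\omega))$ I would form the explicit element
\[ u := v_k(1)\, v_j - v_j(1)\, v_k \]
in the two-dimensional $T_\omega$-eigenspace $\operatorname{span}(v_j,v_k)$, where $v_\ell$ are the normalized Dirichlet eigenvectors. The identity $v_\ell(N) = (-1)^{\ell+1} v_\ell(1)$ combined with $j \equiv k \pmod 2$ forces $u(1) = u(N)=0$; moreover $u \neq 0$ since $v_j$ and $v_k$ are linearly independent. Because $P = \operatorname{diag}(-2\omega,0,\dots,0,-2\omega)$ acts only on the terminal sites, $Pu = 0$, and $u$ is therefore also an eigenvector of $B_{[N]} = T_\omega + P$ with eigenvalue $\mu := \phi(\lambda_j)$. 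Placing this $u$ into $v := (u, iu) \in \mathbb{C}^{2N}$, the argument of the previous proposition applies verbatim in reverse: $v$ is an eigenvector of both $\Omega_{[N]}$ and $\Omega_{[N]}^*$, the boundary condition $u(1)=u(N)=0$ gives $\Gamma u = 0$, and hence the covariance matrix satisfies $Q_t v = 0$ for every $t>0$, so hypoellipticity is destroyed. This proves the claimed \emph{if and only if} statement.

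The main obstacle is conceptual rather than computational: one must arrange for a two-fold degeneracy of $T_\omega$ whose associated eigenspace admits a nontrivial linear combination killing \emph{both} boundary coordinates simultaneously. A generic two-fold degeneracy with opposite-parity indices leaves the two boundary conditions $u(1)=0$ and $u(N)=0$ linearly independent on the two-dimensional eigenspace $\operatorname{span}(v_j,v_k)$, which would yield only $u=0$. Thus one has to both impose the same-parity restriction in the definition of $\Omega$ and then verify that this restricted countable set remains dense in $(-\infty,-1/4) \cup (1/4,\infty)$.
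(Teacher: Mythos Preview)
Your argument follows the same route as the paper: derive the degeneracy condition $\omega=(\lambda_j+\lambda_k-4)^{-1}$, form the combination $v_k(1)v_j-v_j(1)v_k$ to kill the first boundary coordinate, and deduce density from the rationals. The one substantive difference is your handling of the \emph{second} boundary coordinate. The paper asserts that ``all Dirichlet eigenvectors are even'' and concludes $w(N)=0$ immediately; but since in fact $v_\ell(N)=(-1)^{\ell+1}v_\ell(1)$, this only works when $j$ and $k$ have the same parity. Your explicit same-parity restriction in the definition of $\Omega$, together with the check that such pairs remain dense, is the correct way to close this step and patches a small gap in the paper's version. Apart from this refinement the two proofs coincide.
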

 \begin{proof} 
For $i,j \in [N]$, the difference of eigenvalues of $T_{\omega}$ is given by $$\kappa :=   (1+4\omega)\lambda_j- \omega \lambda_j^2 + \eta- ((1+4\omega)\lambda_i- \omega \lambda_i^2+ \eta),$$ where $\lambda_i$ are the eigenvalues of the Dirichlet-Laplacian. Thus we have $\kappa=0$ iff $$(1+4\omega)\lambda_j- \omega \lambda_j^2 = (1+4\omega)\lambda_i- \omega \lambda_i^2.$$ This is equivalent to looking for $\omega$ so that 
\begin{align*} 
(\lambda_j - \lambda_i) \big[ (1+4\omega) - \omega(\lambda_j + \lambda_i) \big] =0
\end{align*}
which means 
\begin{equation}
\label{eq:condition}
\omega= 1/(\lambda_i+\lambda_j-4).
\end{equation}
Once this condition is met, the eigenspace of $T_{\omega}$ with eigenvalue $\mu_j$ is at least two-dimensional. Hence, we may take two Dirichlet eigenvectors $v_i,v_j$ associated with eigenvalues $\lambda_i,\lambda_j$ respectively. Then we can define a new eigenvector
\[ w = v_j(1)v_i-v_i(1)v_j \text{ such that } w(1)=0.\]
Since all Dirichlet eigenvectors are even, we also have $w(N)=0.$ Thus, we have exhibited an eigenvector to $T_{\omega}$ vanishing at both terminal ends. This implies that the generator cannot be hypoelliptic.

We would finally like to point out that the above condition on $\omega$ is a dense set. 
Indeed, define 
$$f(x,y) = \frac{1}{4 \left[\sin^2(\tfrac{\pi}{2} x)+\sin^2(\tfrac{\pi}{2} y)\right]-4}.$$
Then $f: \RR^2 \to \mathbb R \setminus [-1/4,1/4]$ is onto, showing that necessarily $\vert \omega \vert >1/4.$ Since $\mathbb Q^2$ is dense in $\mathbb R^2$ this implies that $f(\mathbb Q^2)$ is dense in $\mathbb R \setminus [-1/4,1/4]$, but $f(\mathbb Q^2)$ precisely corresponds to the condition \eqref{eq:condition}.
\end{proof}

\subsubsection{Neumann boundary conditions} We expect this set of $\omega$'s to be dense in terms of $N$ as well, similarly to the Dirichlet case, as indicated by simulations, see Fig. \ref{fig:neumann}. In this case is, it is however harder to obtain such an explicit description, as the perturbation $P$ affects not just the terminal particles as in the Dirichlet case.

In the following examples, we exhibit some explicit computations for a fixed number of particles $N$, to illustrate the non-existence of spectral gaps. In the examples we either use H\"{o}rmander's hypoellipticity condition (by commutators) or the equivalence of hypoellipticity for an OU operator to the condition that 
\begin{equation}
\label{eq:Qt}
 Q_t := \int_0^t e^{-sM_{[N]}} \operatorname{diag}(\Gamma \vartheta, 0) e^{-sM_{[N]}^*} \ ds 
 \end{equation}
 satisfies $\operatorname{det}(Q_t)>0$ for all $t>0$, as discussed in subsection \ref{subsect: hypoell_OU}. The latter condition does not hold as soon as there is an eigenvector to $\mathscr{V}_N$ with the first and last entry being $0$. 
 
Setting for simplicity $m=1$ and using \eqref{eq:similar}, we can write \eqref{eq:Qt} as
 \[ Q_t = \int_0^t e^{-s\Omega_{[N]}} \operatorname{diag}(\Gamma \vartheta, 0) e^{-s\Omega_{[N]}^*} \ ds. \]
 \begin{ex} Take $\eta=0$ and $F=\{1,N\}$. We have $\ker(\mathscr V_N)= \ker(\mathscr V_1 + \omega \mathscr V_2) = \operatorname{span}\{(1,\cdots,1)\}.$
 
 In addition, we observe that e.g. for $\omega=\frac{1}{2}$ we obtain eigenstates that vanish at the boundary, indeed e.g. for $N=5$ oscillators, the vector $v=(0, 1, -2, 1, 0)$ is an eigenvector to $\mathscr V_N$ with eigenvalue $4.$  
 
We can then construct an eigenvector $u = (iv,v)$ to both $\Omega_{[N]}$ and $\Omega_{[N]}^*$ with eigenvalue $\lambda=\pm 2i.$ 

 Since in our case $\operatorname{diag}(\Gamma \vartheta ,0) u=0$, we have that 
\[\begin{split} Q_t u &= 
\int_0^t e^{-s \Omega_{[N]}} \operatorname{diag}(\Gamma \vartheta,0) e^{-s\Omega_{[N]}^*}u \ ds  =
\int_0^t e^{-s\Omega_{[N]}} \operatorname{diag}(\Gamma \vartheta,0)  e^{2is}u \ ds \\
&= \int_0^t e^{-s\Omega_{[N]}} \operatorname{diag}(\Gamma \vartheta,0) u e^{2is}  ds =0.  \end{split}\]
Hence, $Q_t$ has a non-trivial nullspace and therefore $\det(Q_t)=0$. Hence, the generator is not hypoelliptic.
One can also exhibit a similar example for $\omega=\frac{1}{3}$ and $N=7$ with eigenvector $v =(0, 1, -3, 4, -3, 1, 0)$ and eigenvalue $\lambda=4.$
 \end{ex}
Instead of trying to find eigenvectors matrices with vanishing end-points, one can also analyze hypoellipticity directly from the structure of the operator.
 \begin{ex}[Lack of hypoellipticity for $N=5$ via H\"{o}rmander's commutator condition]
Let the pinning coefficient $\eta \geq 0$.
 The generator \eqref{eq:L} can be equivalently expressed in H\"{o}rmander's form as 
\[ \mathcal{L} = X_0 +X_1^2+X_N^2\]
which for $\mathbf m=I$ read
$$X_i = \sqrt{\tau_i} \partial_{p_i} \text{ with } i \in \{1,N\}$$
and 
\begin{equation}
\begin{split}
X_0 =& \langle p,\nabla_q\rangle - \gamma_1 p_1 \partial_{p_1}- \gamma_N p_N \partial_{p_N} - \sum_{i=1}^{N-1} (q_i-q_{i+1})( \partial_{p_i} -\partial_{p_{i+1}}) \\
&-\omega \sum_{i=1}^{N-2} (q_i-q_{i+2})( \partial_{p_i} -\partial_{p_{i+2}}) + \eta \sum_{i=1}^N q_i \partial_{p_i}.
\end{split}
\end{equation}
Let $\mathcal A$ be the Lie algebra generated by iterated commutators of vectors fields involving $X_0,X_1,X_N.$
Indeed, we find 
\[ [\partial_{p_1},X_0] = \partial_{q_1}-\gamma_1 \partial_{p_1} \text{ and }[\partial_{p_N},X_0] = \partial_{q_N}-\gamma_N \partial_{p_N}. \]
Specializing to $N=5$, we have $\partial q_1,\partial p_1,\partial q_5,\partial p_5 \in \mathcal A.$
\[ [\partial_{q_1},X_0] = -(1+\omega+ \eta) \partial_{p_1} +\partial_{p_2}+\omega \partial_{p_3}.\]
Hence, $\partial_{p_2}+\omega \partial_{p_3},\partial_{q_2}+\omega \partial_{q_3} \in \mathcal A.$

\[ \begin{split} [\partial_{q_2}+\omega \partial_{q_3},X_0] &= \partial_{p_1}-2 \partial_{p_2}+\partial_{p_3}+ \omega( -\partial_{p_2}+\partial_{p_4}) \\
&+ \omega( \partial_{p_2}-2\partial_{p_3}+\partial_{p_4})+\omega^2(\partial_{p_1}-2\partial_{p_3}+\partial_{p_5}) + \eta(\partial_{p_2} + \omega \partial_{p_3}) \\
&= (1+\omega^2)\partial_{p_1}-2(\partial_{p_2}+\omega \partial_{q_3})+\partial_{p_3}-2\omega^2 \partial_{p_3}+2\omega \partial_{p_4}+ \omega^2 \partial_{p_5}  + \eta(\partial_{p_2} + \omega \partial_{p_3})\end{split}\]
We conclude that $(1-2\omega^2)\partial_{p_3} +2\omega \partial_{p_4},(1-2\omega^2)\partial_{q_3} +2\omega \partial_{q_4}  \in \mathcal A.$ We continue
\[ \begin{split} [(2\omega^2-1) \partial_{q_3} - 2\omega \partial_{q_4},X_0] &=(2\omega^2-1)((\partial_{p_2}-2 \partial_{p_3}+\partial_{p_4}) +\omega (\partial_{p_1}-2 \partial_{p_3}+\partial_{p_5})) \\
&\quad -2 \omega( \partial_{p_3}-2\partial_{p_4}+\partial_{p_5})+\omega^2(\partial_{p_2}-\partial_{p_4}) + \eta((2\omega^2-1) \partial_{p_3} - 2\omega \partial_{p_4}). \end{split}\]
We conclude that 
\[ \begin{split}
&(-1+3\omega^2) \partial_{p_2}+ (2-4(\omega^2+\omega^3)) \partial_{p_3}+ (-1+4\omega + \omega^2) \partial_{p_4} \text{ and } \\
&(-1+3\omega^2) \partial_{q_2}+ (2-4(\omega^2+\omega^3)) \partial_{q_3}+ (-1+4\omega + \omega^2) \partial_{q_4}  \in \mathcal A. \end{split}
\]
It suffices now to check when the linear combinations of vectors $\partial_{p_2},\partial_{p_3},\partial_{p_4}$ span a three-dimensional space, i.e. when
$$ \begin{pmatrix}
1 & 0 & -1 +3\omega^2 \\ \omega & 1 -2 \omega^2 & 2 -4(\omega^2 +\omega^3) \\ 0 & 2 \omega & -1 +4\omega+\omega^2
\end{pmatrix}$$
is non-singular.
Indeed, one readily computes the determinant $12\omega^4+\omega^2-1$ which is zero for $\omega= \pm \frac{1}{2}.$
 \end{ex} 
 
In the next theorem we show that for Neumann boundary conditions, if the strength of the next-to-nearest-neighbour interaction satisfies $| \omega | < 1/4,$  the operator is hypoelliptic.  
\begin{theo}
Let $\omega \in (-1/4,1/4)$ then the Fokker-Planck operator of the chain of oscillators under Neumann boundary conditions, with pinning potential $\eta\geq 0$ and friction at both terminal ends, is hypoelliptic. In particular, the operator exhibits a non-zero spectral gap. 
\end{theo}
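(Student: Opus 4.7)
I would follow the template of the preceding Dirichlet proposition, with the main new work dedicated to the rank-$2$ boundary perturbation $\omega P$ from \eqref{eq:Neumann_decomp}. Via the similarity $M_{[N]} = \operatorname{diag}(I,\mathscr V_{[N]}^{1/2})\,\Omega_{[N]}\,\operatorname{diag}(I,\mathscr V_{[N]}^{-1/2})$ used in that proof, a failure of hypoellipticity would produce a non-trivial $u\in\mathbb R^N$ with $u_1 = u_N = 0$ that is an eigenvector of $\mathscr V_{[N]} = T_\omega + \omega P$, so it is enough to rule out such boundary-vanishing eigenvectors when $|\omega|<1/4$.

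The Dirichlet one-liner ``$Pu = 0$, hence $u$ is a $T_\omega$-eigenvector'' is unavailable here, since $\omega P = -2\omega(w_1\otimes w_1 + w_2\otimes w_2)$ acts on the boundary pairs $\{1,2\}$ and $\{N-1,N\}$ and does not annihilate $u$. The key observation is that $|\omega|<1/4$ is precisely the range in which $t_\omega(x) = (1+4\omega)x - \omega x^2 + \eta$ is strictly monotone on $[0,4]$, so $T_\omega$ inherits the simple eigensystem of the Neumann Laplacian $\mathscr V_1$ and the Green's function $(T_\omega - \mu)^{-1}$ is well-defined off $\Spec T_\omega$. Rewriting the eigenvalue equation as
$$(T_\omega - \mu)\,u = 2\omega\bigl(\langle w_1, u\rangle w_1 + \langle w_2, u\rangle w_2\bigr),$$
with $\langle w_1,u\rangle = u_2/\sqrt 2$ and $\langle w_2,u\rangle = -u_{N-1}/\sqrt 2$ after using $u_1 = u_N = 0$, every candidate $u$ has the form $u = \beta_1 (T_\omega - \mu)^{-1} w_1 + \beta_2 (T_\omega - \mu)^{-1} w_2$, with $(\beta_1,\beta_2)$ in the null space of the secular matrix $I - 2\omega G(\mu)$, where $G_{ab}(\mu) = \langle w_a, (T_\omega - \mu)^{-1} w_b\rangle$. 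The two boundary conditions $u_1 = u_N = 0$ furnish two further linear equations in $(\beta_1,\beta_2)$, yielding an overdetermined homogeneous system which I would show has only the trivial solution whenever $|\omega|<1/4$, for every $\mu \geq 0$. Resonant $\mu \in \Spec T_\omega$ are handled separately by projecting onto the corresponding Neumann eigenmode.

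Once hypoellipticity is in hand, Theorem \ref{theo:metafune} identifies the $L^2$-spectrum of $\mathcal L$ with the additive semigroup generated by $\Spec(M_{[N]})$, so the existence of a spectral gap reduces to $\Spec(M_{[N]}) \subset \{\Re\lambda > 0\}$. This follows from the characteristic equation $\det(\lambda^2 I - \lambda\Gamma + \mathscr V_{[N]}) = 0$ combined with $\mathscr V_{[N]}\succeq 0$, non-trivial $\Gamma\succeq 0$, and hypoellipticity forcing the end-site dissipation to couple into every vibrational mode; the single zero mode present when $\eta = 0$ is removed by restricting to the mean-zero subspace.

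\emph{Main obstacle.} The technical heart of the argument is the uniform incompatibility check above: translating the qualitative monotonicity of $t_\omega$ on $[0,4]$ into a quantitative non-degeneracy statement for the $(2+2)\times 2$ linear system, uniformly in $\mu\geq 0$ and $N$. The failure of this at $|\omega| = 1/4$ corresponds exactly to two Neumann modes of $T_\omega$ colliding to produce a boundary-localised null vector, as already seen in the explicit $\omega = 1/2, 1/3$ examples preceding the theorem, so the strict inequality $|\omega|<1/4$ must be exploited with quantitative care.
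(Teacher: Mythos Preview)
Your reduction to ruling out a boundary-vanishing eigenvector of $\mathscr V_{[N]}=T_\omega+\omega P$ is correct, and your observation that $t_\omega(x)=(1+4\omega)x-\omega x^2+\eta$ is strictly monotone on $[0,4]$ precisely when $|\omega|<1/4$ is a useful sanity check. However, the resolvent/secular-matrix framework you set up stops exactly where the real work begins: you write that the $(2+2)$ homogeneous constraints in $(\beta_1,\beta_2)$ ``have only the trivial solution whenever $|\omega|<1/4$'', but this is the entire content of the theorem, and nothing in your outline indicates how the inequality $|\omega|<1/4$ enters that incompatibility check. Overdeterminacy is not a proof; one still has to exclude the finitely many $\mu$ solving $\det(I-2\omega G(\mu))=0$ from simultaneously satisfying $u_1=u_N=0$, and the Green's entries $G_{ab}(\mu)$ for the Neumann $T_\omega$ are not explicit enough to make this tractable uniformly in $N$.

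The paper's proof takes a completely different and much shorter route. After reducing to a symmetric or antisymmetric eigenvector $u=(0,u_2,\dots,u_{N-1},0)$ via the flip symmetry $[\mathscr V_{[N]},\sigma]=0$, it encodes $u$ as a polynomial $\Phi(u)(X)=\sum_i u_i X^{N+1-i}$ and rewrites the pentadiagonal eigenvalue equation $(\mathscr V_{[N]}-\lambda)u=0$ as a functional identity in $X$. The symbol
\[
-\omega(X^{-2}+X^2)-(X^{-1}+X)+2(1+\omega)+\eta-\lambda
\]
appearing as a factor has the $X\mapsto 1/X$ symmetry, so its four roots come in reciprocal pairs; the boundary conditions force each root to satisfy $X^{N-2}\pm 1/X=0$, hence to lie on the unit circle. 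Vieta's formula for the sum of the four roots of the associated quartic then gives $|1/\omega|\le 4$, i.e.\ $|\omega|\ge 1/4$, which is the desired contradiction. This transfer-matrix/characteristic-root argument exploits the constant-coefficient structure of the bulk recursion directly and is what produces the sharp threshold; your Green's-function scheme does not visibly access this structure, and I do not see how to close it without essentially rediscovering the same quartic.
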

\begin{proof}
Using \eqref{eq:similar}, it suffices to argue that $$\Omega_{[N]}:=\begin{pmatrix} \Gamma & -m^{-1/2}\sqrt{T_\omega+P} \\ m^{1/2} \sqrt{T_\omega +P} &0 \end{pmatrix}$$ does not have a non-trivial nullspace. However, let $u\neq 0$ satisfy $\Omega_{[N]} u =0$, then 
\[ \Re\langle \Omega_{[N]} u,u \rangle =\gamma_1 \vert u_1 \vert^2 + \gamma_N \vert u_N \vert^2.\]
This quantity is strictly positive unless both $u_1=u_N=0.$ Thus, it suffices to exclude the existence of an eigenfunction to $T_\omega +P$ that vanishes at both terminal ends.

Let $\sigma= \begin{pmatrix} 0 & \cdots &0 & 1\\ \vdots & \vdots& 1 &0 \\ \vdots & \Ddots & 0 & 0\\1 & 0 &\cdots & 0 \end{pmatrix}.$
The Hamiltonian $H_{\omega} = T_\omega+P$ then satisfies $[H_{\omega}, \sigma]=0.$
The matrix $\sigma$ has two invariant subspaces that are also invariant subspaces of $H_{\omega}$
\[ \mathbb C^n = (X_{\text{sym}}= \{u \in \mathbb C^{N};  \forall i: \ u_i = u_{N-i} \} ) \oplus  (X_{\text{asym}}= \{u \in \mathbb C^{N};  \forall i: \ u_i = -u_{N-i}  \} ).\]
We shall show that if the operator is not hypoelliptic, then $\vert \omega \vert \ge 1/4.$
So if the Fokker-Planck operator is not hypoelliptic, then there exists an eigenvector $u$ to the Hamilton function $H_{\omega} u = \lambda u$ with $\lambda \in \mathbb R$ such that $u_1=u_N=0.$ We can assume this eigenvector to be either symmetric or anti-symmetric.
We then define a linear injection $\Phi: \mathbb C^N \to \CC[X]$ with $(\Phi(u))(X)=\sum_{i=1}^{N} u_i X^{N+1-i}.$ Thus, for an eigenvector $u=(0,u_2,..,u_{N-1},0)\in \CC^N$ the eigenvalue identity $\Phi((H_{\omega}-\lambda )u)(X)=0$ is equivalent to
\begin{equation}
\begin{split}
\label{eq:EV_problem} 0=\Phi((H_{\omega}-\lambda )u)(X)&=(\Phi(u))(X)(-\omega(X^{-2}+X^2)-(X^{-1}+X)+2(1+\omega)+\eta-\lambda)\\
&\quad - u_2\omega (X-1)^2(X^{N-2}\pm 1/X)\end{split}
\end{equation}
where $\pm$ corresponds to symmetric/anti-symmetric eigenvectors, respectively. This identity holds for all $X \in \CC.$
Due to $X \mapsto 1/X$ symmetry of 
\begin{equation}
\label{eq:X} 
X \mapsto (-\omega(X^{-2}+X^2)-(X^{-1}+X)+2(1+\omega)+\eta-\lambda)
\end{equation} appearing in \eqref{eq:EV_problem}, roots of this rational function are of the form $r_1 e^{ 2\pi i \theta_1}, e^{ -2\pi i \theta_1}/r_1$ and $r_2 e^{ 2\pi i \theta_2},e^{-2\pi i \theta_2}/r_2 $ for some $\theta_i \in \RR/ \ZZ$ and $r_i>0$. Since the identity \eqref{eq:EV_problem} holds for all $X$, we can specialize to one of the roots of \eqref{eq:X}, but then also for $X$ being one of the roots $$u_2\omega (X-1)^2(X^{N-2}\pm 1/X)=0.$$ This implies that $r_1=r_2=1.$ Applying Vieta's formula for the sum of roots to the polynomial 
\[X^4 +\frac{1}{\omega}(X^3+X)-\Big(\frac{2(1+\omega)-\lambda}{\omega}\Big)X^2+1=0 \]
shows that
\[ \vert 1/\omega \vert = \vert  e^{ 2\pi i \theta_1}+e^{-2\pi i \theta_1}+e^{ 2\pi i \theta_2}+e^{-2\pi i \theta_2} \vert  \le 4,\]
which gives the desired bound on suitable $\omega$.
\end{proof}

The discussion above implies the following proposition.

\begin{prop} Assuming Neumann boundary conditions for $T_\omega$, then the generator $\mathcal L$ is hypoelliptic if and onlf if there exists an eigenvector $u=(0,u_2,\cdots,u_{N-1},0)$ so that $T_\omega u + \omega P u =\lambda u$, for some $\lambda>0$. 
\end{prop}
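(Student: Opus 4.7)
The plan is to apply the standard characterization, used throughout this section, that $\mathcal{L}$ is hypoelliptic iff $\det(Q_t)>0$, together with the similarity \eqref{eq:similar}, which (with masses normalized to $1$) reduces the question to whether the positive semi-definite matrix
\[
\tilde Q_t := \int_0^t e^{-s\Omega_{[N]}}\operatorname{diag}(\Gamma\vartheta,0)\,e^{-s\Omega_{[N]}^*}\,ds
\]
has trivial kernel. Throughout the argument I abbreviate $B := T_\omega + \omega P=\mathscr V_{[N]}\ge 0$, so that $\Omega_{[N]}$ has off-diagonal blocks $\pm\sqrt B$. The equivalence to be proved then translates into an equivalence between triviality of $\ker\tilde Q_t$ and absence of a positive-eigenvalue eigenvector of $B$ supported away from positions $1$ and $N$.

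For the direction ``existence of the eigenvector implies failure of hypoellipticity,'' which parallels the constructions in the examples preceding the statement, I would proceed as follows. Given $Bu=\lambda u$ with $\lambda>0$ and $u_1=u_N=0$, set $v:=(u,iu)^T\in\C^{2N}$. Using $\sqrt B u=\sqrt\lambda\,u$ together with $\Gamma u=0$ (since $\Gamma$ is supported on $F\subset\{1,N\}$ where $u$ vanishes), a direct block computation gives $\Omega_{[N]}^* v = i\sqrt\lambda\,v$. Since $\operatorname{diag}(\Gamma\vartheta,0)\,v=(\Gamma\vartheta\,u,0)=0$ for the same endpoint reason,
\[
\tilde Q_t v = \int_0^t e^{-s\Omega_{[N]}}\operatorname{diag}(\Gamma\vartheta,0)\,v\cdot e^{-is\sqrt\lambda}\,ds = 0,
\]
so $\det\tilde Q_t=0$.

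For the converse, set $D:=\operatorname{diag}(\Gamma\vartheta,0)$. Positive semi-definiteness of $\tilde Q_t$ gives that $v\in\ker\tilde Q_t$ is equivalent to $D^{1/2} e^{-s\Omega_{[N]}^*}v=0$ for every $s\in[0,t]$; Taylor-expanding in $s$ yields $D^{1/2}(\Omega_{[N]}^*)^k v=0$ for all $k\ge 0$. Hence the $\Omega_{[N]}^*$-invariant subspace
\[
W := \bigcap_{k\ge 0} \ker\bigl(D^{1/2}(\Omega_{[N]}^*)^k\bigr)
\]
is nontrivial, and must therefore contain an eigenvector $w=(a,b)^T$ of $\Omega_{[N]}^*$ with some eigenvalue $\mu$. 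Writing out the block equations $\Gamma a+\sqrt B\,b=\mu a$, $-\sqrt B\,a=\mu b$ and using the $k=0$ condition $D^{1/2}w=0$ (which forces $a_i=0$ for $i\in F$ and hence $\Gamma a=0$), eliminating $b$ produces $Ba=-\mu^2 a$. Positivity of $B$ forces $\mu\in i\R$, and the candidate $u:=a$ is then an eigenvector of $B$ with eigenvalue $\lambda:=-\mu^2\ge 0$ and vanishing endpoints.

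The hard part will be to rule out the degenerate cases $\mu=0$ and $a=0$, which are needed to achieve $\lambda>0$ strictly and $u\ne 0$. Under Neumann boundary conditions $\ker B$ is spanned by $(1,\dots,1)$ when $\eta=0$ and is trivial when $\eta>0$, so no nonzero vector with $a_1=a_N=0$ lies in $\ker B$, yielding the implication ``$\mu=0\Rightarrow a=0$.'' If instead $a=0$, then $w=(0,b)$ with $\sqrt{B}\,b\in\ker\Gamma$, and one must iterate the membership relations $D^{1/2}(\Omega_{[N]}^*)^k w=0$ for $k\ge 1$ to propagate the zero-endpoint condition through successive applications of $\sqrt B$, eventually extracting a nonzero $B$-eigenvector with $\lambda>0$ and vanishing endpoints from the Krylov-type filtration defining $W$. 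The delicate bookkeeping in this iteration---tracking how the zero-endpoint subspace interacts with $\sqrt B$ at successive levels---is the technical heart of the argument, and is exactly where the specific spectral structure of $T_\omega+\omega P$ under Neumann boundary conditions enters.
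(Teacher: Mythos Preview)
The paper gives no separate proof of this proposition; it simply records it as a consequence of ``the discussion above,'' so there is no detailed argument to compare against. Your approach---reducing to the Kalman-type condition $\ker\tilde Q_t=\{0\}$, locating an $\Omega_{[N]}^*$-eigenvector inside the invariant subspace $W$, and reading off a $B$-eigenvector with vanishing endpoints---is the natural one and matches the spirit of the preceding examples and the Dirichlet argument.

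However, you have overcomplicated what you call the ``hard part.'' If $w=(a,b)$ is an eigenvector of $\Omega_{[N]}^*$ with eigenvalue $\mu$ and $a=0$, then the first block equation $\Gamma a+\sqrt{B}\,b=\mu a$ already gives $\sqrt{B}\,b=0$, not merely $\sqrt{B}\,b\in\ker\Gamma$; combined with $-\sqrt{B}\,a=\mu b$ this forces $\mu b=0$ and $b\in\ker B$. Likewise, if $\mu=0$ then both block equations yield $a,b\in\ker B$. Since you already invoked the similarity \eqref{eq:similar}, which requires $B=T_\omega+\omega P$ to be invertible (hence $\eta>0$), the kernel of $B$ is trivial and both degenerate cases are excluded in one line---no Krylov-type iteration is needed. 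The only caveat is that the argument as written does not cover $\eta=0$: there the similarity breaks down, and in fact $(0,(1,\dots,1))\in\ker Q_t$ shows non-hypoellipticity without producing an eigenvector of the claimed form, so the equivalence genuinely requires $\eta>0$ (an assumption implicit in the paper's loosely stated proposition, whose ``if and only if'' is also evidently inverted).
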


\begin{figure}
\includegraphics[width=7.5cm]{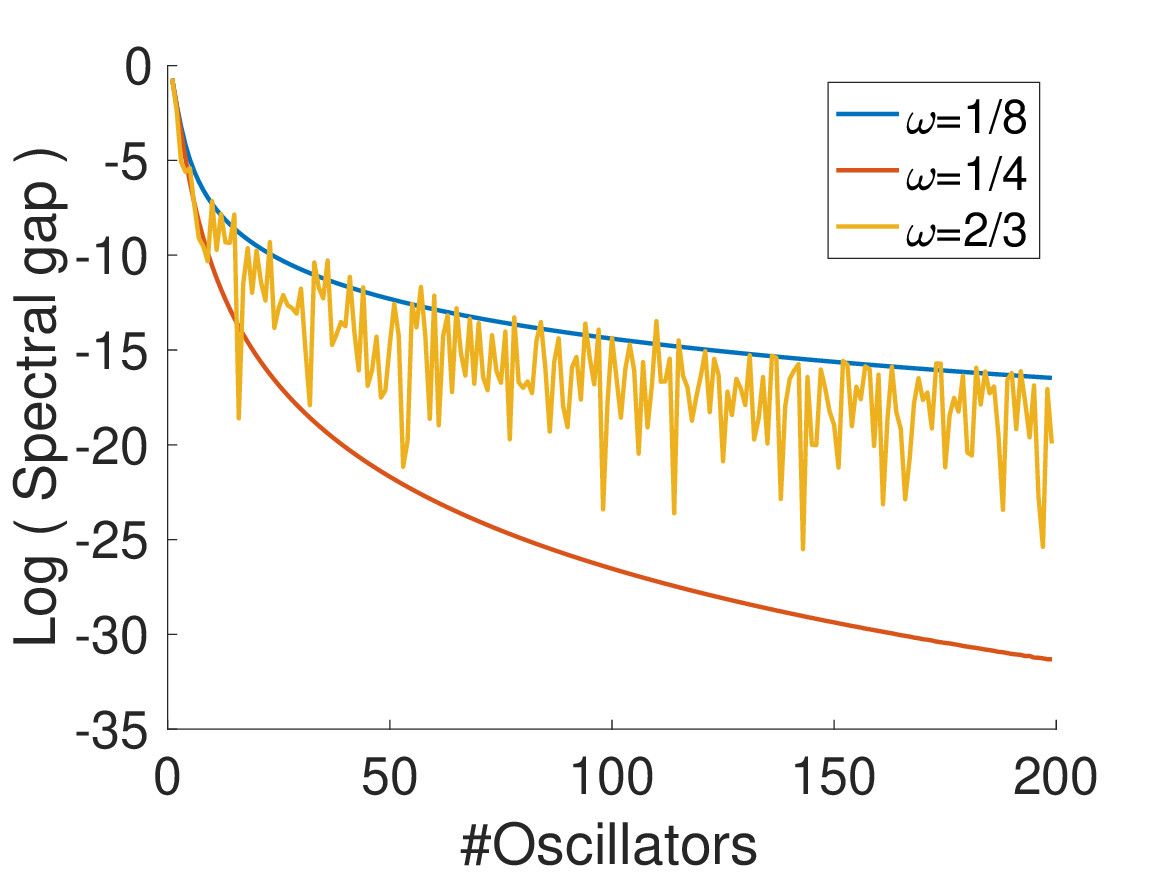}
\includegraphics[width=7.5cm]{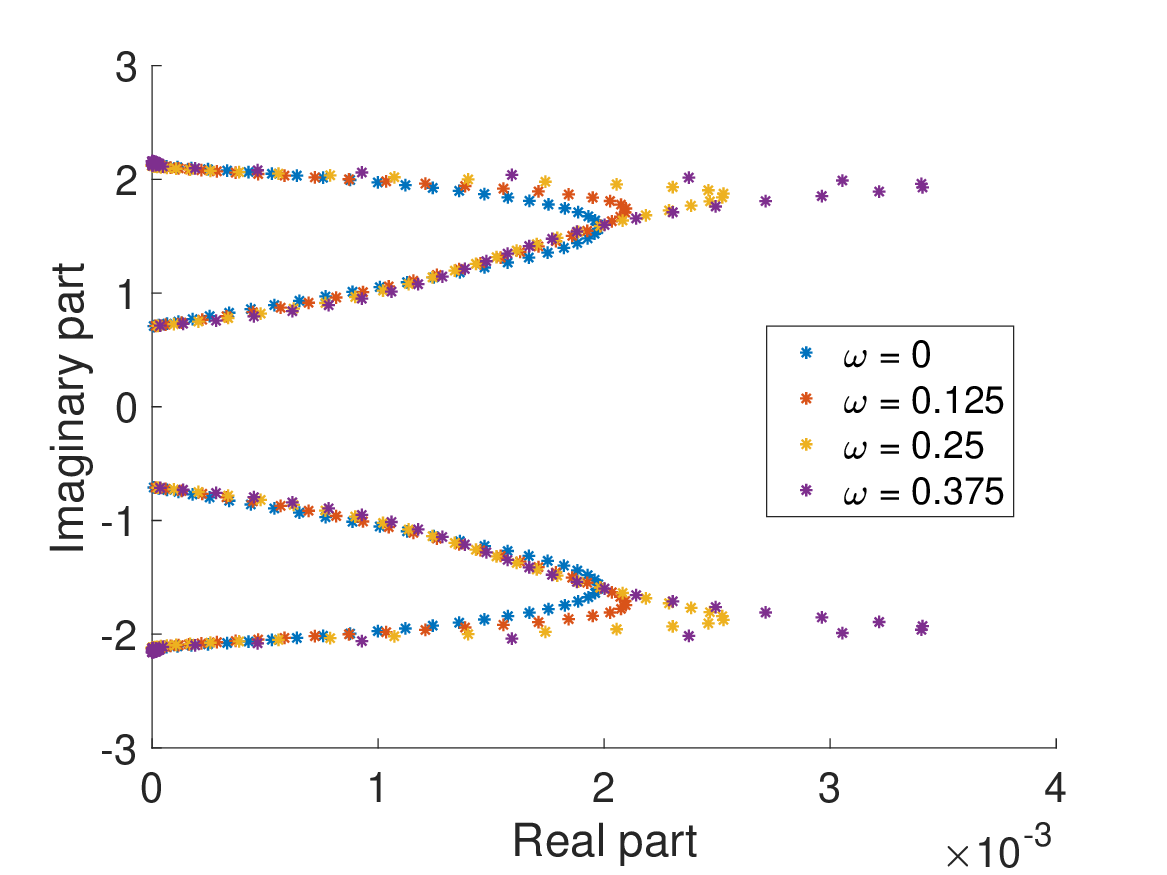}
\includegraphics[width=7.5cm]{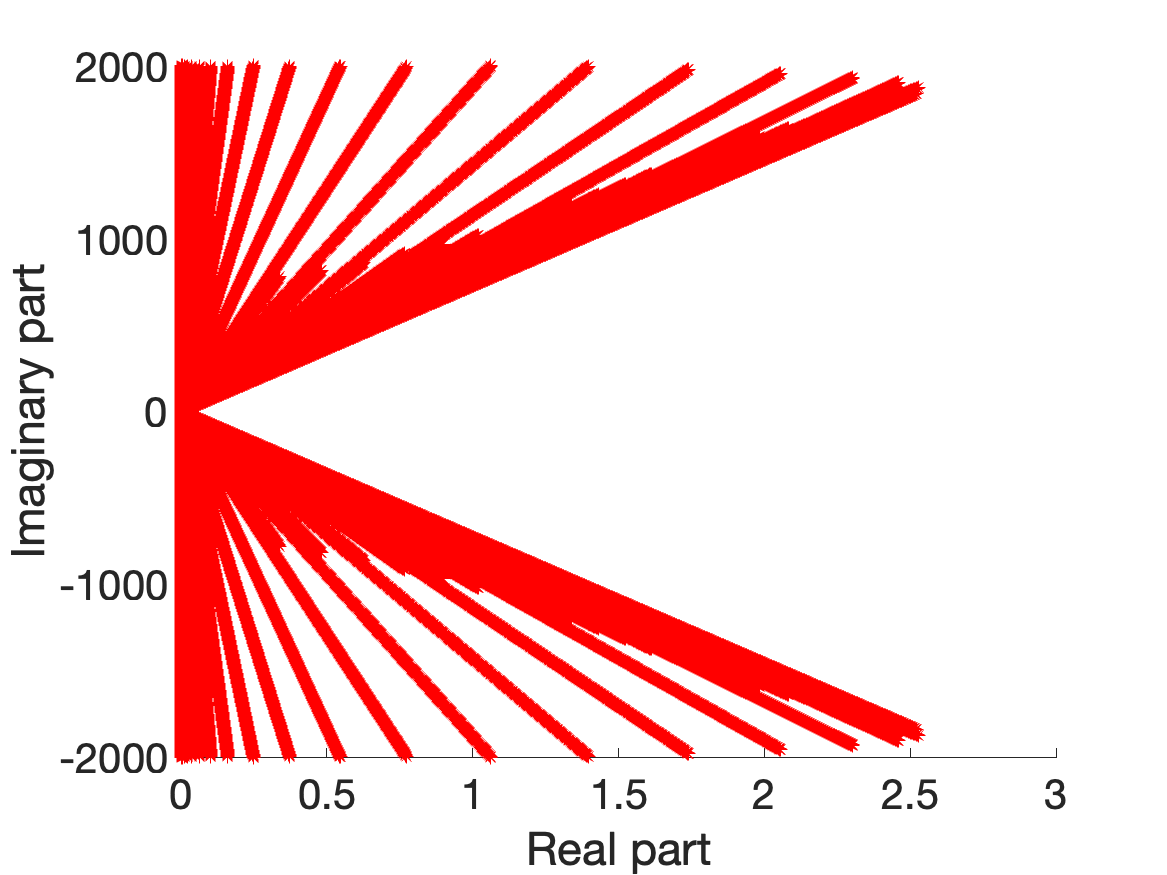}
\includegraphics[width=7.5cm]{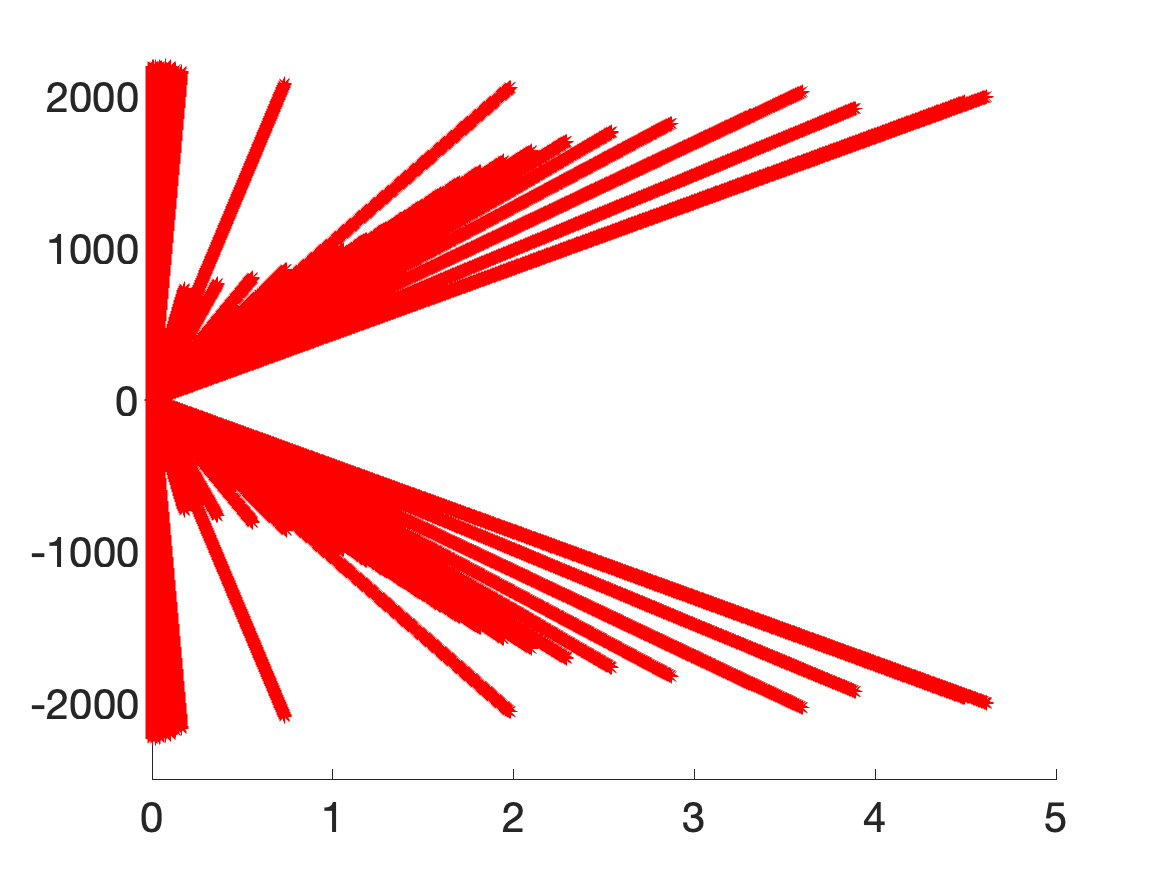}
\caption{Log-log plot of the spectral gap for the one-dimensional chain of oscillators under Neumann boundary conditions for three different regimes of $\omega$ (top left) and full spectrum of $M_{[N]}$ for $N=50$ (top right) for $\gamma=1/4.$
Full spectrum of Fokker-Planck operator for critical $\omega = 1/4$ and supercritical $\omega = 1/2 $ (bottom left and right).} 
\label{fig:neumann}
\end{figure}

 \subsection{Spectral gap under Dirichlet boundary conditions}
We shall now study finer estimates on the decay of the spectral gap with next-to nearest neighbour interactions and Dirichlet boundary conditions.
For the following results, we simplify the computations by assuming that
\begin{Assumption}
\label{ass:assump}
For the interaction, we assume Dirichlet boundary conditions such that for the next-to nearest neighbour interaction
\begin{enumerate}[label=(\roman*)]
\item The perturbation matrix $P$ is of rank $1$, i.e.  $\operatorname{diag}(-2\omega, 0, \dots,0).$
\item The friction at constant $\gamma$ is imposed only at one end of the chain, i.e. $F=\{1\}.$
\end{enumerate}
In addition, we assume all masses are normalized to one.
\end{Assumption}

The matrix $M_{[N]}$ is then as follows
 \begin{equation}
 \label{eq:matrix M rank1 perturb}
 M_{[N]} = \begin{pmatrix} 
 0& -I \\ T_\omega & 0 
 \end{pmatrix} + \begin{pmatrix} 
 \Gamma & 0 \\ P & 0 
 \end{pmatrix} \text{ with }A:= \begin{pmatrix} 
 0& -I \\ T_\omega & 0 
 \end{pmatrix}
\end{equation} 
where $ \Gamma= \operatorname{diag}(\gamma, 0 , \dots, 0)$ with $\gamma$ the friction constant and $ P = \operatorname{diag}(-2\omega, , 0 , \dots, 0)$. 
We observe that by assumptions (i) and (ii), the matrix $M_{[N]}$ is only a rank one perturbation of the matrix $A,$ whereas in case of friction at both ends and $P=-2\omega(e_1+e_n)$ or $F=\{1,N\}$ the perturbation is of rank two. 

Focussing on the case of a rank $1$-perturbations $ \begin{pmatrix} \Gamma & 0 \\ P & 0 \end{pmatrix}$, as outlined in Assumption \ref{ass:assump}, corresponds to a rank $1$-perturbation and we proceed by expanding the determinant of the eigenvalue problem that we are interested in. We have 
\[ \operatorname{det}(M_{[N]}+i\mu) =\operatorname{det}\Big(A+i\mu + \begin{pmatrix} \Gamma & 0 \\ P & 0 \end{pmatrix}\Big) =  \operatorname{det}\big(A+i\mu\big) \operatorname{det}\Big( I-i(\mu-iA)^{-1} \begin{pmatrix} \Gamma & 0 \\ P & 0 \end{pmatrix}\Big).\]
Since the perturbation is of rank one, we have
\begin{align} \label{eq: expanding_det}
\operatorname{det}\left( I- i( \mu-iA  )^{-1} \begin{pmatrix} 
\Gamma & 0 \\ P & 0
\end{pmatrix}
 \right) = 1-i \operatorname{Tr} \left( ( \mu-iA  )^{-1} \begin{pmatrix} 
\Gamma & 0 \\ P & 0
\end{pmatrix}   \right) =0
\end{align} 
where $A$ is the first matrix in the right-hand side of \eqref{eq:matrix M rank1 perturb}.  First we notice that we can easily find explicit expressions for the eigenvalues and eigenvectors to $A$. Indeed, let $S$ be the eigenbasis of the self-adjoint matrix $T_\omega$, then  
\begin{align} 
\begin{pmatrix} S^{-1} & 0 \\ 0& S^{-1}
\end{pmatrix} \begin{pmatrix} 0 & -I \\ T_\omega & 0
\end{pmatrix} \begin{pmatrix} S & 0 \\ 0& S \end{pmatrix} = \begin{pmatrix} 0 & -I \\ \operatorname{diag}(\nu_1, \cdots, \nu_N)& 0
\end{pmatrix}.
\end{align} 

This implies that the spectrum of $A$ is
 \begin{equation} \label{eq:spec A_0}
 \Spec(A)=\Bigg\{\pm   i \sqrt{(1+4\omega)\lambda_j- \omega \lambda_j^2}, j \in [N]\Bigg\}.
 \end{equation} 
The corresponding eigenvectors are $V_j= (v_j, -\mu_j^{\pm} v_j)^T$ where $v_j$ is the $j$-th eigenvectors of $T_\omega$, 
 $\mu_j^{\pm} = \pm i\sqrt{\nu_j}$, with $\nu_j :=(1+4\omega)\lambda_j -\omega \lambda_j^2$ where $\lambda_j$ are the eigenvalues of the Dirichlet discrete Laplacian.

\begin{figure} \label{fig:gap_forN_and_for_omega}
\includegraphics[width=7.5cm]{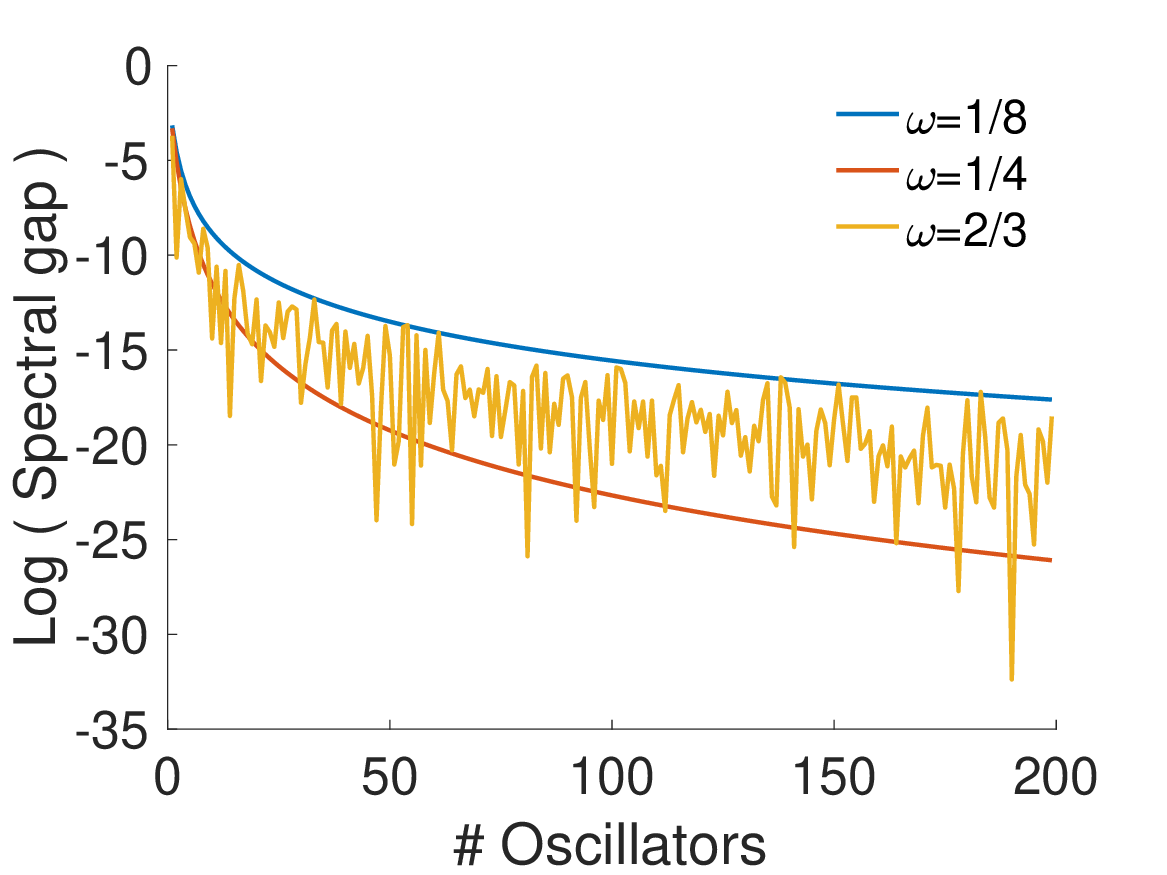}
\includegraphics[width=7.5cm]{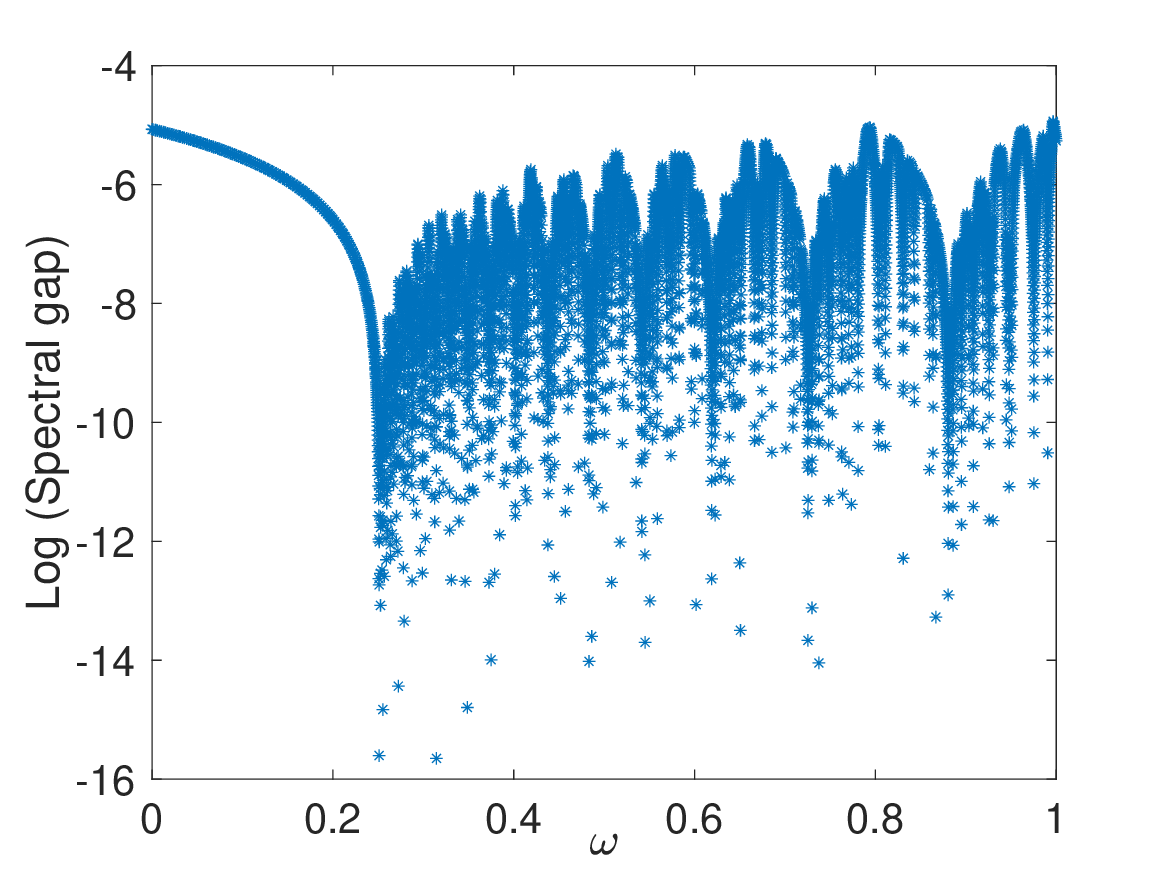}
\caption{Log-log plot of the spectral gap for the one-dimensional chain of oscillators under Dirichlet boundary conditions for three different regimes of $\omega$: below, above and at the value $\frac{1}{4}$. Log(spectral gap) for different $\omega$ and $N=50$ oscillators.} 
\label{fig:dirichlet}
\end{figure}

As indicated in Fig. \ref{fig:dirichlet}, for Dirichlet boundaries, the spectral gap changes behaviour for different values of $\omega$: For $\omega< \frac{1}{4}$ we expect it to decay exactly as $N^{-3}$, when $\omega=\frac{1}{4}$ as $N^{-5}$,  while for $\omega >\frac{1}{4}$ it exhibits an oscillatory behaviour in terms of $N$ in the sense that the spectral gap takes value very close to $0$, or $0$, for certain values of $N$. Note that under Neumann boundaries one can observe similar behaviour, see Fig. \ref{fig:neumann}. 

\subsection{Spectral gap in the critical case $\omega=1/4$}
In the following, we study, assuming; Dirichlet boundaries, the spectral gap for $\omega$'s less or equal to $\frac{1}{4}$. We start with the critical case $\omega=1/4$. 

Our proof of the following theorem only provides estimates on $|\mu(N) - \mu_N^{+}|$ of order $N^{-4}$. In particular, implies that $\Re(\mu(N))$ and thus also the spectral gap $g(N)$ decays at least like $N^{-4}.$ 

This result does not seem sharp, as numerical experiments suggest that $g(N)$ behaves like $N^{-5}$, see Fig. \ref{fig:gap_forN_and_for_omega}. Our principal estimate on $|\mu(N) - \mu_N^{+}|$ in the critical case $\omega=1/4$ does however seem to be optimal, since the decay of the real part is faster than the rate of convergence of the imaginary part to the eigenvalue $\mu_N^{+}$. This is illustrated in Fig. \ref{fig:where_to_localise}. 

\begin{figure} \label{fig:where_to_localise}
     \includegraphics[width=7cm]{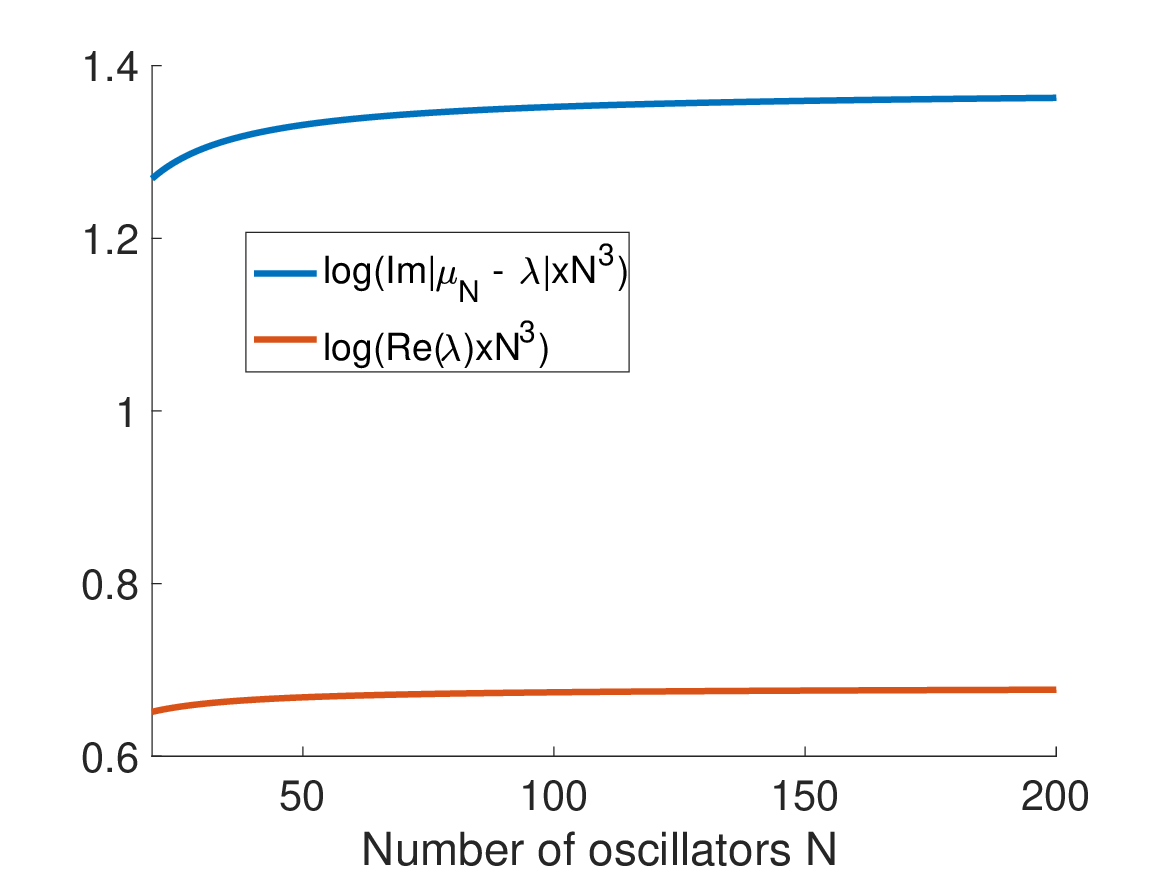}
    \includegraphics[width=7cm]{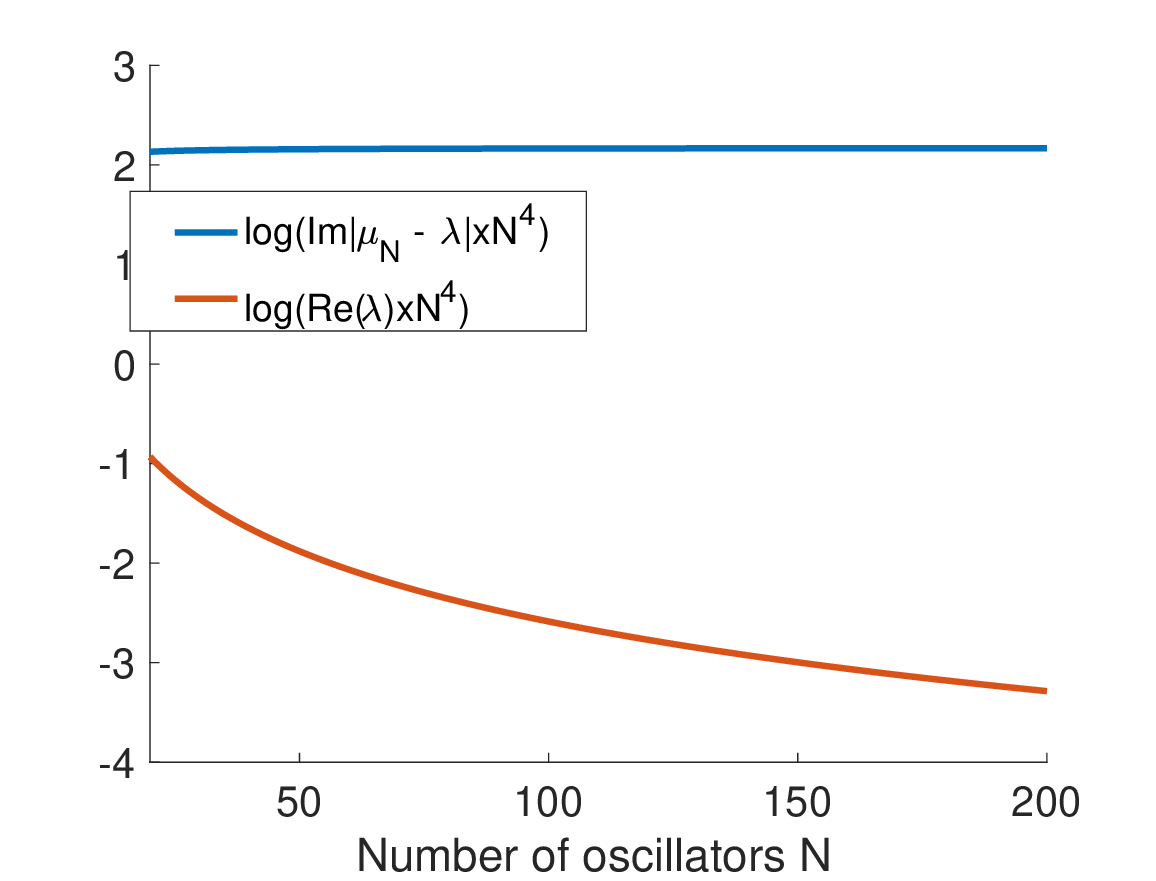}\\
\includegraphics[width=7cm]{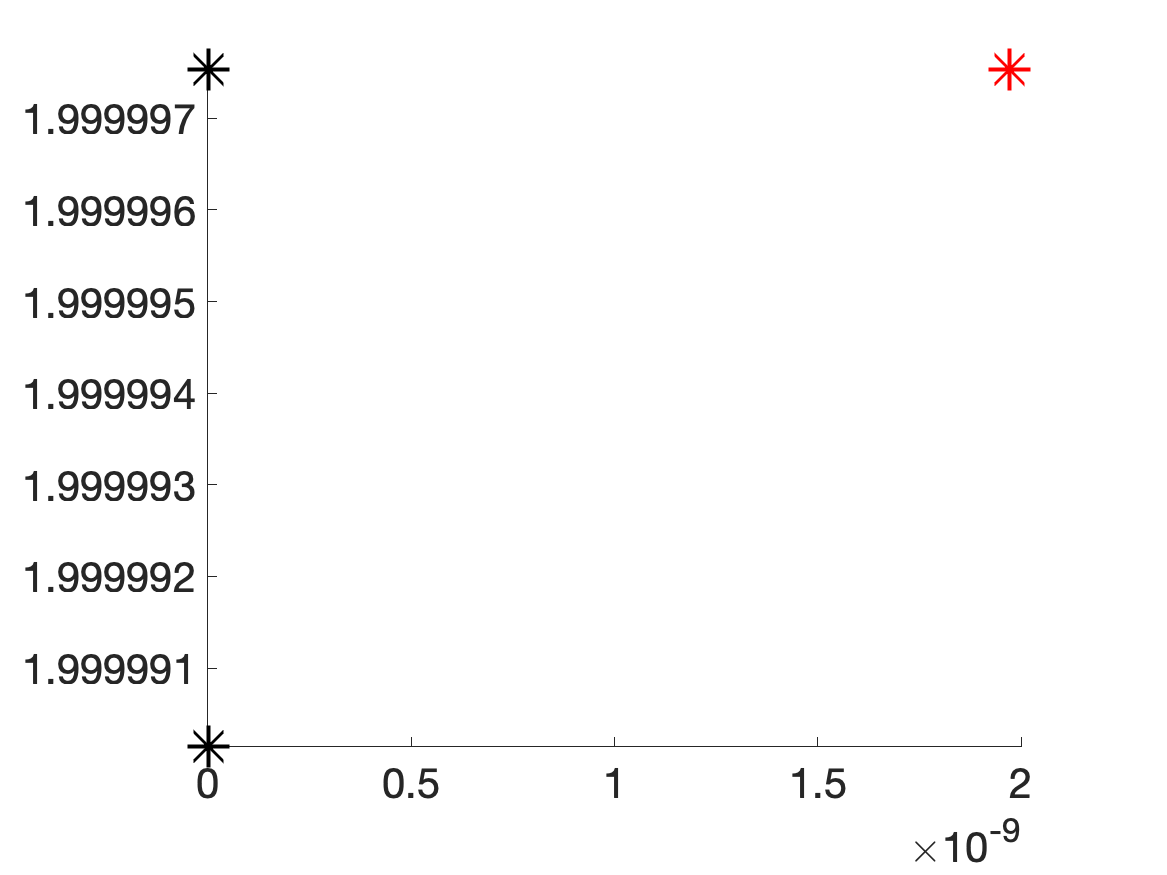}\includegraphics[width=7cm]{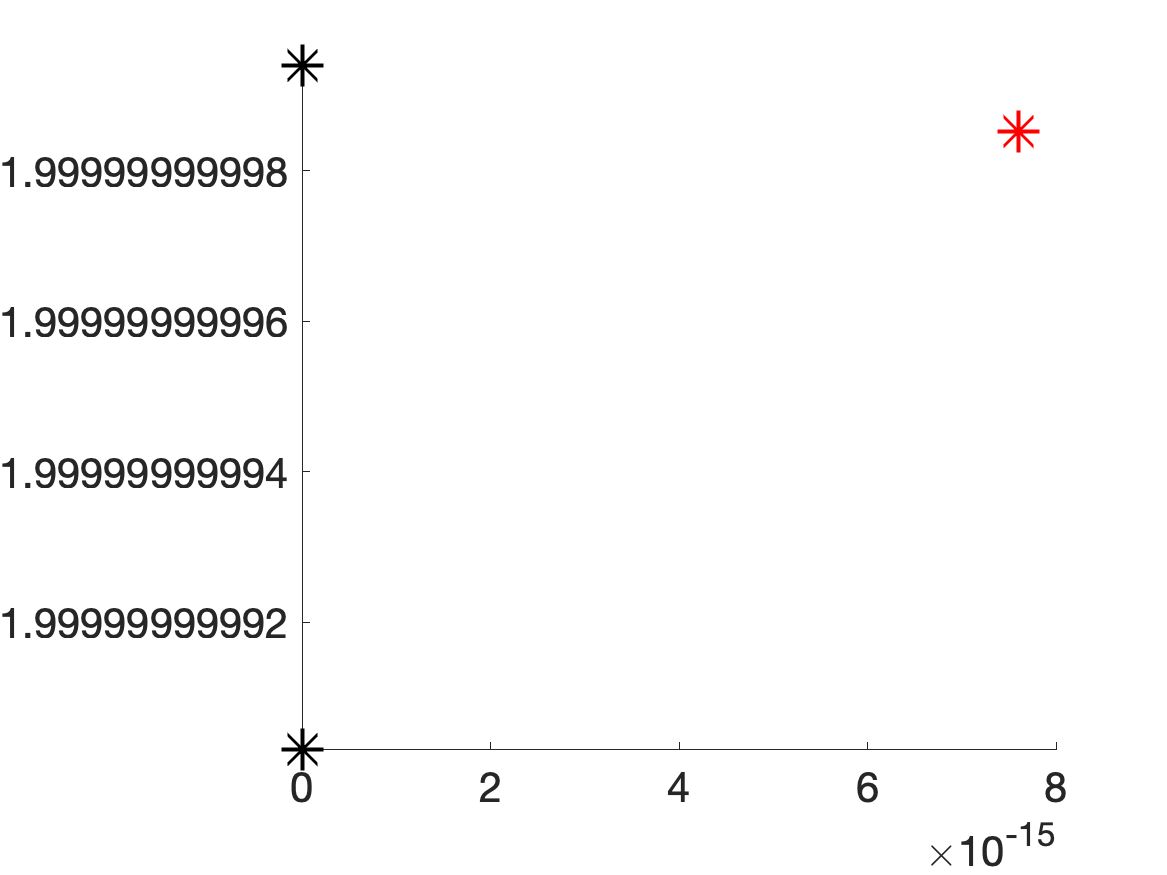}
    \caption{In the top row, we illustrate the scaling of the real part of the smallest eigenvalue and distance of imaginary eigenvalue to the largest element $\mu_N$ on the imaginary axis. The bottom row shows $\mu_{N-1}^+,\mu_N^+$ on the imaginary axis (black crosses) and the eigenvalue whose real part is closest to zero (red cross). On the left we illustrate the case $\omega=0$ and on the right we show $\omega=1/4.$ We see that for $\omega=1/4$ the decay of the spectral gap (real part) is faster than the convergence of the imaginary part to the nearest eigenvalue $\mu_N.$ }
    \label{fig:my_label2}
\end{figure}
\begin{theo}[Scaling of the spectral gap when $\omega=1/4$] \label{theo:spectral gap critical case}
Consider the chain of oscillators with next-to nearest neighbour boundary interactions, Dirichlet boundary conditions, with $\eta>0$, and with friction $\gamma>0$ at one terminal end of the chain, only. 

When $\omega=\frac{1}{4}$, there is an eigenvalue $\mu(N)$ of the generator $\mathcal{L}$,  such that 
 for $\mu_N^+$ being the largest eigenvalue of $A$, $$  |\mu(N) - \mu_N^+| \sim N^{-4}.$$ 
 In particular, for the spectral gap of the generator, $g(N)$, we find $ g(N)\lesssim N^{-4}$.
\end{theo}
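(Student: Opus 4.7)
The plan is to reduce the eigenvalue problem for $\mathcal L$ to a scalar characteristic equation for $M_{[N]}$ via Theorem~\ref{theo:metafune}, and analyse it near $\lambda = \mu_N^+$ exploiting the flat-top dispersion produced by $\omega=1/4$. Because the perturbation $\begin{pmatrix}\Gamma & 0 \\ P & 0\end{pmatrix}$ has rank one under Assumption~\ref{ass:assump}, expanding \eqref{eq: expanding_det} and diagonalising $T_\omega$ in the $2\times 2$ block structure of $A$ yields
\[
F(\lambda) := 1 - (\gamma\lambda + 2\omega) \sum_{j=1}^N \frac{v_j(1)^2}{\lambda^2+\nu_j} = 0,
\]
where $(\nu_j, v_j)$ is the eigensystem of $T_\omega$ with $v_j$ the Dirichlet eigenvectors in \eqref{eq:eigensys}. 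At $\omega = 1/4$, $\nu_j = 4 + \eta - (4-\lambda_j)^2/4$ has a maximum at $\lambda=4$, and combined with $\lambda_N = 4 - \Theta(N^{-2})$ this gives the crucial asymptotics $\nu_N - \nu_{N-k} = \Theta(k^4/N^4)$ and $v_{N-k}(1)^2 = \Theta(k^2/N^3)$ for small $k\ge 1$, instead of the $\Theta(k^2/N^2)$ spacing seen in the subcritical case.

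I would then localise around $\lambda_0 = i\tau$ with $\tau = \sqrt{\nu_N}$, writing $\lambda = i\tau + \delta$ and passing to the natural variable $z := \lambda^2 + \nu_N = 2i\tau\delta + O(\delta^2)$. Isolating the two near-singular terms ($j=N$ and $j=N-1$) from the smooth remainder transforms the characteristic equation into
\[
\frac{v_N(1)^2}{z} - \frac{v_{N-1}(1)^2}{Y-z} \;=\; \frac{1}{\gamma\lambda + 2\omega} \;-\; \widetilde S(z), \qquad Y := \nu_N - \nu_{N-1} = \Theta(N^{-4}),
\]
with $\widetilde S(z) := \sum_{j\le N-2} v_j(1)^2/(z-\sigma_j)$ and $\sigma_j := \nu_N - \nu_j$. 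The central estimate is a Riemann-sum comparison, in the spirit of Lemma~\ref{lemm:Riemann} and the analysis of $\sum_j v_j(1)^2/\kappa_j$ in the proof of Theorem~\ref{theo: spectrum_magnetic field}: the asymptotics above give $\sum_{k\ge 2} v_{N+1-k}(1)^2/\sigma_{N+1-k} \sim N \sum_{k\ge 2} k^2/(k^4-1)$, a convergent positive series, so $\widetilde S(z) = -\Theta(N)$ uniformly on $|z| \le CN^{-4}$.

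With $\widetilde S(z) \sim -N$ in hand, clearing denominators reduces the equation to a quadratic in $z$ whose three coefficients have sizes $\Theta(N^{-7})$, $\Theta(N^{-3})$ and $\Theta(N)$; the balance of the two $\Theta(N^{-7})$ contributions determines a unique approximate root $z_* = \Theta(N^{-4})$, hence $\delta_* = z_*/(2i\tau) = \Theta(N^{-4})$. Applying Rouché's theorem to $F$ on two circles $|\lambda - i\tau| = c_\pm N^{-4}$ with $c_-$ small and $c_+$ large, in the same spirit as the final Rouché step in the proof of Theorem~\ref{theo: spectrum_magnetic field}, promotes this formal balance to the existence of exactly one eigenvalue $\mu(N)$ with $|\mu(N) - \mu_N^+| \sim N^{-4}$. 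Since $\mu_N^+ \in i\mathbb R$, this yields $g(N) \le \Re(\mu(N)) \lesssim N^{-4}$.

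The principal obstacle is the sharp estimate $\widetilde S(z) = -\Theta(N)$: the $\Theta(k^4/N^4)$ accumulation of the $\nu_j$ at the top of $\Spec(T_\omega)$ makes many modes contribute at the same order $N$, so no single-pole localisation as in the subcritical case is available, and a purely $O(N)$ upper bound on $|\widetilde S|$ would leave the scaling $|\delta|\sim N^{-3}$ versus $N^{-4}$ ambiguous. Producing both the correct magnitude and the definite sign (or phase) of $\widetilde S(z)$ through the Riemann-sum argument is the technical heart of the proof; once this is secured, the Rouché step and the subleading $\delta^2$ corrections to $z$ are routine.
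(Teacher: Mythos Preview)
Your reduction to the scalar characteristic equation $F(\lambda)=0$ and the overall Rouch\'e strategy match the paper's, but the core estimate is organized differently. The paper isolates only the $j=N$ term and estimates the full remaining sum via an explicit integral approximation,
\[
\frac{1}{N}\sum_{j<N}\frac{\sin^2\bigl(\tfrac{\pi j}{N+1}\bigr)}{\mu-\cos^4\bigl(\tfrac{\pi j}{2(N+1)}\bigr)}\;\sim\;\mu^{-1/4}\quad\text{as }\mu\downarrow 0,
\]
yielding $|f(\mu)|\sim|\mu|^{3/4}$ against the constant $|g(\mu)|\sim N^{-3}$, so the Rouch\'e circles $|\mu|=\beta N^{-4}$ fall out directly for $\beta$ large and small. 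You instead peel off two poles, control the rest by the convergent series $N\sum_k k^2/(k^4-1)$ to obtain $\widetilde S(z)\sim -N$, and recast the balance as a quadratic in $z=\lambda^2+\nu_N$.

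These are two expressions of the same flat-dispersion phenomenon: evaluating the paper's $\mu^{-1/4}$ at $\mu\sim N^{-4}$ recovers your $\Theta(N)$. The paper's continuous scaling has the advantage of being uniform in $\mu$, so the Rouch\'e comparison works on any circle $|\mu|=\beta N^{-4}$ without pole-avoidance constraints; your estimate on $\widetilde S(z)$, by contrast, is only valid while $|z|$ stays below the next pole $\sigma_{N-2}\sim cN^{-4}$, so on the outer circle $c_{+}N^{-4}$ you must either verify that the approximate root sits safely inside or peel off further terms---a fixable technicality, not a gap. Conversely, your discrete-series argument is more elementary than the paper's explicit arctan/arctanh antiderivatives, and the definite sign of $\widetilde S$ comes for free from positivity of the summands, whereas the paper has to appeal to a splitting argument as in \eqref{eq:est2_term in g} to extend its integral estimate across the singular region. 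One minor slip: under $j=N+1-k$ the constraint $j\le N-2$ gives $k\ge3$, not $k\ge2$.
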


\begin{proof} 
\noindent 
\emph{Step 1 - Reducing the dimension}: 
From \eqref{eq: expanding_det}, we reduce our spectral problem to studying the spectrum of the lower-dimensional \emph{Wigner} matrix, defined for $\mu \notin \Spec(i A)$ as
  \begin{equation} \label{eq:Wigner_mat}
  W_F (\mu):= \operatorname{Tr} \left( (  \mu-iA )^{-1} \begin{pmatrix} 
\Gamma & 0 \\ P & 0
\end{pmatrix}   \right).
\end{equation}
Then an eigenvalue $\mu$ of $i M_{[N]}$ corresponds to a solution $\mu$ satisfying $W_F (\mu) =-i$. In other words, we equivalently look, by the spectral decomposition of $A$ at solutions $\mu$ so that
 \begin{equation} 
  \sum_{\pm, i=1}^N(\mu-\tilde{\mu}_i^{\pm})^{-1} \left\langle  (v_i^{\pm}, i \tilde{\mu}_i^{\pm} v_i^{\pm})^T,  
 \begin{pmatrix} 
\Gamma & 0 \\ P & 0
\end{pmatrix}  
(v_i^{\pm}, i \tilde{\mu}_i^{\pm} v_i^{\pm})^T \right\rangle = -i
\end{equation}
where $\tilde{\mu}_j^{\pm} := i\mu_j^{\pm} =\pm \sqrt{\nu_j} \in \mathbb{R}$ are the eigenvalues of $iA$. As a first step we translate $W_F$ by $\tilde{\mu}_N^{+} :=  \sqrt{\nu_N}$ so that we localise to a single eigenvalue. Then the idea is to find a solution $\mu$ close to this largest eigenvalue $\tilde{\mu}_N^{+}$ so that it approximates $\tilde{\mu}_N^{+}$ at an explicit rate in $N$.  

\noindent 
\emph{Step 2 - Spacing between the eigenvalues - why $\omega=\frac{1}{4}$ is a special value}: We denote by $\kappa_j^{\pm} := \tilde{\mu}_j^{\pm} - \tilde{\mu}_N^{+}$ and we observe that for $j \neq N$ this quantity is lower bounded. 
 Indeed for a general $\omega$: 
 \begin{equation}
 \begin{split} 
\kappa_j^{\pm}  = \frac{(\tilde{\mu}_j^{+})^2 -(\tilde{\mu}_N^{+})^2}{\tilde{\mu}_j^{+}+\tilde{\mu}_N^{+}} &\sim (\tilde{\mu}_j^{+})^2 -(\tilde{\mu}_N^{+})^2 
= \nu_j - \nu_N 
=- (\lambda_N- \lambda_j) [(1+4\omega)- \omega(\lambda_N+\lambda_j) ] \\ &
= -(\lambda_N- \lambda_j)[ (1+4\omega) - 2\omega \lambda_N - \omega (\lambda_j- \lambda_N) ]
 \\ & 
\sim -(\lambda_N- \lambda_j)(1-4\omega) - \omega (\lambda_N- \lambda_j)^2 \\
& \sim - \left( 4 - 4 \sin^2 \left( \frac{\pi j }{2(N+1)} \right)  \right)  (1-4\omega) -  \omega \left( 4 - 4 \sin^2 \left( \frac{\pi j}{2(N+1)} \right)  \right)^2 
\\ & \sim -
(1-4\omega) \cos^2\left(  \frac{\pi j}{2(N+1)} \right)  - \omega \cos^4 \left(  \frac{\pi j}{2(N+1)} \right).
\end{split}
\end{equation}

In particular when $j=N-1$, \begin{equation}
\label{eq:Nm4}
\kappa_{N-1}^+ \sim N^{-2} [1-4\omega+ 2\omega N^{-2}] =
N^{-2} (1-4\omega) + 2\omega N^{-4}.
\end{equation}
Note that for $\omega=1/4$ the term of the lower order $N^{-2}$ in the right-hand side above vanishes and so in this case the spacing between eigenvalues is smaller. 

\noindent
\emph{Step 3 - Scalar reduction}: 
We rewrite the Wigner matrix as 
 \begin{equation}
 \begin{split}
 W_F(\mu+\tilde{\mu}_N^+) =&  \sum_{j\in \pm[N] \setminus \{N\}} \Big((\mu-\kappa_j^{\pm})^{-1} |v_j^{\pm}(1)|^2 [\gamma -2\omega i \tilde{\mu}_j^{\pm}]    \Big)+ \gamma  \mu^{-1}  |v_N^{+}(1)|^2  [\gamma -2\omega i \tilde{\mu}_N^{+}].
 \end{split}
 \end{equation} 
 
 We then rewrite our eigenvalue problem as a sum of two polynomials in $\mu$: 
 $$ W_F(\mu+ \tilde{\mu}_N^+)=-i \Leftrightarrow  f(\mu) + g(\mu)  =0 $$ where $f(\mu) $ and $g(\mu) $ 
  are defined as follows:
  \begin{align*} 
f(\mu) =  i \mu + &   \sum_{(\pm,j) \neq (+,N)}  \frac{ [\gamma - \frac{i}{2} \tilde{\mu}_j^{\pm}] |v_j^{\pm}(1)|^2}{1 - \frac{\kappa_j^{\pm}}{\mu}}
\ \text{ and }\
    g(\mu) =  [\gamma - \tfrac{i}{2} \tilde{\mu}_N^{+}] |v_N^{+}(1)|^2. 
\end{align*}
We have then reduced the study of $\mu$ so that $W_F(\mu+\tilde{\mu}_N^+) =-i$ to finding $\mu$ so that 
$f(\mu) + g(\mu) =0.$


\noindent
\emph{Step 4 - Estimates on $f,g$ for $\omega= \frac{1}{4}$ }:
We first fix a ball of radius $r_N$, that we denote by $B(0,r_N)$.  The radius will be chosen in the following to have that $$|f(\mu)| > |g(\mu)|\text{ on the boundary of the ball with } \mu \in \partial B(0,r_N).$$ 

We notice that $|g(\mu)|\sim N^{-3}$: 
This is since $|v_N^+(1)|^2 \sim N^{-3}$ and the coefficient $ [\gamma - \frac{i}{2} \tilde{\mu}_N^{+}]$ is bounded by constants independently of $N$ from below and above. 

The second term in $f$ on $\partial B(0,|\mu|)$ is 
$$ \left\vert
\sum_{(\pm,j) \neq (+,N)} \frac{|v_j^{\pm}(1)|^2}{1 - \frac{\kappa_j^{\pm}}{\mu}} \right\vert   =\left\vert \sum_{j \in [N-1]
} \frac{|v_j^{+}(1)|^2}{1 - \frac{\kappa_j^{+}}{\mu}}\right\vert  + \mathcal{O}(|\mu|).$$
The $\mathcal{O}(|\mu|)$ term accounts for the sum containing $\kappa_j^{-}$'s: Using that the denominator $\kappa_j^{-}$ is bounded from above and below, we have that it behaves as $|\mu|\sum_j |v_j^{-}(1)|^2 \sim \mathcal{O}(|\mu|)$. Note also that we have neglected the coefficient $ (\gamma - \frac{i}{2} \tilde{\mu}_j^{\pm}) $ as for all indices $j$, these factors are uniformly bounded in $N$. We write therefore 
 \begin{align*}
     |f(\mu)| \sim |\mu| + |\mu|\left\vert   \frac{1}{N} \sum_{j\in [N-1]} \frac{4 \sin^2\left( \frac{\pi j }{N+1}\right)}{\mu - \cos^4 \left( \frac{\pi j }{2(N+1)}\right)}\right\vert
 \end{align*}
  where regarding the $\kappa_j^+$'s terms we used that
\begin{align*} 
\kappa_j^{+}&\sim  2(\lambda_j-\lambda_N)+1/4(\lambda_j^2-\lambda_N^2)\sim  - (\lambda_j-\lambda_N)(2- 1/4(\lambda_j+\lambda_N)) 
\\ & \sim 1/4(\lambda_j-4)^2\sim 4 \cos^4 \left(\frac{\pi j}{2(N+1)}\right). 
\end{align*}
Thus, we need to examine how the following sum scales
 \begin{align}
     \left\vert \frac{1}{N} \sum_{j\in [N-1]} \frac{4 \sin^2\left( \frac{\pi j }{N+1}\right)}{\mu- \cos^4 \left( \frac{\pi j }{2(N+1)}\right)}\right\vert, 
 \end{align}
which by Appendix \ref{lemm:Riemann} can be reduced to studying the integral $ \int_{\frac{1}{N}}^{\frac{N-1}{N+1}} \frac{\sin^2(\pi x)}{\mu - \cos^4\left( \frac{\pi x}{2}\right)} dx$, at least as long as $\mu<\cos^4(\frac{\pi}{2N}).$
 This integral scales as
 \begin{align*}
     &\int_{\frac{1}{N}}^{\frac{N-1}{N+1}} \frac{\sin^2(\pi x)}{\mu - \cos^4\left( \frac{\pi x}{2}\right)} dx = 
     \Bigg[ 4x -  4\sqrt{\frac{\sqrt{\mu}+\mu}{\mu}} \arctan \left( \sqrt{\frac{\mu}{\sqrt{\mu}+\mu}}\tan(\pi x/2) \right) -\\
     &\quad 4\sqrt{\frac{\sqrt{\mu}-\mu}{\mu}} \operatorname{arctanh} \left( \sqrt{\frac{\mu}{\sqrt{\mu}-\mu}}\tan(\pi x/2) \right)
     \Bigg]_{\frac{1}{N}}^{\frac{N-1}{N+1}}    \sim \mu^{-1/4} \text{ as }\mu \downarrow 0.
 \end{align*}
 
 The last estimate follows since $\tan\Big( \frac{\pi(N-1)}{2(N+1)} \Big)
  $ scales as $N$, $ \tan\Big(\frac{\pi}{2(N+1)}\Big)$ scales as $N^{-1}$, $\sqrt{\frac{\mu}{\sqrt{\mu}+\mu}} \sim \mu^{1/4} $ and that $ [\arctan \left(\mu^{1/4}N  \right) -\arctan \left(\mu^{1/4}N^{-1}  \right) ] \sim \mathcal{O}(1)$. 
 
 This means that $|f(\mu)| \sim |\mu| \left( 1+ |\mu|^{-1/4}  \right) 
 $, 
 while $|g| \sim N^{-3}$. 
 While the above computation is only valid for $0<\mu<\cos^4(\pi/(2N))$, since we are otherwise integrating over a singularity, a splitting argument as in \eqref{eq:est2_term in g} establishes the validity of the above scaling for the full range of $\mu.$
 
 \noindent
\emph{Step 5 - Upper and lowers bounds on the distance from $\mu_N^+$:}  
To summarize, we have shown in the previous section that 
\[ \vert g(\mu)\vert \sim N^{-3} \text{ and }\vert f(\mu)\vert \sim \mu^{3/4} \text{ as }\mu \downarrow 0.\]

Then choosing $r_N=|\mu|=\beta N^{-4}$, $\beta$ large enough, we see that on $\partial B(0,r_N)$,   $|f(\mu)| \sim \beta N^{-3} \gtrsim |g(\mu)|\sim N^{-3}$. 
Then this implies by 
Rouch\'{e}'s theorem since $f$ has one zero at zero, that there is one solution to $f(\mu)+ g(\mu)=0$ in $B(0,r_N)$ or since we have localised our eigen-problem to the eigenvalue $\tilde{\mu}_N^+$, there is one eigenvalue $\mu = \mu(N)$ of $i M_{[N]}$ in $B(\tilde{\mu}_N^+, r_N)$, i.e. \begin{align} \label{eq:UB_critical}
|\mu -\tilde{\mu}_N^+| \lesssim N^{-4}.
\end{align}
Therefore the real part of the smallest eigenvalue of the generator $\mathcal{L}$ decays at least as fast as $N^{-4}$. 

On the other hand, if one takes $|\mu| = \beta N^{-4}$, with $\beta >0$ small enough one gets that always $|g(\mu)| > |f(\mu)|$ with $g$ not having any zero inside $B(0,r_N)$. This implies by 
Rouch\'{e}'s theorem  again that there is no solution of $f(\mu)+ g(\mu)=0$ in $B(0,r_N)$ or equivalently that \begin{align} \label{eq:LB_critical} |\mu -\tilde{\mu}_N^+| \gtrsim N^{-4}.\end{align}

\end{proof}

\subsection{Full spectrum for $ \omega$ sufficiently small} \label{subsec:spectrum small omega}
Now we move on to estimating the spectral gap for next-to-nearest-neighbour interactions with coupling strength $\omega<1/4$. In this case we show that the spectral gap behaves as in the case of nearest-neighbour interaction, only. 

We shall consider a two-spec procedure for our perturbation problem unlike what we did in the first part of the article. In particular we write our matrix $M_{[N]}$ as 
\begin{equation} 
M_{[N]} = 
\begin{pmatrix} 
0 & -I \\ T_\omega + P & 0 
\end{pmatrix} +
\begin{pmatrix} 
\Gamma & 0 \\ 0 & 0 
\end{pmatrix} : = 
A_{\omega} + 
\begin{pmatrix} 
\Gamma & 0 \\ 0 & 0 
\end{pmatrix}.
\end{equation}
 Since $P =\operatorname{diag}(-2\omega, 0, \dots, 0)$, the matrix $M_{[N]}$ is a rank-$1$ perturbation of the matrix 
$A_\omega$. We shall show that one can still obtain explicit estimates on the spacing between the eigenvalues and on the eigenvectors of the matrix $A_{\omega}$. This is because the spectrum of $T_\omega + P$ can be compared and explicitly written in terms of the spectrum of $T_\omega$. Thus the spectrum of $A_{\omega}$ can be directly compared with the spectrum of $A$ given by \eqref{eq:spec A_0}. This is the content of the following theorem. 

\begin{theo}
\label{theo:spectr of A_omega}Let $\omega$ be sufficiently small. Let $\mu_j^{\pm} = \pm i \sqrt{\nu_j}$ be the eigenvalues of $A$ with corresponding eigenvectors $V_j^{\pm} = (v_j, - \mu_j^{\pm}v_j)^T$, where $(\nu_j, v_j)$ is the explicit eigensystem of $T_\omega$. The eigenvalues $\xi_j$ of $T_\omega+P$ then satisfy \begin{equation}\label{eq:evalues of T+P}
    |\xi_j - \nu_j|  \sim \omega |v_j(1)|^2,\ j \in [N]
\end{equation} 
and interlace $\xi_1<\nu_1<\xi_2<\nu_2<...<\xi_n<\nu_n$
with associated eigenvectors
\begin{equation} \label{eq: evector w_j}
w_j = \frac{1}{\sqrt{\sum_{k=1}^N \frac{|v_k(1)|^2}{(\nu_k - \xi_j)^2} } } 
\left( \sum_{k=1}^N \frac{|v_k(1)|^2}{(\nu_k - \xi_j)}, \sum_{k=1}^N \frac{v_k(1)v_k(2)}{(\nu_k - \xi_j)}, \dots, \sum_{k=1}^N \frac{ v_k(1)v_k(N)}{(\nu_k - \xi_j)}   \right)^T.
\end{equation} 
In particular it holds
$$|w_j(1)|^2 \sim |v_j(1)|^2. $$
Consequently the eigensystem of $A_\omega$ is
$$\left(\pm i \sqrt{\xi_j}, W_j^{\pm}\right), \text{ with } W_j^{\pm} = \left(w_j, \mp i \sqrt{\xi_j} w_j \right)^T,\ j \in [N].$$
\end{theo}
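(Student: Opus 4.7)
The plan is to exploit that $P = -2\omega\, e_1 e_1^*$ is a rank-one perturbation of $T_\omega$, so that the eigenvalues and eigenvectors of $T_\omega + P$ can be determined from a scalar Weinstein--Aronszajn secular equation. Once this eigensystem is in hand, the eigensystem of $A_\omega$ follows by the same $2\times 2$ block reduction used in Proposition~\ref{spec:A0}.

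\textbf{Secular equation and interlacing.} Using $T_\omega = \sum_k \nu_k v_k v_k^*$ together with Sylvester's identity (Lemma~\ref{lemm:Sylvester}), I would first derive
\[ \det(T_\omega+P-\xi) = \det(T_\omega-\xi)\Bigl(1 - 2\omega \sum_{k=1}^N \tfrac{|v_k(1)|^2}{\nu_k-\xi}\Bigr), \]
so each eigenvalue $\xi_j \notin \Spec(T_\omega)$ of $T_\omega+P$ satisfies $F(\xi_j) = 1/(2\omega)$ with $F(\xi) := \sum_k |v_k(1)|^2/(\nu_k-\xi)$. Since $F$ is strictly increasing on each interval $(\nu_{j-1},\nu_j)$ from $-\infty$ to $+\infty$, one obtains the interlacing $\nu_{j-1} < \xi_j < \nu_j$, and because $1/(2\omega)$ is large and positive, $\xi_j$ must sit close to the right endpoint $\nu_j$. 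To quantify this, I would isolate the singular term,
\[ \frac{|v_j(1)|^2}{\nu_j - \xi_j} = \frac{1}{2\omega} - \sum_{k \ne j}\frac{|v_k(1)|^2}{\nu_k - \xi_j}. \]
For $\omega < 1/4$ the map $\lambda \mapsto (1+4\omega)\lambda - \omega \lambda^2$ is strictly increasing on $\Spec(-\Delta_{[N]}^D)$, so $\nu_k$ inherits the uniform spacing of $\lambda_k$, and $\xi_j \in (\nu_{j-1},\nu_j)$ gives $|\nu_k - \xi_j| \sim |\nu_k - \nu_j|$ for $k \ne j$. A Riemann-sum and partial-fraction estimate identical in spirit to \eqref{eq:est1_term in f} in the proof of Theorem~\ref{theo: spectrum_magnetic field} then shows the off-diagonal sum is $\mathcal{O}(1)$ uniformly in $N$ and $j$, and for $\omega$ sufficiently small the $1/(2\omega)$ term dominates, yielding $\nu_j - \xi_j \sim 2\omega |v_j(1)|^2$.

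\textbf{Eigenvector formula and the estimate $|w_j(1)|^2 \sim |v_j(1)|^2$.} Since $P = -2\omega\, e_1 e_1^*$, the eigenvector equation $(T_\omega+P-\xi_j)w_j = 0$ forces $w_j \propto (T_\omega - \xi_j)^{-1} e_1$, and expanding $e_1$ in the basis $(v_k)$ reproduces exactly formula \eqref{eq: evector w_j} after normalization. For the first coordinate, the crucial simplification is that the secular equation makes the numerator a fixed constant: $w_j(1) \propto \sum_k |v_k(1)|^2/(\nu_k-\xi_j) = 1/(2\omega)$, so
\[ |w_j(1)|^2 = \frac{(2\omega)^{-2}}{\sum_k |v_k(1)|^2/(\nu_k-\xi_j)^2}. \]
The $k=j$ term in the denominator equals $|v_j(1)|^2/(\nu_j - \xi_j)^2 \sim (\omega^2 |v_j(1)|^2)^{-1}$ by the previous step, while the remaining off-diagonal terms are $\mathcal{O}(1)$ by the analogous Riemann-sum estimate used in \eqref{eq:est2_term in g}. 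This gives $|w_j(1)|^2 \sim |v_j(1)|^2$.

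\textbf{Passage to $A_\omega$ and main obstacle.} With $T_\omega + P$ diagonalized by the orthonormal basis $(w_j)$ with eigenvalues $\xi_j > 0$ (positivity uses $\omega$ small so that $\xi_1$ stays close to $\nu_1 > 0$), conjugating $A_\omega$ by $\operatorname{diag}(W,W)$ with $W$ the matrix of columns $w_j$ decouples it into independent $2\times 2$ blocks $\left(\begin{smallmatrix} 0 & -1 \\ \xi_j & 0 \end{smallmatrix}\right)$, yielding eigenvalues $\pm i\sqrt{\xi_j}$ and eigenvectors $(w_j, \mp i\sqrt{\xi_j}\,w_j)^T$, exactly as in Proposition~\ref{spec:A0}. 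The main technical obstacle is the uniform-in-$N$ control of the off-diagonal sums $\sum_{k\ne j}|v_k(1)|^2/(\nu_k-\xi_j)^p$ for $p=1,2$; this is where the smallness assumption on $\omega$ enters, since it ensures $\nu_k$ remains strictly monotone in $k$ and that the reference integrand in the Riemann comparison has no new singularities beyond the one at $k=j$, so that the estimates from Section~2 carry over without essential modification.
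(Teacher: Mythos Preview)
Your proposal is correct and follows essentially the same strategy as the paper: both reduce to the scalar secular equation $\sum_k |v_k(1)|^2/(\nu_k-\xi)=1/(2\omega)$ via Sylvester's identity, and both control the off-diagonal contributions using the Riemann-sum/partial-fraction estimates from the proof of Theorem~\ref{theo: spectrum_magnetic field}. The differences are in execution rather than idea. For the eigenvalue localization \eqref{eq:evalues of T+P}, the paper translates by $\nu_j$, builds auxiliary functions $f,g$, and applies Rouch\'e's theorem on circles of radius $\alpha\omega|v_j(1)|^2$ and $\varepsilon\omega|v_j(1)|^2$ (mirroring the machinery of Section~2), whereas you isolate the $k=j$ term directly and bound the remainder; your route is slightly more elementary here but requires a mild bootstrap to justify $|\nu_k-\xi_j|\sim|\nu_k-\nu_j|$ before you know $\xi_j$ is close to $\nu_j$, which the Rouch\'e argument sidesteps by working on a prescribed circle. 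For the eigenvector formula \eqref{eq: evector w_j} the paper cites \cite[Theo.~5]{BNS78}, while you derive it directly from $(T_\omega-\xi_j)^{-1}e_1$; for the estimate $|w_j(1)|^2\sim|v_j(1)|^2$ your use of the secular equation to collapse the numerator to $(2\omega)^{-1}$ is a clean shortcut that the paper also exploits implicitly.

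One small slip: the off-diagonal sum $\sum_{k\ne j}|v_k(1)|^2/(\nu_k-\xi_j)^2$ is $\mathcal O(N)$, not $\mathcal O(1)$, as the estimate around \eqref{eq:est2_term in g} actually shows. This does not affect your conclusion, since the $k=j$ term is $\sim (\omega^2|v_j(1)|^2)^{-1}\gtrsim N/\omega^2$, which still dominates $\mathcal O(N)$ for $\omega$ small.
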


\begin{proof}
    Regarding the eigenvalues of $T_\omega+P$, first we write $T_\omega+P = T_\omega - 2\omega e_1e_1^T$, so that Sylvester's identity, see Lemma \ref{lemm:Sylvester}, yields
    \begin{align*}
    &\det(T_\omega +P - \xi \operatorname{Id}_{\mathbb{C}^{N\times N}}) = \det (T_\omega- \xi \operatorname{Id}_{\mathbb{C}^{N\times N}} ) \det( \operatorname{Id}_{\mathbb{C}^{N\times N}}  - 2\omega (T_\omega- \xi \operatorname{Id}_{\mathbb{C}^{N\times N}} )^{-1}e_1e_1^T )\\ 
    &=  \det (T_\omega- \xi \operatorname{Id}_{\mathbb{C}^{N\times N}} ) 
     ( 1 - 2\omega e_1^T (T_\omega- \xi \operatorname{Id}_{\mathbb{C}^{N\times N}} )^{-1}e_1). 
    \end{align*} 
    This implies by the spectral decomposition of $T_{\omega}$, that for $\xi \notin \Spec(T_\omega)$, the eigenvalues $\xi$ of $T_\omega+P$ solve 
    $1= 2\omega \sum_{k \in [N]} \frac{|v_k(1)|^2}{\nu_k -\xi} : = W_{\omega} (\xi)$, 
    where $(\nu_j, v_j)$ is the eigensystem of $T_{\omega}$. We note that all the eigenvalues are real and we translate our problem by $\nu_j$: $R_\omega(\xi):= W_\omega(\xi + \nu_j) = 2\omega\sum_{k} \frac{|v_k(1)|^2}{\kappa_k -\xi} = 1$. Here $\kappa_k := \nu_k  - \nu_j$ which is lower bounded by the same calculation as below in \eqref{eq:spacing of evalues xi_j}. Finding now a solution to $R_\omega(\xi)=1$ is a simper version of the eigenproblem solved in Theo. \ref{theo: spectrum_magnetic field}, as the matrices are symmetric so the spectrum lies on the real line. In the following, we shall construct rational functions $f(\xi), g(\xi)$ so that $(f + g)(\xi) = \xi ( R_\omega(\xi) -1)$.
  Using appropriate estimates, we shall then exhibit solutions such that 
\begin{equation}
\label{eq:bounds}
 C_1 \omega |v_j(1)|^2 \leq |\xi_j - \nu_j| \leq  C_2 \omega |v_j(1)|^2
 \end{equation}
  for some universal constants $C_1, C_2>0$ independent of $N, \omega, j$. Indeed take 
  $$ f(\xi) = \xi \left( - 1 - 2\omega \sum_{k \neq j} \frac{\vert v_k(1)\vert^2}{ \kappa_k }\right),\ 
g(\xi) = 2\omega\Bigg(\vert v_j(1)\vert^2-\sum_{k \neq j} \frac{\vert v_k(1)\vert^2 \xi^2}{\kappa_k^2}-\sum_{k \neq j} \frac{\vert v_k(1)\vert^2 \xi^2}{(\kappa_k)^2(\kappa_k/\xi-1)}\Bigg).  $$ 
The estimates in the proof of Theorem \ref{theo: spectrum_magnetic field} yield for some positive constant $c$: 
$$ |f(\xi)| \sim |\xi|,\ 2\omega |v_j(1)|^2 - 2\omega c N |\xi|^2 \lesssim |g(\xi)| \lesssim 2\omega |v_j(1)|^2 +2\omega c N |\xi|^2.$$
Now \emph{for the upper bound} in \eqref{eq:bounds}, we define the radius of the ball $B(0,r_N)$, $r_N:= \alpha \omega |v_j(1)|^2$ for $\alpha$ large enough and $\omega$ sufficiently small. We want to have that $|g(\xi)| < |f(\xi)| \sim |\xi|$ on $\partial B(0,r_N)$. That is when
$2\omega |v_j(1)|^2 +2\omega c N |\xi|^2 < \alpha \omega  |v_j(1)|^2 $ which is the case as long as $ 2\omega + 2 c \omega^3 \alpha^2< \alpha \omega$ or $\omega < (2\alpha c)^{-1/2}$. Then since $f$ has one zero in $B$, so does $f+g$ in $B$, say $\xi_j$. This implies the upper bound on $|\nu_j -\xi_j|$. In particular, it also implies $\vert \xi_{j}-\nu_{j-1}\vert \ge C_2 \omega \vert \nu_j(1)\vert^2$, since there is precisely one zero. The interlacing property follows since $P$ is a monotone rank $1$-perturbation (Weyl inequalities).  \\
\emph{For the lower bound}, we argue analogously: Let $|\xi| = \varepsilon \omega |v_j(1)|^2$ for $\varepsilon>0$ small enough, then as $g$ is lower bounded by a leading order term $2\omega |v_j(1)|^2$, it does not have any solution in $B(0,|\xi|)$. Moreover $|g| > |f|$ on $\partial B(0,\vert \xi\vert)$, implying by Rouch\'{e}'s theorem that neither $g+f$ has a solution in $B$. Therefore $|\xi_j - \nu_j |\geq C_1\omega|v_j(1)|^2$ for some universal constant $C_1$.

 Concerning the (normalised) eigenvectors $w_j$ of $T_\omega+P$, with corresponding eigenvalue $\xi_j$, we arrive at the stated formula \eqref{eq: evector w_j} by applying \cite[Theo. 5]{BNS78}. In particular the general expression for these eigenvectors is $w_j = Q (D-\xi_j \operatorname{Id} )^{-1} Q^T x_\omega/ \| (D-\xi_j \operatorname{Id} )^{-1} Q^Tx_\omega \|_2 $, where $x_\omega$ is so that $x_\omega x_\omega^T = -P$ and where $Q DQ^T$ is the orthogonal decomposition of $T_{\omega}$: the columns of $Q$ are made of the eigenvectors $v_i$ of $T_\omega$. In our notation where $x_\omega = (\sqrt{2\omega}, 0, \cdots, 0)^T$,  we get the formula \eqref{eq: evector w_j}.
 

We then write $\xi_j = \nu_j + \delta_j \vert v_j(1)\vert^2$ and find 
\begin{equation} \label{eq:formula for w_j}
\begin{split}
 |w_j(1)|^2 &= \left\vert
 \frac{\sum_{k=1}^N \frac{|v_k(1)|^2}{(\nu_k - \xi_j)}}{\sqrt{\sum_{k=1}^N \frac{|v_k(1)|^2}{(\nu_k - \xi_j)^2} }}\right\vert^2 
 =
\left\vert \frac{\sum_{k \neq j} \frac{|v_k(1)|^2}{(\nu_k - \nu_j -\delta_j \vert v_j(1)\vert^2)} - \frac{1}{ \delta_j }   }{  \sqrt{  \sum_{k\neq j} \frac{|v_k(1)|^2}{(\nu_k - \nu_j -\delta_j |v_j(1)|^2 )^2} + \frac{|v_j(1)|^2}{ \delta_j^2 |v_j(1)|^4 } }  }\right\vert ^2\\
 &=\vert v_j(1)\vert^2 
\left\vert \frac{\sum_{k \neq j} \frac{|v_k(1)|^2}{(\nu_k - \nu_j -\delta_j \vert v_j(1)\vert^2)} - \frac{1}{ \delta_j }   }{  \sqrt{ \vert v_j(1)\vert^2 \sum_{k\neq j} \frac{|v_k(1)|^2}{(\nu_k - \nu_j -\delta_j |v_j(1)|^2 )^2} + \frac{1}{ \delta_j^2 } }  }\right\vert ^2.
 \end{split}
\end{equation}
To see that $\vert w_j(1)\vert^2 \sim \vert v_j(1)\vert^2$ it then suffices to observe that 
\begin{equation}
\label{eq:estimate_1}
\left\vert \frac{\sum_{k \neq j} \frac{|v_k(1)|^2}{(\nu_k - \nu_j -\delta_j \vert v_j(1)\vert^2)} - \frac{1}{ \delta_j }   }{  \sqrt{ \vert v_j(1)\vert^2 \sum_{k\neq j} \frac{|v_k(1)|^2}{(\nu_k - \nu_j -\delta_j |v_j(1)|^2 )^2} + \frac{1}{ \delta_j^2 } }  }\right\vert ^2 \sim 1.
\end{equation}

This is readily shown as in \eqref{eq:est1_term in f}, as $\delta_j \vert v_j(1)\vert^2$ is a negligible shift that does not affect the scaling.
Thus, we have that 
\[ \sum_{k \neq j} \frac{|v_k(1)|^2}{(\nu_k - \nu_j -\delta_j \vert v_j(1)\vert^2)}  = \mathcal O(1).\]
Thus, since by \eqref{eq:bounds}, we have $\vert 1/\delta_j\vert \sim \frac{1}{\vert \omega\vert}$, the numerator behaves $\sim \vert \omega \vert^{-1}.$

For the denominator, we argue as in \eqref{eq:est2_term in g} to see that 
\[ \vert v_j(1)\vert^2 \sum_{k\neq j} \frac{|v_k(1)|^2}{(\nu_k - \nu_j -\delta_j |v_j(1)|^2 )^2} = \mathcal O(1).\] Thus, the leading contribution is given by $ 1/\delta_j^2 \sim 1/\omega^2$ for $\omega$ small. Combining the estimates for both the numerator and denominator, we find \eqref{eq:estimate_1}.

\end{proof}

\begin{theo}[Scaling of the spectral gap when $\omega$ sufficiently small] \label{theo:spectrum for small omega}
Consider the chain of oscillators with $\eta>0$ and with friction at one end. Also consider Dirichlet boundary conditions so that $P$ is a rank-$1$ perturbation: $T_\omega+ P = (1+4\omega) \mathscr V_1 - \omega \mathscr V_1^2 + \operatorname{diag}(-2\omega, 0, \dots, 0).$
As long as $ \omega$ is small enough, the smallest real part of the eigenvalues of $\mathcal{L}$, the spectral gap $g(N)$ satisfies 
\begin{equation} 
g(N) \sim N^{-3}.
\end{equation} 
Taking also the friction $\gamma$ and a constant $\varepsilon(\gamma)$ both sufficiently small, the eigenvalues $\mu$ of $M_{[N]}$ lie in the following regions $$\mu \in B_{\mathbb{C}}(\xi_j^{\pm}, |W_j^{\pm}(1)|^2) /B_{\mathbb{C}}(\xi_j^{\pm}, \varepsilon(\gamma)|W_j^{\pm}(1)|^2)$$
where $\xi_j^{\pm}:=\pm i \sqrt{\xi_j}$ are the eigenvalues to $A_{\omega}$ with corresponding eigenvectors $W_j^{\pm} = (w_j, \mp i \sqrt{\xi_j} w_j )^T$ in the notation of Theorem \ref{theo:spectr of A_omega}.
\end{theo}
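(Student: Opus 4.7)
The plan is to apply the two-step perturbation strategy outlined before the theorem: the matrix $M_{[N]}$ is the rank-one perturbation $A_\omega + \operatorname{diag}(\Gamma,0)$ of $A_\omega$, so we apply the Wigner reduction of Proposition~\ref{prop:Wigner} with $A$ replaced by $A_\omega$. The spectral problem becomes equivalent to finding $\mu$ solving $W_F(\mu) + i = 0$ with
\begin{equation*}
W_F(\mu) = \gamma \sum_{\pm, j} (\mu - \xi_j^{\pm})^{-1} |W_j^{\pm}(1)|^2,
\end{equation*}
using that the eigensystem $(\xi_j^{\pm}, W_j^{\pm})$ of $A_\omega$ is explicitly described by Theorem~\ref{theo:spectr of A_omega}.

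The key inputs from Theorem~\ref{theo:spectr of A_omega} are the comparisons $|\xi_j - \nu_j| \sim \omega |v_j(1)|^2$ and $|w_j(1)|^2 \sim |v_j(1)|^2$, and the interlacing. Combined with the spacing bound $|\nu_{j+1} - \nu_j| \gtrsim |v_j(1)|^2$ from~\eqref{eq:uniformly_bounded} (which uses $\omega < 1/4$ to keep $(1+4\omega) - \omega(\lambda_j+\lambda_{j+1})$ bounded away from zero), the first comparison yields, for $\omega$ small,
\begin{equation*}
|\xi_{j+1} - \xi_j| \sim |\nu_{j+1} - \nu_j| \gtrsim |v_j(1)|^2 \sim |w_j(1)|^2 .
\end{equation*}
This is precisely the quantitative separation between the spacing of eigenvalues and the squared eigenvector entry at the friction site that allowed the Rouch\'e argument in Theorem~\ref{theo: spectrum_magnetic field} to succeed.

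With these estimates, the remainder of the proof is a line-by-line adaptation of the argument producing \eqref{eq:annuli} and \eqref{eq:really}. Localizing to a chosen $\xi_i^s$ via $\lambda = \mu - \xi_i^s$ and defining $\kappa_j^{\pm} := \xi_j^{\pm} - \xi_i^s$, we decompose $\lambda(W_F(\lambda + \xi_i^s) + i) = f(\lambda) + g(\lambda)$ exactly as before. The sums $\sum_{(\pm,j)\neq(s,i)} |W_j^{\pm}(1)|^2 / \kappa_j^{\pm}$ and $\sum_{(\pm,j)\neq(s,i)} |W_j^{\pm}(1)|^2 / (\kappa_j^{\pm})^2$ are controlled by the same Riemann-integral comparison and partial-fraction decomposition, producing $\mathcal O(1)$ and $\mathcal O(|x_{\uparrow} - x_{\star}|^{-1})$ respectively; here the substitution $(\nu_j, v_j) \mapsto (\xi_j, w_j)$ only perturbs constants, by the comparability above. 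Rouch\'e's theorem applied on circles of radius $c |W_i^s(1)|^2$ and $\varepsilon(\gamma) |W_i^s(1)|^2$ then yields the claimed annular localization for $\gamma, \varepsilon(\gamma)$ sufficiently small.

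For the scaling $g(N) \sim N^{-3}$ with general $\gamma > 0$, the upper bound $g(N) \lesssim N^{-3}$ follows from the Rouch\'e step applied at $\xi_1^+$, where $|W_1^+(1)|^2 \sim |v_1(1)|^2 \sim N^{-3}$, with no smallness assumption on $\gamma$. The matching lower bound $g(N) \gtrsim N^{-3}$ is obtained by the imaginary-part contradiction argument used at the end of the proof of Theorem~\ref{theo: spectrum_magnetic field}: assuming a hypothetical solution $\lambda_{\text{imag}}$ with modulus $o(N^{-3})$ shifted onto the imaginary axis forces $\Im W_F(\lambda_{\text{imag}})$ to vanish, contradicting $W_F = -i$. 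The main technical point to verify throughout is the uniformity of the constants in the Riemann-sum approximations when $(\nu_j, v_j)$ is replaced by $(\xi_j, w_j)$, which is guaranteed by the smallness of $\omega$ and by $|w_j(1)|^2 \sim |v_j(1)|^2$; this is the step where the restriction $\omega$ small (as opposed to only $\omega < 1/4$) is genuinely used.
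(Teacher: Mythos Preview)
Your proposal is correct and follows essentially the same approach as the paper: the two-step perturbation via $A_\omega$, the Wigner reduction, the $f+g$ decomposition with Rouch\'e on circles of radius $\sim |W_i^s(1)|^2$, and the imaginary-part contradiction for the lower bound on $g(N)$ all match the paper's Steps~1--5. The only cosmetic differences are that the paper works with $iA_\omega$ (real eigenvalues $z_j^{\pm}=\mp\sqrt{\xi_j}$) rather than $A_\omega$ directly, and localizes the upper-bound Rouch\'e step at index $i=N$ rather than your $i=1$; since $|v_1(1)|^2\sim|v_N(1)|^2\sim N^{-3}$, both choices yield the same $N^{-3}$ upper bound without smallness of $\gamma$.
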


\begin{proof}
\emph{Step 1 - Spacing of the eigenvalues}: Following the notation above we denote by $(\xi_j,w_j)$ the eigensystem to $T_\omega+ P$. Then the eigensystem to $A_{\omega}$ is given by $(\pm i \sqrt{\xi_j}, W_j^{\pm})$, with $W_j^{\pm} = (w_j, \mp i \sqrt{\xi_j} w_j )^T$, $j \in [N]$.  In order to work with real numbers, we multiply everything by $i$ so that 
 $ z_j^{\pm}:= \mp \sqrt{\xi_j} \in \R$, $j \in [N]$  are the eigenvalues of $iA_{\omega}$.


We start by recalling that the spacing $\kappa_j^{\pm}:=z_j^{\pm}-z_i^{\pm}$ between the eigenvalues is lower-bounded in the same way as in the proof of Theorem \ref{theo: spectrum_magnetic field}: 
 \begin{equation} \label{eq:spacing of evalues xi_j}
 \begin{split}
  \vert {\kappa}_j^{\pm}\vert&=  \vert z_j^{\pm} - z_i^{\pm}\vert \gtrsim   \vert \sqrt{\xi_j} - \sqrt{\xi_i} \vert \sim \vert \xi_j - \xi_i\vert
  \\ &\geq  \vert \nu_j-\nu_i\vert  -C \omega (\vert v_j(1)\vert^2+\vert v_i(1)\vert^2) \\
  &\ge (1-4\vert \omega\vert)\vert \lambda_j-\lambda_i\vert - C \vert \omega\vert (\vert v_j(1)\vert^2+\vert v_i(1)\vert^2)\\
  &\gtrsim_{\omega} \vert \lambda_i-\lambda_j\vert
  \end{split}
 \end{equation}
 where we used $\vert \nu_j-\nu_i\vert \ge (1-4\vert \omega\vert) \vert \lambda_j-\lambda_i\vert$ for $\omega \in (-1/4,1/4)$ and the smallness of $\omega$ together with \eqref{eq:uniformly_bounded} in the last step.

\noindent
\emph{Step 2 - Reduction of the dimension}: 
As usual we reduce our spectral problem to studying the spectrum of the lower-dimensional Wigner matrix, just as in \eqref{eq:Wigner_mat},
defined for $\mu \notin \Spec(i A_\omega)$ as
  \begin{equation} 
  W_F (\mu):= \operatorname{Tr} \left( (  \mu-i A_\omega )^{-1} \begin{pmatrix} 
\Gamma & 0 \\ 0 & 0
\end{pmatrix}   \right).
\end{equation}
An eigenvalue $\mu$ of $i M_{[N]}$ is then a solution $\mu$ of $W_F (\mu) =-i$. Thus equivalently we look for solutions $\mu$ at
 \begin{equation} \label{eq: wigner equation}
  \sum_{\pm, k=1}^N(\mu-z_k^{\pm})^{-1} \left\langle  (w_k^{\pm}, \pm i \sqrt{\xi_k}w_k^{\pm})^T,  
 \begin{pmatrix} 
\Gamma & 0 \\ 0& 0
\end{pmatrix}  
 (w_k^{\pm}, \pm i \sqrt{\xi_k}w_k^{\pm})^T \right\rangle = -i.
\end{equation}

Noticing that for a fixed arbitrarily chosen $i \in \pm [N]$, the difference of the eigenvalues is lower bounded, cf \eqref{eq:spacing of evalues xi_j},  
brings us to the same  situation as in the proof of Theorem \ref{theo: spectrum_magnetic field} (or in \cite[Proposition 3.2]{BM20}). Thus we are able to study all the eigenvalues around $\xi_j^{+}$ for all $j$'s, rather than just around the largest eigenvalue $\xi_N^{+}$. This is by localising our eigenproblem around every eigenvalue $z_j^{\pm}:= \pm  \sqrt{\xi_j}$ and finding a solution of the corresponding polynomial inside a ball around $z_j^{\pm}$.

 Take without loss of generality $s=+$. First, we translate $W_F$ by $z_i^{+} :=  \sqrt{\xi_i}$, for a fixed index $i \in [N]$, so that we localise our problem around the eigenvalue $z_i^{+}$. The purpose is to find a solution $\mu$ close to this eigenvalue $z_i^{+}$ and quantify in $N$ the convergence rate towards $z_i^{+}$. 

We further reduce the dimension of our problem to a scalar problem. Thus eventually looking for $\mu$ satisfying  
 \begin{equation}\label{eq: W_F=-i localised} 
 \begin{split}
 W_F(\mu+z_i^+) =&  \gamma \sum_{(\pm,j) \setminus \{+,i\}} (\mu-\kappa_j^{\pm})^{-1} |w_j^{\pm}(1)|^2   + \gamma  \mu^{-1}  |w_i^{+}(1)|^2 = -i.  
 \end{split}
 \end{equation}

\emph{Step 3 - Upper Bound on the eigenvalues for $\omega$ sufficiently small}: 
Using the expansion
$$(\lambda-\mu)^{-1} = -\mu^{-1}\sum_{n\ge 0}(\lambda \mu^{-1})^n = -\mu^{-1}-\mu^{-2}\lambda-\mu^{-2}\lambda^2(\mu-\lambda)^{-1},$$ 
we define the polynomials $f,g$ as follows: 
\begin{equation} \label{eq:f,g_omega<1/4}
\begin{split}
 f(\mu) = i \mu - &\gamma\sum_{(\pm,j) \neq (s,i)} \frac{\mu}{{\kappa}_j^{\pm}}|w_j^{\pm}(1)|^2  
 \end{split}
  \end{equation}
 and 
\begin{equation} 
\begin{split} 
 g(\mu)  =&  \gamma |w_i^+(1)|^2 - \gamma \sum_{(\pm,j) \neq (s,i)} \left[ \frac{|w_j^{\pm}(1)|^2 \mu^2}{({\kappa}_j^{\pm})^2} + \frac{|w_j^{\pm}(1)|^2 \mu^3}{({\kappa}_j^{\pm})^2({\kappa}_j^{\pm}-\mu)} \right]. 
 \end{split}
  \end{equation}
  We are now equivalently looking for solutions to the equation $f(\mu) + g(\mu)  =0$, since these also solve $R_{F}(\mu):= W_{F}(\mu+z_i^{+})=-i$,   


 Since the eigenvalue difference \eqref{eq:spacing of evalues xi_j} satisfies  as in the proof of Theorem \ref{theo: spectrum_magnetic field}, see \eqref{eq:kappa_estm}, and the eigenvectors obey the same asymptotics by Theorem \ref{theo:spectr of A_omega}, we have that for $\mu \in \partial B(0,r_N)$
 \begin{equation} 
\begin{split} 
\gamma( |w_i^+(1)|^2  -& Nr_N^2)  \lesssim  |g(\mu)| \lesssim   \gamma(|w_i^+(1)|^2  + Nr_N^2 )
\end{split} 
\end{equation}  
while $$ |f(\mu)| \sim r_N = |\mu|.$$ 

Thus choosing $r_N = c  |w_i^+(1)|^2 $ for some constant $c>0$ and the friction $\gamma$ small enough, we have $|f(\mu) |> |g(\mu)|$ on $\partial B(0, r_N)$. This allows to conclude by Rouch\'{e}'s theorem that there exists a solution to $f+g$ inside this ball. This implies the existence of one eigenvalue $\mu(N) \in B_{\mathbb{C}}(z_i^+,c|w_i^+(1)|^2)$. The upper bound on the spectral gap $N^{-3}$ follows, as $|w_N^+(1)|^2 \sim |v_N^+(1)|^2 \sim N^{-3}$, by Theorem \ref{theo:spectr of A_omega}.  
The smallness condition on $\gamma$ here is needed because for an arbitrary index $i$, the eigenvectors merely satisfy $|w_i^+(1)|^2 \sim |v_i^+(1)|^2= \mathcal{O}( N^{-1})$ which is the same order of decay as the other terms in $g$ and $f$. Thus we need to make the coefficients $\gamma$ small enough to get that $|f(\mu)| > |g(\mu)|$ when we localise around any index $i$. Smallness on $\gamma$ is not required however when $i=N$, i.e. in order to get an upper bound on the spectral gap.

\noindent
\emph{Step 4 - Lower Bound on the eigenvalues for $ \omega$ sufficiently small}:
For the lower bound when $|\mu|= \varepsilon(\gamma)|w_i^+(1)|^2$, with $\varepsilon(\gamma)$ small enough, 
we see that there is no eigenvalue $ \mu$ in   $B(z_i^+, \varepsilon(\gamma)|w_i^+(1)|^2 )$. This is since $|f(\mu)|<|g(\mu)|$ on $B(0,|\mu|)$ and there is no solution of $g$ in this ball. This holds for any index $i$, as this was chosen arbitrarily in the beginning, and also for any fixed friction $\gamma$. This gives a lower bound on $| \mu(N) - z_i^+ |$.

Now we want to show that is the imaginary part of $\mu(N)$ that is responsible for the decay (we remind that $\mu(N)$ is an eigenvalue of $i M_{[N]}$). By contradiction, say that  we have a solution $\mu$ to our spectral problem with $\operatorname{Im}(\mu) = o(1)\operatorname{Re}(\mu)$. Taking the imaginary part of $f(\mu) + g(\mu) =0$, implies by restricting to positive signs that $$\operatorname{Re}(\mu) + \operatorname{Im}(\mu)  \sum_{j \neq i}\frac{\gamma  |w_j^{+}(1)|^2}{{\kappa}_j^+} +  \operatorname{Im}(g(\mu))=0.$$ The sum in the above equation is of order $\mathcal{O}(1)$ as was estimated already in Theo. \ref{theo: spectrum_magnetic field}. This implies, taking into account our assumption, that 
\begin{equation} \label{eq:Im(f+g)}
 \operatorname{Re}(\mu)(1+o(1)) +  \operatorname{Im}(g(\mu))=0. 
 \end{equation}
From the estimates on the terms of $g$: $|\operatorname{Im}(g(\mu))| = \mathcal{O}\left(  [\operatorname{Im}(\mu^2) + \operatorname{Im}(\mu^3/({\kappa}_j^+ -\mu) )  ]N \right) \leq \mathcal{O}(N \operatorname{Im}(\mu^2) ) = \mathcal{O}(N \operatorname{Im}(\mu)\operatorname{Re}(\mu)) = \mathcal{O}(N \operatorname{Re}(\mu)^2o(1))= o(1)$. As in \eqref{eq:Im(f+g)} we find that $\operatorname{Re}(\mu)=0$ for large $N$. That is a contradiction as this would imply that there is an eigenvalue in the ball $B(z_i^+, \varepsilon(\gamma)|w_i^+(1)|^2 )$. 

\noindent
\emph{Step 5 - Lower Bound on the spectral gap for $ \omega$ sufficiently small but without the smallness condition on $\gamma$}:
Now, without restricting to small $\gamma$, we want to make sure that it is the imaginary part of the spectral gap $\mu$ that it is responsible for this decay on $|\mu|$, as this would imply that the spectral gap of $M_{[N]}$ has indeed a lower bound of order $N^{-3}$ for all bounded frictions. 

To this end we run the same argument as in the lower bound in Theorem \ref{theo:mag}: we shift the eigenvalues $z_j^{\pm}$ horizontally so that we may have $\Re(\mu)=0$ and we assume by contradiction that 
 $\mu = \mathcal{O}(N^{-3}R_N^{-1}) =\mathcal{O}(|w_N^+(1)|^2R_N^{-1})$, for $R_N \to \infty$. Then let us denote by $\phi_j^{\pm}$ the shifted $\xi_j^{\pm}$'s, we take the imaginary part on both sides of \eqref{eq: wigner equation} to find
\begin{equation} \label{eq:longrange_lower bd}
    \begin{split}
-1&= \Im \left(W_{F}\left(  \mu = \frac{i}{N^{3}R_N} \right)  \right)   = \Im \left( \sum_{j, \pm} (\frac{i}{N^{3}R_N} - \phi_j^{\pm} )^{-1} \gamma   |w_j^{\pm}(1)|^2 \right) \\ 
&= \Im \left( \sum_{j, \pm} \gamma  \frac{-N^3 R_N (i+ \phi_j^{\pm}N^3R_N)}{1+(\phi_j^{\pm})^2N^6R_N^2}|w_j^{\pm}(1)|^2 \right)  = 
-\gamma \sum_{j, \pm} 
\frac{ N^3 R_N |w_j^{\pm}(1)|^2}{1+(\phi_j^{\pm})^2N^6R_N^2}.
\end{split}
\end{equation}
We shall restrict us now to positive signs and denote the smallest of the $|\phi_j^{+}|$ by $|\phi_{j_0}^+|$. We may then assume that $|\phi_{j_0}^+|> \varepsilon(\gamma) |w_{j_0}^+(1)|^2$. This is since otherwise: if $\phi_{j_0}^+ \in B(0,\varepsilon(\gamma) |w_{j_0}^+(1)|^2 )$ as the spectral gap is also assumed to decay as $o(1)N^{-3}$, would imply that $\mu \in B(z_{j_0}^+,\varepsilon(\gamma) |w_{j_0}^+(1)|^2 )$ which contradicts the fact that there is no eigenvalue $ \mu$ in such balls. 

As then $|\phi_{j_0}^+| \geq \varepsilon(\gamma) |w_{j_0}^+(1)|^2$, we estimate the terms in \eqref{eq:longrange_lower bd} as follows
\begin{align*}
    \frac{\gamma \vert w_{j_0}^{+}(1)\vert^2  N^{-3} R_N^{-1} }{N^{-6}R_N^{-2}+ (\phi_{j_0}^{+})^2} \lesssim \frac{\gamma R_N^{-1}}{\varepsilon(\gamma,\omega)^2} = o(1)\ \text{ and } \frac{\gamma}{N^3 R_N}\sum_{j \neq j_0} \frac{ \vert w_{j}^{+}(1)\vert^2}{N^{-6}R_N^{-2}+ (\phi_{j}^{+})^2} = o(1).
\end{align*}
Arguing as in the proof of Theo. \ref{theo: spectrum_magnetic field}, this leads to a contradiction.

\end{proof} 


 We expect our result, Theorem \ref{theo:spectrum for small omega}, on the spectral gap to hold for all $\omega \in [-1/4, 1/4[$. The restriction to smaller $\omega$'s is needed here in order to be able to characterise the spectrum of $A_{\omega}$ in terms of $A$. 
 
The technical problem, extending to the full subextremal range of $\omega$, when considering the perturbation as in \eqref{eq:matrix M rank1 perturb} is that the additional $P$ part creates an imaginary part in the coefficients of our Wigner equation making the argument in Step 5 in proof of Theo. \ref{theo:spectrum for small omega} to fail. By studying first $A_{\omega}$, we overcome this problem but we pay the price of reducing the range of $\omega$'s.  

In fact if one is only interested in an upper bound on the spectral gap, one can still obtain the $1/N^3$ scaling for all $\omega \in [-1/4, 1/4[$ by studying the perturbation problem $A + \begin{pmatrix}
    \Gamma &0 \\ P&0
\end{pmatrix}$ and following the same machinery as above.

\begin{appendix}

\section{Integral estimates}  
We recall the following very simple fact.

\begin{lemm}[Riemann sum]
\label{lemm:Riemann}
Let $f$ be a Riemann integrable strictly monotonically increasing function, then 
\[ \int_{(k_1-1)/N}^{1-k_2/N} f(t) \ dt \le \frac{1}{N} \sum_{i=k_1}^{N-k_2} f(i/N) \le \int_{k_1/N}^{1-(k_2-1)/N} f(t) \ dt. \]
\end{lemm}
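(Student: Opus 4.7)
The plan is to establish the two inequalities by the standard comparison of a Riemann sum of a monotone function with the corresponding upper and lower rectangular sums, applied interval by interval, and then summed.

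First I would fix $i \in \{k_1, k_1+1, \dots, N - k_2\}$ and exploit monotonicity on two adjacent subintervals of length $1/N$. Since $f$ is increasing, on the interval $[(i-1)/N, i/N]$ we have $f(t) \le f(i/N)$ for every $t$, which gives
\[
\int_{(i-1)/N}^{i/N} f(t)\, dt \;\le\; \frac{1}{N} f(i/N).
\]
Similarly, on $[i/N, (i+1)/N]$ we have $f(i/N) \le f(t)$, and therefore
\[
\frac{1}{N} f(i/N) \;\le\; \int_{i/N}^{(i+1)/N} f(t)\, dt.
\]

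Next I would sum these two pointwise inequalities over $i$ from $k_1$ to $N-k_2$. The integrals telescope into single integrals over consecutive intervals: for the lower bound the intervals $[(i-1)/N, i/N]$ concatenate into $[(k_1-1)/N, (N-k_2)/N] = [(k_1-1)/N, 1 - k_2/N]$, while for the upper bound the intervals $[i/N, (i+1)/N]$ concatenate into $[k_1/N, (N-k_2+1)/N] = [k_1/N, 1-(k_2-1)/N]$. This yields exactly the inequality in the statement.

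There is no serious obstacle here; the only mild subtlety is to align the index bookkeeping so that the endpoints match the formulation in the lemma, which is handled by the shift between $[(i-1)/N, i/N]$ and $[i/N, (i+1)/N]$ in the two one-sided estimates.
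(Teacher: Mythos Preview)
Your proof is correct and follows essentially the same approach as the paper's own proof: both compare $f(i/N)$ with the integral of $f$ over the adjacent subintervals $[(i-1)/N,i/N]$ and $[i/N,(i+1)/N]$ using monotonicity, then sum and concatenate. The paper only writes out the upper bound explicitly and says ``analogously for the lower bound,'' whereas you spell out both directions, but the argument is the same.
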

\begin{proof}
By monotonicity 
\[ f(i/N) = N \int_{i/N}^{(i+1)/N} f(i/N) \ dt  \le N \int_{i/N}^{(i+1)/N} f(t) \ dt.\]
Hence, 
\[ \frac{1}{N} \sum_{i=k_1}^{N-k_2} f(i/N) \le \sum_{i=k_1}^{N-k_2} \int_{i/N}^{(i+1)/N} f(t) \ dt = \int_{k_1/N}^{1-(k_2-1)/N} f(t) \ dt\]
and analogously for the lower bound.
\end{proof}
\end{appendix}
 \smallsection{Acknowledgements} The authors are grateful to Laure Saint-Raymond for bringing this problem to our attention and to Stefano Olla for useful references. AM acknowledges support from a fellowship at IHES and would like to thank the Max Planck Institute for Mathematics in the Sciences, Leipzig, for support and hospitality where part of this work was undertaken. 

\bibliographystyle{alpha}
\bibliography{bibliography}

\end{document}